\title{On the Existence of Fair Allocations for Goods and Chores under Dissimilar Preferences}
\author{Egor Gagushin\thanks{Purdue University. \texttt{eggag03@gmail.com}} 
        \and Marios Mertzanidis\thanks{Purdue University. \texttt{mmertzan@purdue.edu}} 
        \and Alexandros Psomas\thanks{Purdue University. \texttt{apsomas@purdue.edu}}}
\date{}
\newcommand{\items}{\mathcal{M}}
\newcommand{\agents}{\mathcal{N}}
\newcommand{\normv}{\tilde{v}^{(1)}}
\newcommand{\normc}{\tilde{c}^{(1)}}
\newcommand{\normtwov}{\tilde{v}^{(2)}}
\newcommand{\normtwou}{\tilde{u}^{(2)}}
\newcommand{\gap}{\textsc{Gap}}
\newcommand{\cgap}{\textsc{CGap}}
\newcommand{\trading}{\textsc{TP}}
\newcommand{\invtrading}{\textsc{ITP}}
\newcommand{\mech}{\textsc{RN}}
\newcommand{\choremech}{\textsc{LRN}}
\begin{document}

\maketitle

\thispagestyle{empty}
\addtocounter{page}{-1}

\begin{abstract}

We study the fundamental problem of fairly allocating a multiset $\items$ of $t$ types of indivisible items among $d$ groups of agents, where all agents within a group have identical additive valuations. Gorantla et al.~\cite{gorantla2023fair} showed that for every such instance, there exists a finite number $\mu$ such that, if each item type appears at least $\mu$ times, an envy-free allocation exists. Their proof is non-constructive and only provides explicit upper bounds on $\mu$ for the cases of two groups ($d=2$) or two item types ($t=2$).

In this work, we resolve one of the main open questions posed by Gorantla et al.~\cite{gorantla2023fair} by deriving explicit upper bounds on $\mu$ that hold for arbitrary numbers of groups and item types. We introduce a significantly simpler, yet powerful technique that not only yields constructive guarantees for indivisible goods but also extends naturally to chores and continuous domains, leading to new results in related fair division settings such as cake cutting.
\end{abstract}

\tableofcontents
\newpage

\section{Introduction}

Fairly allocating resources is a central problem in computer science and economics. What ``fairness'' means, however, admits multiple formal definitions. A classic notion is \emph{proportionality}: an allocation among $n$ agents is proportional if each agent receives at least a $1/n$ fraction of its total value for all items.  Arguably the gold standard is \emph{envy-freeness}, which asks that no agent prefers another agent's bundle to their own.

Envy-free allocations are not guaranteed to exist in the worst case; consider, for example, a single indivisible item and two agents with a positive value for it. Motivated by such inherent limitations, a substantial body of work tries to identify conditions and settings under which fair allocations become possible. This line of work has culminated in numerous relaxations of envy-freeness, including \emph{envy-freeness up to one item} (EF1) and \emph{envy-freeness up to any item} (EFX); the former is known to exist quite generally, while the latter remains elusive.
In stochastic settings envy-free allocations can emerge with high probability; e.g., when agents have additive preferences, drawn i.i.d.\ from a distribution with a bounded probability density function, Manurangsi and Suksompong~\cite{manurangsi2021closing} show that, for the case of goods, an envy-free allocation exists with high probability when the number of items satisfies $m \in \Omega(n \log n / \log \log n)$.

A complementary, structural route asks whether \emph{multiplicity alone} can guarantee fairness. In this direction, Gorantla et al.~\cite{gorantla2023fair} investigate whether envy-free allocations exist when items are available in multiple identical copies.
Formally, they consider the problem of fairly allocating a multiset $\items$ consisting of $t$ types of indivisible items among $d$ groups of agents, where all agents within a group have identical additive valuations.~\cite{gorantla2023fair} show that, for every such instance, there exists a finite number $\mu$ such that, if each item type has at least $\mu$ copies, then an envy-free allocation exists. Their proof relies on elegant but intricate arguments demonstrating that a certain linear program has a sufficiently large feasibility space to guarantee the existence of an integral feasible solution.

However, this approach has some key limitations: (i) it does not yield an efficient algorithm for finding such allocations when they are guaranteed to exist, (ii) it is too specialized to easily yield results for other fair-division settings, and, more importantly, (iii) it does not provide explicit bounds on $\mu$, as the argument is purely existential by nature. Gorantla et al.~\cite{gorantla2023fair} derive upper bounds on $\mu$ only for the special cases where either the number of item types is two ($t = 2$) or the number of agent groups is two ($d = 2$). 

In this paper, we resolve one of the main questions left open by~\cite{gorantla2023fair} by providing explicit bounds on $\mu$ for arbitrary instances and developing a technique that addresses the aforementioned limitations.

\subsection{Our Contributions}

We provide the first \emph{explicit} upper bounds on $\mu$, the number of copies required per item type to guarantee the existence of an envy-free allocation. Specifically, we prove that  
$\mu \in O\!\left( \tfrac{t n^3}{g \delta^2} \right)$, 
where $g$ denotes the greatest common divisor of the agent group sizes, and $\delta$ quantifies the degree of dissimilarity between the valuations of different groups. 
For comparison, Gorantla et al.~\cite{gorantla2023fair} established that, when $d = 2$, one has 
$\mu \in O\!\left( \tfrac{n^2 \sqrt{t}}{g \delta} \right)$ 
and $\mu \in \Omega\!\left( \tfrac{n^2}{g \delta} \right)$, 
while for the case of $t = 2$ they obtained 
$\mu \in O\!\left( \tfrac{n^2}{g \delta} \right)$. 
Interestingly, in the special case where each group consists of a single agent, our bound simplifies to 
$\mu \in O\!\left( \tfrac{n^3}{\delta^2} \right)$, 
showing that, unlike in~\cite{gorantla2023fair}, our result is independent of the number of item types $t$.

At the heart of our analysis is a dissimilarity condition that implies the existence of envy-free allocations. Our condition formalizes a natural intuition: \emph{fairness should be possible if agents' preferences are sufficiently dissimilar}. We quantify this dissimilarity between agents’ valuations using well-established measures of distance, including the Euclidean distance and various $f$-divergences such as the Kullback–Leibler (KL) divergence and the $\chi^2$-divergence.

Our technique is simple and naturally extends across models, yielding results in a broad range of fair-division settings.

\paragraph{Chores and envy-freeness.} We extend our analysis to the setting where items represent \emph{chores}: indivisible tasks that impose costs rather than provide value to agents. By following the same framework developed for the case of goods, we derive explicit upper bounds on $\mu$ that guarantee the existence of envy-free allocations in this setting as well.

\paragraph{Cake Cutting.} We consider the classic envy-free cake cutting problem, in the query model of Robertson and Webb. It is known that there exists a protocol for finding an envy-free allocation that uses at most $O(n^{n^{n^{n^{n^{n}}}}})$ queries~\cite{aziz2016discrete} (the best known lower bound in $\Omega(n^2)$~\cite{procaccia2009thou}). If agents' valuation functions can be represented as polynomials of degree at most $d$, Br{\^a}nzei~\cite{branzei2015note} provides a protocol for finding an envy-free allocation using $O(n^2 \,d)$ queries. 
As an application of our condition for the existence of envy-free allocations, we prove that if (i) valuation functions are $k$-Lipschitz continuous for some constant $k$, and (ii) the $\ell_2$ distance between the valuations of any pair of agents is at least a constant, then we can find a strongly envy-free allocation using $O(n \sqrt{n})$ queries.

\paragraph{Proportionality for Goods.} Using the same technical framework, we also derive conditions under which proportional allocations exist for both goods and chores. For proportionality, the intuitive interpretation of our condition is that \emph{each agent must be sufficiently far from society’s aggregate valuation}. In the case of goods, society’s valuation for an item is defined as the average of the agents’ $\ell_1$-normalized valuations, and an agent’s distance from society is measured via the $\chi^2$-divergence between her $\ell_1$-normalized valuation vector and society’s valuation vector (see~\Cref{theorem: prop condition} for the formal statement). In the case of chores, society’s cost for an item is given by the harmonic mean of the agents’ $\ell_1$-normalized costs. 

Leveraging the variational characterization of the $\chi^2$-divergence, we show that in a stochastic setting, where each value (or cost) is independently drawn from $U[0,1]$, with high probability all agents are sufficiently far from society when $m \in \Omega(n)$. By combining these results, we recover the state-of-the-art existence guarantees (up to constant factors) for proportional allocations in the stochastic setting.

\subsubsection*{Our Techniques}

The proofs of all aforementioned results follow the same high-level structure.  Here, we outline the main arguments for the case of envy-freeness and goods.

First, we identify fractional mechanisms that admit closed-form guarantees on the envy gap: the worst-case margin by which a group values its own bundle over another's. We introduce a new mechanism, the \textsc{Relative Norm} mechanism (and its counterpart, the \textsc{Log Relative Norm} mechanism, for chores), and show that under this mechanism, the envy gap between agents belonging to different groups is lower bounded by a simple, interpretable quantity that depends on a suitable measure of dissimilarity between the groups.

Second, we formulate a linear program that computes a \emph{fractional} allocation maximizing the minimum envy gap. By definition, the envy gap achieved by the optimal fractional solution is at least as large as that guaranteed by our \textsc{Relative Norm} mechanism. We then develop a rounding procedure that transforms this fractional solution into an integral one while ensuring that (i) each agent receives an integral bundle of items, (ii) all agents within the same group receive identical bundles, and (iii) only a small number of fractional items need to be transferred between groups.

A central technical challenge arises from the requirement that all agents in the same group must receive identical bundles \cite{gorantla2023fair}. This constraint complicates the redistribution of fractional items when the total number of copies to be reallocated cannot be expressed as an integer linear combination of group sizes. For example, suppose we need to reassign $11$ copies of an item among two groups of sizes $n_1 = 5$ and $n_2 = 7$. In this case, it is impossible to distribute the items while maintaining identical allocations within each group. To overcome this difficulty, we leverage a result of Brauer~\cite{brauer1942Partitions} for the Frobenius coin problem.

Combining the above observations, we show that if the total value of items transferred from one group of agents to another is smaller than the original envy gap, then the resulting integral allocation remains envy-free. The total transferred value can be upper bounded by the product of (i) the total fraction of items exchanged among groups and (ii) the maximum value assigned to any single item. Consequently, we derive a sufficient condition ensuring envy-freeness of the rounded allocation: if the maximum item value is smaller than a threshold that depends on the pairwise distances between agents’ valuation vectors, then the final integral allocation is guaranteed to be envy-free.

Finally, we show that as the number of item copies increases, the normalized value of each individual item decreases, while the pairwise distances between agents’ valuation vectors remain essentially unchanged. Combining these ingredients, we obtain an explicit bound on the number of copies required to ensure that our condition on the maximum item value is satisfied, guaranteeing the existence of an envy-free allocation.

\paragraph{Future Directions.} Beyond the immediate consequences of our work, we believe that our framework provides a promising new lens for understanding the tractability of fairness notions for indivisible items, such as the maximin share guarantee (MMS) and envy-freeness up to any item (EFX). One possibly useful interpretation of our results is that they identify structural features of instances, namely dissimilarity among the agents, that determine when these problems are easy.\footnote{Every proportional allocation is MMS, and every envy-free allocation is EFX, so our results imply the existence of MMS/EFX as well.}
Using this perspective, our results complement well-known tractable cases for MMS and EFX, such as when agents have identical valuations. 

Motivated by this observation, we also explore the problem of allocating indivisible goods to agents that are similar, but not identical. We measure similarity using an $f$-divergence between normalized valuations. As a proof of concept, in~\Cref{thm: tefx} we prove that when the TV distance between the $\ell_1$-normalized valuations of any pair of agents is upper bounded, then a relaxation of EFX called transfer-EFX (tEFX) can be guaranteed for any number of agents.

\subsection{Related Work}

The literature on fair allocations of indivisible goods is vast. Beyond the work of~\cite{gorantla2023fair} that we already discussed, here we briefly mention some other works that are closely related to our paper. We refer the reader to Amanatidis et al.~\cite{amanatidis2023fair} for a recent survey on the allocation of indivisible goods (that also briefly covers some of the results about chores). See Guo et al.~\cite{guo2023survey} for a survey about chores. We also refer the reader to Barnabel~\cite{barbanel2005geometry} for measure-theoretic conditions under which strongly proportional and strongly envy-free allocations exist for goods and chores in a general (cake-cutting) setting.

\paragraph{Relaxations of Proportionality and Envy-Freeness.}
Faced with the non-existence of proportional and envy-free allocations for indivisible items, a natural approach is to consider relaxed fairness notions.
A well-studied relaxation of proportionality is maximin share (MMS) fairness~\cite{budish2011combinatorial}. An allocation is MMS if each agent prefers her allocation to the allocation she would have received if she were to split the items into $n$ bundles, and then receives the worst one. Unfortunately, MMS allocations might not exist in the worst case, for goods~\cite{kurokawa2018fair} or chores~\cite{aziz2017algorithms}. Two well-studied relaxations of envy-freeness are envy-freeness up to one item (EF1)~\cite{lipton2004approximately} and envy-freeness up to any item (EFX)~\cite{caragiannis2019unreasonable}. EF1 allocations always exist, even for agents with monotone valuation functions~\cite{lipton2004approximately}, while EFX allocations are only known to exist for $n \leq 3$ additive agents~\cite{chaudhury2024efx,akrami2025efx}; the case of $n \geq 4$ agents is a major open problem in fair division.

For general additive valuations, a long series of works in fair division strives to pin down the exact constant for which $\alpha$-MMS allocations exist (for goods, we know that $\alpha \leq \frac{39}{40}$~\cite{feige2021tight} and $\alpha \geq \frac{10}{13}$~\cite{heidari2025improved}; for chores, we know that $\alpha \leq 13/11$~\cite{huang2023reduction} and $\alpha \geq 44/43$~\cite{feige2021tight}), as well as attack the EFX question from different angles, e.g., approximations (for goods, $0.618$-EFX allocations are known to exist~\cite{amanatidis2020multiple}, and this approximation can be improved in certain cases~\cite{amanatidis2024pushing}; for chores, $4$-EFX is known~\cite{garg2025constant}), or partial allocations (for goods, EFX allocations exist, with at most $n-2$ unallocated items~\cite{mahara2024extension}).

Closer to our work, another major line of work tries to attack the MMS and EFX problems by restricting agents' valuations. For the case of goods, it is known that EFX allocations exist for three types of agents~\cite{hv2025efx} (this implies existence for identical agents), graphs~\cite{Christodoulou2023}, lexicographic preferences~\cite{Hosseini2021}, bi-valued and dichotomous instances~\cite{Babaioff2021,amanatidis2021maximum,garg2023computing,halpern2020fair}, and two types of items~\cite{gorantla2023fair}. 
For the case of chores, EFX allocations are known to exist under leveled preferences~\cite{Gafni2023}, two types of chores~\cite{Aziz2023}, and binary valuations~\cite{tao2025}.
Hummel~\cite{Hummel2023} shows that, for any integer $c > 0$, there exists a number $n_c$, such that an MMS allocation exists for any instance with $n \geq n_c$ agents and at most $n+c$ items, where $n_c \leq \lfloor 0.6597^c \cdot c! \rfloor$ for allocations of goods and $n_c \leq \lfloor 0.7838^c \cdot c! \rfloor$ for allocations of chores. Barman et al.~\cite{Barman2023} prove that MMS allocations for chores exist for binary supermodular cost functions.

\paragraph{Fair allocations in stochastic settings.}

Dickerson et al.~\cite{dickerson2014computational} initiated the study of fair division under stochastic preferences, and proved that maximizing utilitarian welfare produces an envy-free allocation with high probability when the number of goods $m \in \Omega(n \log n)$ and items values are drawn i.i.d. from a fixed distribution with bounded PDF. Interestingly,~\cite{dickerson2014computational} implicitly use the existence of a strongly envy-free fractional allocation (the expected allocation of an agent) to get their result; in an online setting, Benad{\`e} et al.~\cite{benade2024fair} also use a similar approach.

Manurangsi and Suksompong~\cite{manurangsi2021closing} improve the result of~\cite{dickerson2014computational} and show that an envy-free allocation exists with high probability when $m \in \Omega(n \log n/ \log\log n)$. The same work shows that a proportional allocation of goods exists with high probability for $m \geq n$.
Manurangsi and Suksompong~\cite{manurangsi2025asymptotic} study chore allocation in the i.i.d. setting, and prove that, for PDF bounded distributions, envy-free allocations exist with high probability if $m \geq 2n$, and proportional allocations exist with high probability if $m \in \omega(1)$.
Bai and G{\"o}lz~\cite{baienvy} study goods and independent but non-identically distributed additive agents.
Benad{\`e} et al.~\cite{benade2024existence} study envy-free allocations of goods for agents with non-additive stochastic valuations. Beyond envy-freeness and proportionality in stochastic settings, Kurokawa et al.~\cite{kurokawa2016can}, Amanatidis et al.~\cite{amanatidis2017approximation}, and Farhadi et al.~\cite{farhadi2019fair} study the existence of MMS allocations (for goods). We note that all aforementioned works treat the distributions as constants (or are for fixed distributions; e.g., the results of Amanatidis et al.~\cite{amanatidis2017approximation} are for the $U[0,1]$ distribution). Halpern et al.~\cite{halpern2025online} show that envy-free allocations with high probability exist even without this assumption, and even in an online setting.

Finally, for the case of goods, Bai et al.~\cite{bai2022fair} study a smoothed model, where worst-case instances are randomly perturbed, and show that envy-free allocations exist with high probability.

\section{Preliminaries}\label{sec:prelims}

There is a set $\agents$ of $n$ agents. Each agent $i$ belongs to one of $d$ possible groups: group $i$ has $n_i$ agents that have identical, additive preferences. Without loss of generality, we assume that $n_1 \le n_2 \le \cdots \le n_d$. We have that $\sum_{i=1}^d n_i = n$.

Our arguments hinge on the ability to express large integers as nonnegative combinations of the group sizes. Towards this, we rely on the following theorem, due to Brauer~\cite{brauer1942Partitions}, and the ensuing corollary.

\begin{theorem}[\cite{brauer1942Partitions}]
    Let $n_1 \le n_2 \le \cdots \le n_d$, be positive integers such that $\gcd(n_1, \dots, n_d) = 1$. Then, any integer $k \ge (n_1-1)(n_d-1)$ can be written as $k = \sum_{i =1}^d x_i \cdot n_i$, where $x_i\ge0$ is an integer for all $i \in [d]$.  
\end{theorem}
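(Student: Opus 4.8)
The plan is to prove the statement via the standard numerical-semigroup viewpoint, organizing everything around residues modulo the smallest part $n_1$. Write $S = \{\sum_{i=1}^d x_i n_i : x_i \in \mathbb{Z}_{\ge 0}\}$ for the set of representable integers. The core reduction is this: to show that every $k \ge (n_1-1)(n_d-1)$ lies in $S$, it suffices to find, for each residue class $r \bmod n_1$, a representable value $v \equiv r \pmod{n_1}$ that uses \emph{only} the parts $n_2, \dots, n_d$ and is not too large; the remaining amount $k - v$ is then filled in with copies of $n_1$ alone. The degenerate case $n_1 = 1$ is immediate, since then every $k \ge 0$ equals $k \cdot n_1$, so I would assume $n_1 \ge 2$.

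First I would establish the key reachability lemma: because $\gcd(n_1, \dots, n_d) = 1$, the residues $n_2 \bmod n_1, \dots, n_d \bmod n_1$ generate the additive group $\mathbb{Z}/n_1\mathbb{Z}$, and every residue class can be reached using at most $n_1 - 1$ of these generators (with repetition). To see this, let $A_k \subseteq \mathbb{Z}/n_1\mathbb{Z}$ be the set of residues expressible as $\sum_{i \ge 2} x_i n_i \bmod n_1$ with $\sum_{i \ge 2} x_i \le k$. Then $A_0 = \{0\}$ and $A_k \subseteq A_{k+1}$. Whenever $A_{k+1} = A_k$, the set $A_k$ is invariant under translation by every generator; the stabilizer of $A_k$ is then a subgroup containing all generators, hence equal to $\mathbb{Z}/n_1\mathbb{Z}$, which together with $0 \in A_k$ forces $A_k = \mathbb{Z}/n_1\mathbb{Z}$. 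Therefore the chain $|A_0| < |A_1| < \cdots$ strictly increases until it saturates, giving $A_{n_1 - 1} = \mathbb{Z}/n_1\mathbb{Z}$.

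Next I would assemble the representation. Fix $k \ge (n_1-1)(n_d-1)$ and set $r = k \bmod n_1$. By the lemma, choose $x_2, \dots, x_d \ge 0$ with $\sum_{i \ge 2} x_i \le n_1 - 1$ and $v := \sum_{i \ge 2} x_i n_i \equiv k \pmod{n_1}$. Since the total count is at most $n_1 - 1$ and each part is at most $n_d$, we obtain $v \le (n_1 - 1)n_d$. Now $k - v$ is divisible by $n_1$, and
\[
k - v \ge (n_1-1)(n_d-1) - (n_1-1)n_d = -(n_1-1) > -n_1,
\]
so $k - v$ is a nonnegative multiple of $n_1$. Setting $x_1 = (k-v)/n_1 \ge 0$ then yields $k = \sum_{i=1}^d x_i n_i$ with all $x_i \ge 0$, as required.

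The substantive step is the reachability lemma; everything else is bookkeeping. The only genuine subtlety there is the group-theoretic fact that a subset of a finite abelian group containing $0$ and invariant under translation by a generating set must be the whole group — which is exactly where the hypothesis $\gcd(n_1, \dots, n_d) = 1$ enters, ensuring the generators span $\mathbb{Z}/n_1\mathbb{Z}$ rather than a proper subgroup. I expect the main obstacle to be stating this saturation argument cleanly, together with verifying that the bound $v \le (n_1-1)n_d$ combined with $k \ge (n_1-1)(n_d-1)$ leaves a genuinely nonnegative remainder: the margin is exactly one multiple of $n_1$, so the congruence $k \equiv v \pmod{n_1}$ is what rules out the borderline negative case.
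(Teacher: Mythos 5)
Your proof is correct. Note that the paper itself gives no proof of this statement --- it is imported verbatim from Brauer's 1942 paper as a black box --- so there is nothing in the source to compare your argument against. What you have written is the standard residue-class (Ap\'ery-set style) argument: the saturation chain $A_0 \subsetneq A_1 \subsetneq \cdots$ correctly establishes that every residue modulo $n_1$ is hit by a nonnegative combination of $n_2,\dots,n_d$ using at most $n_1-1$ summands, the bound $v \le (n_1-1)n_d$ follows, and the final step $k - v > -n_1$ together with $n_1 \mid (k-v)$ does force $k - v \ge 0$. The one place a reader might pause --- passing from $A_k + g \subseteq A_k$ to $A_k + g = A_k$ --- is justified by finiteness of $\mathbb{Z}/n_1\mathbb{Z}$, as you implicitly use; making that one line explicit would make the write-up airtight.
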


\begin{corollary} \label{corollary: frobenius coin problem}
    Let $n_1 \le n_2 \le \cdots \le n_d$, be positive integers such that $\gcd(n_1, \dots, n_d) = g$. Then, any integer $k \ge g \cdot (n_1/g-1) \cdot (n_d/g-1)$, such that $k \equiv 0 \pmod{g}$, can be written as $k = \sum_{i =1}^d x_i \cdot n_i$, where $x_i\ge0$ is an integer for all $i \in [d]$. 
\end{corollary}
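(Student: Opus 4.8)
The plan is to deduce the statement directly from Brauer's theorem (Theorem~1 above) via a simple rescaling argument. The key observation is that dividing all the group sizes by their greatest common divisor reduces the general $\gcd = g$ case to the coprime case that Theorem~1 already resolves, while the divisibility hypothesis $k \equiv 0 \pmod{g}$ is precisely what is needed for the target integer to land inside the image of the rescaled problem.

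Concretely, first I would set $n_i' := n_i / g$ for each $i \in [d]$. Since $g = \gcd(n_1, \dots, n_d)$, every $n_i'$ is a positive integer, the ordering $n_1' \le \cdots \le n_d'$ is preserved, and crucially $\gcd(n_1', \dots, n_d') = 1$. Next, using the hypothesis $k \equiv 0 \pmod{g}$, I would write $k = g k'$ for a nonnegative integer $k' = k/g$. The threshold assumption $k \ge g \cdot (n_1/g - 1)(n_d/g - 1)$ then translates, after dividing by $g$, into $k' \ge (n_1' - 1)(n_d' - 1)$, which is exactly the bound required by Theorem~1 for the coprime tuple $(n_1', \dots, n_d')$.

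Applying Theorem~1 to $(n_1', \dots, n_d')$ yields nonnegative integers $x_1, \dots, x_d$ with $k' = \sum_{i=1}^d x_i n_i'$. Multiplying through by $g$ and using $g n_i' = n_i$ gives $k = g k' = \sum_{i=1}^d x_i n_i$, with the same nonnegative integer coefficients $x_i$, which is precisely the desired decomposition.

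Since the argument is a direct reduction, there is no substantive obstacle to anticipate; the only points requiring care are verifying that the rescaled tuple is genuinely coprime (so that Theorem~1 applies) and checking that the stated threshold and the congruence condition align so that $k/g$ is an integer lying above Brauer's bound. Both are immediate once the substitution $n_i = g n_i'$ is made.
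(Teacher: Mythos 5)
Your proof is correct and is precisely the standard rescaling reduction that the paper implicitly intends (the corollary is stated without proof there): divide by $g$ to reach the coprime case, apply Brauer's theorem, and multiply back. No issues.
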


For the remainder of this paper, we will use $g$ for the greatest common divisor of $n_1, \dots, n_d$, and $\theta$ for the threshold from Corollary~\ref{corollary: frobenius coin problem}.

\begin{definition}[gcd $g$ and threshold $\theta$]\label{dfn: theta}
Let  $g \;\coloneqq\; \gcd(n_1,\dots,n_d)$, and $
\theta \;\coloneqq\; g\Big(\tfrac{n_1}{g}-1\Big)\Big(\tfrac{n_d}{g}-1\Big).$
\end{definition}

There is also a multiset $\items$ of $m$ items. There are $t$ types of items, with $k_j \geq 1$ items of type $j$. We have that $\sum_{j=1}^t k_j = m$. We denote by $\kb = [k_1, \dots, k_t]^T$ the vector of the number of copies for each item type.  A practical motivation for this problem is the fair allocation of food donations~\cite{aleksandrov2015online,fooddrop,lee2019webuildai}, where there are many types of items, corresponding to different brands, packages, and so on, and for each type, there are many identical copies.

The items can be either \emph{all goods} or \emph{all chores}. For the case of goods, each group of agents $i \in [d]$ has a vector of values $v_{i}(\kb) \in \R^{m}$ where for each $j \in \items$,  $v_{i,j}(\kb)\ge 0$ denotes the value that each agent in group $i$ has for item $j$. The first $k_1$ entries of $v_{i}(\kb)$ are all equal to the value of an agent in group $i$ for the first type of items, the next $k_2$ entries of $v_{i}(\kb)$ are all equal to the value for the second type of items, and so on. The value of an agent in group $i$ for a subset of items $S \subseteq \items$ is $v_i(\kb,S) = \sum_{j \in S} v_{i,j}(\kb)$. For the case where $\kb = \one$ (i.e., all types of items have one copy), we drop the parentheses and let $v_{i}(\one) = v_{i}$. 

For the case of chores, each group of agents $i \in [d]$ has a vector of costs $c_{i}(\kb) \in \R^{m}$ where for each $j \in \items$,  $c_{i,j}(\kb)> 0$ denotes the cost that each agent in group $i$ has for receiving item $j$. The first $k_1$ entries of $c_{i}(\kb)$ are all equal to the cost for the first type of items, the next $k_2$ entries of $c_{i}(\kb)$ are all equal to the cost for the second type of items, and so on. The cost of an agent in group $i$ for a subset of items $S \subseteq \items$ is $c_i(\kb,S) = \sum_{j \in S} c_{i,j}(\kb)$. Similarly we let $c_i(\one) = c_i$.

Throughout the paper, it will be convenient to work with normalized preferences. For ease of notation, we use $\normv_i(\kb) \in [0,1]^m$ to refer to the $\ell_1$-\emph{normalized} valuation of an agent in group $i$, that is, $\normv_{i,j}(\kb) = \frac{v_{i,j}(\kb)}{v_i(\kb, \items)} = \frac{v_{i,j}(\kb)}{\norm{v_i(\kb)}_1}$ for $j \in \items$. We use notation $\normtwov_i(\kb)$ to refer to the $\ell_2$-\emph{normalized} valuation of an agent in group $i$, that is, $\normtwov_{i,j}(\kb) = \frac{v_{i,j}(\kb)}{\sqrt{\sum_{j \in \items} (v_{i,j}(\kb))^2}} = \frac{v_{i,j}(\kb)}{\norm{v_i(\kb)}_2}$ for $j \in \items$.
Similarly, $\normc_i(\kb) \in [0,1]^m$ is defined as $\normc_{i,j}(\kb) = \frac{c_{i,j}(\kb)}{c_i(\kb, \items)} = \frac{c_{i,j}(\kb)}{\norm{c_i(\kb)}_1}$, for all $j \in \items$; we will not use $\ell_2$-normalized costs (and therefore skip the corresponding notation).

We write $v_{min}(\kb)$ (resp. $c_{min}(\kb)$) and $v_{max}(\kb)$ (resp. $c_{max}(\kb)$) for the minimum and maximum value (resp. cost) of an item. That is, $v_{min}(\kb) = \min_{i \in [d], j \in \items} v_{i,j}(\kb)$, $v_{max}(\kb) = \max_{i \in [d], j \in \items} v_{i,j}(\kb)$, $c_{min}(\kb) = \min_{i \in [d], j \in \items} c_{i,j}(\kb)$, and $c_{max}(\kb) = \max_{i \in [d], j \in \items} c_{i,j}(\kb)$. The $\ell_1$-normalized and $\ell_2$-normalized minimum and maximum values/costs, $\normv_{max}(\kb)$, $\normv_{min}(\kb)$, $\normc_{max}(\kb)$ and so on, are defined analogously.

\paragraph{Allocations.}

A fractional allocation $x \in [0,1]^{d \, m}$ defines, for every agent group $i \in [d]$ and item $j \in \items$, the fraction $x_{i,j} \geq 0$ of the item $j$ that each agent in group $i$ receives. An allocation $x$ is feasible if, for all items $j \in \items$, $\sum_{i \in [d]} n_i \cdot x_{i,j} \leq 1$ and $x_{i,j} \ge 0$ for all $i \in [d]$, $j \in \items$. Throughout the paper, unless explicitly stated, all agents in the same group will receive the exact same bundle of items, and therefore (for the purposes of discussing envy-freeness) it suffices to refer to groups instead of individual agents.
We are interested in \emph{integral} allocations, i.e., each item can be allocated to at most one agent. For ease of notation, we use the notation $A$ for integral allocations and $x$ for fractional allocations. An integral allocation $A = (A_1, \ldots, A_d)$ is a partition of $\items$, where $A_i$ denotes the set of items allocated to each agent in group $i \in [d]$, and $A_{i,z} \in \mathbb{Z}_{\geq 0}$ indicates the number of copies each agent in group $i$ received from item type $z$. Throughout the paper, we only consider \emph{complete} allocations: an allocation is complete if every item gets allocated.

\paragraph{Fairness Notions}

An (integral or fractional) allocation of goods $x$ is \emph{proportional} if, for all groups $i \in [d]$, $v_i(\kb, x_i) \geq \frac{v_i(\kb, \items)}{n}$; if the inequality is strict, then the allocation is called strongly proportional.
An (integral or fractional) allocation of goods $x$ is \emph{envy-free} if, for all pairs of groups $i, i' \in [d]$, $v_i(\kb, x_i) \geq v_i(\kb, x_{i'})$; if the inequality is strict, then the allocation is called strongly envy-free.
Let $\gap_{i,i'}(x) = v_i(x_i) - v_i(x_{i'})$ be the advantage (or ``negative envy'') of an agent in group $i$ for an agent in group $i'$, in an allocation $x$ of goods.

An (integral or fractional) allocation of chores $x$ is \emph{proportional} if, for all agents $i \in \agents$, $c_i(x_i) \leq \frac{c_i(\items)}{n}$; if the inequality is strict, then the allocation is called strongly proportional.
An (integral or fractional) allocation of chores $x$ is \emph{envy-free} if, for all pairs of agents $i, i' \in \agents$, $c_i(x_i) \geq c_i(x_{i'})$; if the inequality is strict, then the allocation is called strongly envy-free.
For chores, we similarly define $\cgap_{i,i'}(x) = c_i(x_{i'}) - c_i(x_{i})$ as the advantage of an agent in group $i$ for an agent in group $i'$ in allocation $x$.

\paragraph{Statistical Distances}

An $f$-divergence $D_f( P || Q)$ measures the distance between two probability measures $P$ and $Q$. For the most part, in this paper, we will consider $f$-divergences for probability mass functions (i.e., discrete distributions), and specifically, we will be studying the $f$-divergence between $\ell_1$-normalized preference vectors.

Let $P = (p_1, \dots, p_z)$ and $Q = (q_1, \dots, q_z)$ be two probability mass functions ($p_i , q_i \geq 0$ for all $i$, and $\sum_{i=1}^z p_i = \sum_{i=1}^z q_i = 1$).
The $\chi^2$-divergence between $P$ and $Q$ is defined as $D_{\chi ^2}(P || Q) \coloneqq \sum_{i=1}^z \frac{(P_i - Q_i)^2}{Q_i}$. The KL-divergence between $P$ and $Q$ is defined as $D_{KL}(P || Q) \coloneqq \sum_{i=1}^z P_i \ln \left( \frac{P_i}{Q_i} \right)$. Finally, the Total variation (TV) distance between $P$ and $Q$ is defined as $D_{TV}(P || Q) \coloneqq \frac{1}{2} \sum_{i=1}^z |P_i - Q_i|$.

\section{Conditions Under Which Envy-Free Allocations Exist}\label{sec: envy freeness}

In this section, we study the existence of envy-free allocations. Missing proofs are deferred to~\Cref{app: missing from main EF section}.

 We start by defining a mechanism for constructing a good \emph{fractional} allocation: the \textsc{Relative Norm} mechanism, given valuations $v_1, \dots, v_d$, outputs an allocation $x^{\mech}$ such that
 \[
 x^{\mech}_{i,j} = \frac{\normtwov_{i,j}(\kb)}{n \, \normtwov_{max}(\kb)} + \frac{1}{n} - \frac{\sum_{i' \in [d]} n_{i'} \cdot  \normtwov_{i',j}(\kb)}{n^2 \cdot \normtwov_{max}(\kb)} 
 \]
 where $x^{\mech}_{i,j}$ is the fraction of item $j \in \items$ that each agent of group $i$ receives, $i \in [d]$. Recall that $\normtwov_{i,j}(\kb)$ is the $\ell_2$-normalized value of an agent in group $i$ for item $j$, and $\normtwov_{max}(\kb)$ is the $\ell_2$-normalized maximum value.

Although the above mechanism may initially appear complicated or arbitrary, it is in fact derived from simple and natural principles. Our goal is to design a mechanism in which the fraction of an item allocated to an agent increases with how much the agent values that item. Since our target property (i.e. envy-freeness) is scale-free, we must employ a normalized notion of value so that agents with uniformly higher valuations across all items do not gain an unfair advantage. So, the most natural idea is to let the allocation scale linearly with the agent’s normalized valuation, i.e., $x_{i,j} = a \cdot \normtwov_{i,j}(\kb) + b$ for some choice of $a, b$. The parameters $a$ and $b$ are chosen to ensure that all fractional allocations are nonnegative and that every item is fully allocated. Following this reasoning, we obtain our \textsc{Relative Norm} mechanism.

First, we argue that this mechanism is feasible.

\begin{lemma}\label{lemma: relative norm feasibility}
The \textsc{Relative Norm} mechanism is feasible. That is, (i) $x^{\mech}_{i,j} \ge 0$ for all $i \in [d]$ and $j \in \items$, and (ii) $\sum_{i \in \agents} n_i \cdot x^{\mech}_{i,j} = 1$  for all $j \in \items$.
\end{lemma}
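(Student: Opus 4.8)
The plan is to verify the two conditions by direct substitution, introducing the shorthands $N \coloneqq \normtwov_{max}(\kb)$ and $S_j \coloneqq \sum_{i' \in [d]} n_{i'}\,\normtwov_{i',j}(\kb)$ so that the algebra stays readable. With this notation the mechanism reads $x^{\mech}_{i,j} = \frac{\normtwov_{i,j}(\kb)}{nN} + \frac{1}{n} - \frac{S_j}{n^2 N}$, and both claims reduce to one-line manipulations.

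For condition (ii), I would plug the closed form into $\sum_{i\in[d]} n_i \cdot x^{\mech}_{i,j}$ and distribute the weights $n_i$ across the three terms. The middle term contributes $\frac1n \sum_i n_i = 1$ since $\sum_{i\in[d]} n_i = n$. The first term contributes $\frac{1}{nN}\sum_i n_i\,\normtwov_{i,j}(\kb) = \frac{S_j}{nN}$, while the third contributes $-\frac{S_j}{n^2 N}\sum_i n_i = -\frac{S_j}{nN}$. These last two cancel exactly, leaving $\sum_{i\in[d]} n_i\, x^{\mech}_{i,j} = 1$, which is precisely (ii).

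For condition (i), I would multiply by $n>0$ and rewrite $n\,x^{\mech}_{i,j} = \frac{\normtwov_{i,j}(\kb)}{N} + 1 - \frac{S_j}{nN}$. Since values are nonnegative we have $\normtwov_{i,j}(\kb)\ge 0$, so it suffices to show $\frac{S_j}{nN}\le 1$. This follows from the defining property of the maximum: each $\normtwov_{i',j}(\kb)\le \normtwov_{max}(\kb) = N$, hence $S_j = \sum_{i'\in[d]} n_{i'}\,\normtwov_{i',j}(\kb) \le N\sum_{i'\in[d]} n_{i'} = nN$. Combining, $n\,x^{\mech}_{i,j} \ge \normtwov_{i,j}(\kb)/N \ge 0$, and dividing by $n$ gives (i).

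There is no genuine obstacle here; the only point requiring care is the nonnegativity step, where one must observe that the correction term $S_j/(nN)$ — a weighted average of the normalized values $\normtwov_{i',j}(\kb)$ divided by their maximum — can never exceed $1$, so it is always absorbed by the additive constant regardless of how small $\normtwov_{i,j}(\kb)$ happens to be. The two structural facts used are simply $\sum_{i\in[d]} n_i = n$ and the normalization convention $\normtwov_{i',j}(\kb)\le\normtwov_{max}(\kb)$.
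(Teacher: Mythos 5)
Your proof is correct and follows essentially the same route as the paper: nonnegativity via $\normtwov_{i',j}(\kb)\le \normtwov_{max}(\kb)$ and $\sum_i n_i = n$, and the sum-to-one condition via direct cancellation (the paper packages the cancellation as a vanishing antisymmetric double sum, but this is the same computation). No gaps.
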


\begin{proof}
    Towards proving that $x^{\mech}_{i,j} \ge 0$ we have:
    \begin{align*}
        x^{\mech}_{i,j} &=\frac{\normtwov_{i,j}(\kb)}{n \, \normtwov_{max}(\kb)} + \frac{1}{n}   - \frac{\sum_{i' \in [d]} n_{i'} \cdot  \normtwov_{i',j}(\kb)}{n^2 \cdot \normtwov_{max}(\kb)} \\
        &\ge \frac{1}{n} - \frac{1}{n^2} \sum_{i' \in [d]} n_{i'} \cdot  \frac{\normtwov_{i',j}(\kb)}{\normtwov_{max}(\kb)}\\
        &\ge \frac{1}{n} - \frac{1}{n^2} \sum_{i' \in [d]} n_{i'} \\
        &= \frac{1}{n} - \frac{1}{n} = 0
    \end{align*}

    Towards proving that $\sum_{i \in [d]} n_i \cdot  x^{\mech}_{i,j} = 1$ we have:
  
        \begin{align*}
        \sum_{i \in [d]} n_i \cdot x^{\mech}_{i,j} &=  \sum_{i \in [d]} n_i \cdot        
        \left( \frac{\normtwov_{i,j}(\kb)}{n \, \normtwov_{max}(\kb)} + \frac{1}{n} - \frac{\sum_{i' \in [d]} n_{i'} \cdot  \normtwov_{i',j}(\kb)}{n^2 \cdot \normtwov_{max}(\kb)} \right) \\
        &= \sum_{i \in [d]} n_i \cdot        
        \left( \frac{1}{n} + \frac{\sum_{i' \in [d]} n_{i'}\left(\normtwov_{i,j}(\kb) - \normtwov_{i',j}(\kb)\right) }{n^2 \cdot \normtwov_{max}(\kb)} \right) \\
        &= 1 +  \frac{1}{n^2 \cdot \normtwov_{max}(\kb)} \underbrace{\sum_{i \in [d]} \sum_{i' \in [d]}n_i \cdot n_{i'} \cdot \left(\normtwov_{i,j}(\kb) - \normtwov_{i',j}(\kb)\right)}_{0} \\
        &= 1.
    \end{align*}

\end{proof}

Let $\gap_{i,i'}(x) \coloneqq v_i(x_i) - v_i(x_{i'})$ denote the advantage (or, equivalently, negative envy) of an agent of group $i$ for an agent of group $i'$ under allocation $x$. The next lemma gives a lower bound on $\gap_{i,i'}(x^{\mech})$.

\begin{lemma}\label{lem: lower bound on gap}
    For all $i, i' \in [d]$, $\gap_{i,i'}(x^{\mech}) \ge \frac{\norm{v_i(\kb)}_2 }{2n \cdot \normtwov_{\max}(\kb)} \cdot \norm{\normtwov_i(\kb) - \normtwov_{i'}(\kb)}_2^2$.
\end{lemma}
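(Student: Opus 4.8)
My plan is to show that the claimed inequality in fact holds with \emph{equality}, via a direct computation. The key structural observation is that in the definition of $x^{\mech}_{i,j}$, only the \emph{first} summand depends on the group index $i$; both the additive term $\tfrac{1}{n}$ and the aggregate term $\tfrac{\sum_{i'\in[d]} n_{i'}\normtwov_{i',j}(\kb)}{n^2\normtwov_{\max}(\kb)}$ are identical across groups. Hence in the difference these common terms cancel, leaving
\[
x^{\mech}_{i,j} - x^{\mech}_{i',j} = \frac{\normtwov_{i,j}(\kb) - \normtwov_{i',j}(\kb)}{n\,\normtwov_{\max}(\kb)}.
\]

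Next I would expand the gap by additivity, $\gap_{i,i'}(x^{\mech}) = \sum_{j\in\items} v_{i,j}(\kb)\,\bigl(x^{\mech}_{i,j} - x^{\mech}_{i',j}\bigr)$, and substitute the relation between the raw and $\ell_2$-normalized values, namely $v_{i,j}(\kb) = \norm{v_i(\kb)}_2\,\normtwov_{i,j}(\kb)$. Factoring out the constant $\tfrac{\norm{v_i(\kb)}_2}{n\,\normtwov_{\max}(\kb)}$ reduces the gap to
\[
\gap_{i,i'}(x^{\mech}) = \frac{\norm{v_i(\kb)}_2}{n\,\normtwov_{\max}(\kb)} \sum_{j\in\items} \normtwov_{i,j}(\kb)\bigl(\normtwov_{i,j}(\kb) - \normtwov_{i',j}(\kb)\bigr) = \frac{\norm{v_i(\kb)}_2}{n\,\normtwov_{\max}(\kb)} \,\inner{\normtwov_i(\kb)}{\normtwov_i(\kb) - \normtwov_{i'}(\kb)}.
\]

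The final step is a polarization-type identity exploiting that $\normtwov_i(\kb)$ and $\normtwov_{i'}(\kb)$ are unit vectors in $\ell_2$. Since $\norm{\normtwov_i(\kb)}_2 = \norm{\normtwov_{i'}(\kb)}_2 = 1$, I would write $\inner{\normtwov_i(\kb)}{\normtwov_i(\kb) - \normtwov_{i'}(\kb)} = 1 - \inner{\normtwov_i(\kb)}{\normtwov_{i'}(\kb)}$ and recognize that $\norm{\normtwov_i(\kb) - \normtwov_{i'}(\kb)}_2^2 = 2 - 2\inner{\normtwov_i(\kb)}{\normtwov_{i'}(\kb)}$, so the inner product equals exactly $\tfrac{1}{2}\norm{\normtwov_i(\kb) - \normtwov_{i'}(\kb)}_2^2$. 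Plugging this in yields the bound as an equality, which in particular implies the stated inequality. I do not anticipate a genuine obstacle here: the only things to get right are the cancellation of the group-independent terms and the unit-norm identity, after which the result is immediate. The $\ge$ in the statement is presumably kept because later arguments only require a lower bound.
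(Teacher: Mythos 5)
Your proposal is correct and follows essentially the same route as the paper: the group-independent terms cancel in the difference $x^{\mech}_{i,j}-x^{\mech}_{i',j}$, the raw values are rewritten as $\norm{v_i(\kb)}_2\,\normtwov_{i,j}(\kb)$, and the unit-norm identity $\sum_j(\normtwov_{i,j}(\kb))^2=\sum_j(\normtwov_{i',j}(\kb))^2=1$ converts $\inner{\normtwov_i(\kb)}{\normtwov_i(\kb)-\normtwov_{i'}(\kb)}$ into $\tfrac{1}{2}\norm{\normtwov_i(\kb)-\normtwov_{i'}(\kb)}_2^2$, exactly as in the paper's derivation. The paper's chain is likewise all equalities, so your observation that the bound holds with equality matches its proof.
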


\begin{proof}
    \begin{align}
        \gap_{i,i'}(x^{\mech}) &= \sum_{j \in \items} v_{i,j}(\kb) (x^{\mech}_{i,j} - x^{\mech}_{i',j}) \notag \\
        &=  \norm{v_i(\kb)}_2 \sum_{j \in \items} \frac{v_{i,j}(\kb)}{\norm{v_i(\kb)}_2 } \left( \left(\frac{\normtwov_{i,j}(\kb)}{n \, \normtwov_{max}(\kb)} + \frac{1}{n} - \frac{\sum_{i'' \in [d]} n_{i''} \cdot  \normtwov_{i'',j}(\kb)}{n^2 \cdot \normtwov_{max}(\kb)} \right) \right. \notag\\    
        & \hspace{120pt}\left. -  \left(\frac{\normtwov_{i',j}(\kb)}{n \, \normtwov_{max}(\kb)} + \frac{1}{n} - \frac{\sum_{i'' \in [d]} n_{i''} \cdot  \normtwov_{i'',j}(\kb)}{n^2 \cdot \normtwov_{max}(\kb)} 
        \right)\right) \notag \\
        &= \norm{v_i(\kb)}_2  \sum_{j \in \items} \normtwov_{i,j}(\kb) \left( \frac{ \normtwov_{i,j}(\kb) - \normtwov_{i',j}(\kb)}{n \cdot \normtwov_{max}(\kb)} \right) \notag\\
        &= \frac{\norm{v_i(\kb)}_2}{n \cdot \normtwov_{\max}(\kb)} \sum_{j \in \items} \left( \left(\normtwov_{i,j}(\kb)\right)^2 - \normtwov_{i,j}(\kb)\cdot \normtwov_{i',j}(\kb) \right) \notag \\
        &= \frac{\norm{v_i(\kb)}_2}{n \cdot \normtwov_{\max}(\kb)} \sum_{j \in \items} \left( \frac{1}{2}(\normtwov_{i,j}(\kb))^2 + \frac{1}{2}(\normtwov_{i',j}(\kb))^2- \normtwov_{i,j}(\kb)\cdot \normtwov_{i',j}(\kb) \right) \notag \\ 
        &= \frac{\norm{v_i(\kb)}_2 \cdot \norm{\normtwov_i(\kb) - \normtwov_{i'}(\kb)}_2^2}{2n \cdot \normtwov_{\max}(\kb)}, \notag
    \end{align}
where in the penultimate equation we used that $\sum_{j \in \items} k_j(\normtwov_{i,j}(\kb))^2 = \sum_{j \in \items} k_j(\normtwov_{i',j}(\kb))^2 = 1$.
\end{proof}

Next, consider the following linear program, which finds the fractional allocation that maximizes the minimum $\gap_{i,i'}$, for $\ell_2$-normalized valuations. If an allocation $x$ has a non-negative objective value with respect to this LP, then $x$ is envy-free.

    \begin{equation}\label{LP_gap}
        \begin{array}{ll@{}ll}
        \text{maximize}  & \alpha &\\
        \text{subject to}& \alpha \le \sum_{j \in \items} \normtwov_{i,j}(\kb) \cdot (x_{i,j} - x_{i',j}) \hspace{5mm}& \forall i, i' \in [d], i' \neq i\\
        & \sum_{i=1}^{d} n_i \cdot x_{i,j} \leq 1  &\forall j \in \items\\
        & \alpha, x_{i,j} \geq 0 & \forall i \in [d], \forall j \in \items
        \end{array}
    \end{equation}

The next lemma argues about the existence of an integral allocation whose minimum $\gap$ is not too far from the optimal value of LP~\eqref{LP_gap}.

\begin{lemma}\label{lemma: rounding envy goods}
    If $k_z \ge \theta$ and $k_z \equiv 0 \pmod{g}$ for all $z \in [t]$ then there exists an integral allocation $A$ such that, for all $i \in [d]$, $\min_{ i' \in [d]} \gap_{i,i'}(A) \geq \norm{v_i(\kb)}_2 \cdot \alpha^* - 2(d(d-1) +t(\theta+n + n_d - d -1))\max_{j \in [t]} v_{i,j}(\kb)$, where $\alpha^*$ is the optimal value of LP~\eqref{LP_gap}.
\end{lemma}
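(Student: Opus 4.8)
The plan is to certify the target gap with an optimal solution of LP~\eqref{LP_gap} and then round it to an integral, within-group-identical, complete allocation while losing only a controlled amount in every pairwise gap. Writing $\gap_{i,i'}(x)=\norm{v_i(\kb)}_2\sum_{j}\normtwov_{i,j}(\kb)(x_{i,j}-x_{i',j})$, any feasible $x$ with objective value $\alpha$ satisfies $\gap_{i,i'}(x)\ge \norm{v_i(\kb)}_2\,\alpha$ for all $i\neq i'$; in particular an optimal $x^\ast$ gives $\gap_{i,i'}(x^\ast)\ge \norm{v_i(\kb)}_2\,\alpha^\ast$. Since copies of a common type are interchangeable and $\normtwov_{i,j}(\kb)$ depends only on the type, it suffices to track the per-agent fractional amount $y_{i,z}$ that each agent of group $i$ receives of type $z$, subject to $\sum_{i}n_i y_{i,z}\le k_z$. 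Distributing any unallocated fraction of a type uniformly across all $n$ agents leaves every $\gap_{i,i'}$ unchanged, so I may pass to the LP with equality capacities $\sum_i n_i y_{i,z}=k_z$ at no loss in $\alpha^\ast$, and take an optimal \emph{vertex} $y$ of this program.

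Working from a vertex is what will produce the $d(d-1)$ term: with $t$ capacity equalities and at most $d(d-1)$ tight gap constraints, the number of strictly positive variables $y_{i,z}$ is at most $t+d(d-1)$, so all but at most $d(d-1)$ types are allocated entirely to a single group. I then round each $y_{i,z}$ down to $\lfloor y_{i,z}\rfloor$. The resulting per-type deficit $r_z=k_z-\sum_i n_i\lfloor y_{i,z}\rfloor$ obeys $0\le r_z<n$ and, crucially, $r_z\equiv 0\pmod g$, because $g\mid k_z$ and $g\mid n_i$ for every $i$. For a type given to a single group $i$ the deficit is just $k_z\bmod n_i<n_i\le n_d$.

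I must now return these $r_z$ copies while keeping bundles identical within groups, i.e., find nonnegative integers $e_{i,z}$ with $\sum_i n_i e_{i,z}=r_z$ and set $A_{i,z}=\lfloor y_{i,z}\rfloor+e_{i,z}$. When $r_z\ge\theta$, \Cref{corollary: frobenius coin problem} supplies such a representation directly. When $r_z<\theta$, I first \emph{borrow} copies back from groups that already hold type-$z$ items, removing them in group-size increments (which keeps the deficit a multiple of $g$ and, since $k_z\ge\theta$, is always possible) until the inflated deficit lands in $[\theta,\theta+n_d)$, and only then invoke the corollary. Thus the number of copies added to or removed from any single group across all types is $O\!\left(t(\theta+n+n_d)\right)$.

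Finally I would convert displacement into gap loss: for every ordered pair $(i,i')$, with $v_{i,z}(\kb)$ the common value of a type-$z$ copy to group $i$,
\[
\gap_{i,i'}(A)-\gap_{i,i'}(x^\ast)=\sum_{z}v_{i,z}(\kb)\big[(A_{i,z}-y_{i,z})-(A_{i',z}-y_{i',z})\big]\ \ge\ -\max_{j\in[t]}v_{i,j}(\kb)\,(\Delta_i+\Delta_{i'}),
\]
where $\Delta_i=\sum_z|A_{i,z}-y_{i,z}|$. Bounding $\Delta_i$ by its rounding-down part (at most the support size, contributing the $d(d-1)$ and one-per-type terms) plus the copies added or borrowed in the redistribution (contributing the $\theta$, $n$, and $n_d$ terms), and combining with $\gap_{i,i'}(x^\ast)\ge\norm{v_i(\kb)}_2\alpha^\ast$, yields the claimed inequality. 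The main obstacle is precisely the identical-within-group constraint: every reallocation happens in multiples of the group sizes, so the post-rounding leftover — although guaranteed to be a multiple of $g$ — need not be a nonnegative combination of the $n_i$ once it falls below the Frobenius threshold $\theta$. Handling this sub-threshold regime through borrowing, while simultaneously keeping each per-group displacement $\Delta_i$ small enough to match the stated constant, is the delicate step, and it is exactly where Brauer's bound and the quantities $n,n_d,\theta$ enter the error term.
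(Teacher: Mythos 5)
Your proposal is correct and follows essentially the same route as the paper's proof: a vertex of LP~\eqref{LP_gap} gives at most $d(d-1)$ fractionally shared items, the fractional remainders of each type are pooled, the pool is inflated in group-size increments (preserving divisibility by $g$) until it reaches the Frobenius threshold $\theta$ so that \Cref{corollary: frobenius coin problem} applies, and the per-group displacement is converted into a gap loss of at most twice the pool size times the maximum item value. The only differences are bookkeeping --- you aggregate by type and round down in one step where the paper uses three explicit phases (shared items, residues mod $n_i$, then borrowing up to $\theta$) --- and the final displacement accounting that you leave informal is exactly what the paper carries out to obtain the stated constant $2(d(d-1)+t(\theta+n+n_d-d-1))$.
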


\begin{proof}

    LP~\eqref{LP_gap} has $d \, m + 1$ variables ($\alpha$ and the $x_{i,j}$s), therefore, there exists an optimal solution with $d \, m + 1$ tight constraints. There are $d(d-1) + m +  d \, m + 1$ constraints, out of which $d \, m $ constraints are non-negativity constraints for $x_{i,j}$. By simple counting, we have that at least $d \, m + 1 - (d(d-1) + m + 1)$ out of these non-negativity constraints must be tight. Therefore, at most $m + d(d-1)$ of the ``$x_{i,j} \geq 0$'' are \emph{not} tight. Since, without loss of generality, $x_{i,j} > 0$ for some $i \in [d]$ and all $j \in \items$ (otherwise, we could set $x_{i,j} = 1/n$ for all $i \in [d]$ without affecting the objective), at most $d(d-1)$ items are shared between groups of agents. 

    Let $x^*$ be an optimal solution of LP~\eqref{LP_gap} where at most $d(d-1)$ items are shared between groups. 
    For each group $i$ and item type $z \in [t]$, let $B_{i,z} = \sum_{j \in \items: \text{$j$ has type $z$}} x^*_{i,j}$ be the total fraction of type-$z$
    items that group $i$ received under $x^*$. We will make $B_{i,z}$s integral, without changing them too much. 

    For each $z\in[t]$, initialize $S_z \leftarrow 0$. We will have three phases of
    decreasing $B_{i,z}$s and increasing the corresponding $S_z$. After these phases, we will guarantee that for every $i \in [d]$ and $z \in [t]$, $B_{i,z} \equiv 0\pmod{n_i}$. Therefore, 
    a $B_{i,z}$ fraction of items of type $z$ can be integrally and equally allocated to the individual agents of each group $i$, for every $i \in [d]$.
    We will also guarantee that $S_z = \sum_{i \in [d]} y_{i,z} \cdot n_i$ for some set of non-negative integers $\{y_{i,z}\}_{i \in [d]}$; this implies that a $S_z$ fraction of items of type $z$
    can be integrally distributed among groups in a way that, for each group $i$, individuals of group $i$ receive the same number of items.

    \emph{Phase 1.} 
    Consider every item $\ell \in \items$ such that there exist $i, i' \in [d]$ (with $i \neq i'$) with $x^*_{i,\ell}>0$ and $x^*_{i',\ell}>0$; that is, $\ell$ is an item that is fractionally shared among at least two groups of agents. For every such item $\ell$, set $B_{i,\ell} \leftarrow B_{i,\ell} - x^*_{i,\ell}$, for all $i \in [d]$, and $S_z \leftarrow S_z + \sum_{i \in [d]} x^*_{i,\ell}$, where $z$ is the type of item $\ell$. From our discussion above, we have that there are at most $d(d-1)$ such items $\ell$. Therefore, at the end of Phase 1, we have that $\sum_{z \in [t]} S_z \le d(d-1)$.

    \emph{Phase 2.} At the start of Phase 2, we have that for every item $j \in \items$ of type $z$, $B_{i,z}$ is an integer, for all $i \in [d]$. Let $B_{i,z} = n_i \cdot q_{i,z} +r_{i,z}$, where $r_{i,z} \in [0, n_i-1]$. 
    For every $z \in [t]$, set $B_{i,z} \leftarrow B_{i,z} - r_{i,z}$ for every $i \in [d]$, and set $S_z \leftarrow S_z + \sum_{i \in [d]} r_{i,z}$. Therefore, at the end of Phase 2, we have that $B_{i,z} = n_i \cdot q_{i,z}$ for some integer $q_{i,z}$, and that $\sum_{z \in [t]} S_z \le d(d-1) +  t\sum_{i \in [d]}(n_i-1) = d(d-1) + t(n-d)$.

    \emph{Phase 3.}  For every $z \in [t]$ while $S_z < \theta$, we pick an arbitrary group of agents $i \in [d]$, such that $B_{i,z} > 0$, and 
    set $B_{i,z} \leftarrow B_{i,z} - n_i$ and $S_z \leftarrow S_z + n_i$. First, note that it cannot be that $B_{i,z} = 0$ for all $i \in [d]$ and $S_z < \theta$, since $k_z = \sum_{i\in [d]} B_{i,z} + S_z \geq \theta$; therefore, if $S_z < \theta$ then $B_{i,z} > 0$. Moreover, 
    at the end of Phase 2, we are guaranteed that $B_{i,z} = n_i \cdot q_{i,z}$, for some non-negative integer $q_{i,z}$; therefore $B_{i,z} \leftarrow B_{i,z} - n_i$ is a valid operation. Repeating until $S_z \geq \theta$ for all $z \in [t]$, we have that
    $\sum_{z \in [t]} S_z \le d(d-1) + t(n-d)+t(\theta+n_d-1)$. We also have that for each $z \in [t]$, $\sum_{i \in [d]} B_{i,z} = \sum_{i \in [d]} n_i \cdot q'_{i,z}$ for some set of integers $\{q'_{i,z}\}_{i \in [d]}$. Thus, $\sum_{i \in [d]} B_{i,z} \equiv 0 \pmod{g}$. But since $k_z \equiv 0 \pmod{g}$ and $S_z + \sum_{i \in [d]} B_{i,z} = k_z$, we also have that $S_z \equiv 0 \pmod{g}$ for all $z \in [t]$. 
    
    Since $S_z \equiv 0 \pmod{g}$ and $S_z \geq \theta$, applying $\Cref{corollary: frobenius coin problem}$ we have that there exists a set of non-negative integers $\{y_{i,z}\}_{i \in [d]}$ such that $S_z = \sum_{i \in [d]} y_{i,z} \cdot n_i$. We can therefore allocate $y_{i,z} \cdot n_i$ items of type $z$ to each group $i$, for all $i \in [d]$, $z \in [t]$. 
    Since $\sum_{i \in [d]} B_{i,z} = \sum_{i \in [d]} n_i \cdot q'_{i,z}$ for some set of integers $\{q'_{i,z}\}_{i \in [d]}$, we can also allocate $q'_{i,z}$ items of type $z$ to each group $i$. Let $A$ be this allocation, and notice that $A$ allocates all items.
    We have that $\sum_{z \in [t]} y_{i,z} \le \sum_{z \in [t]} S_z \le d(d-1) + t(n-d)+t(\theta+n_d-1)$. Thus, after Phase 3, every agent group has received at most $d(d-1) + t(n-d)+t(\theta+n_d-1)$ extra items (compared to the allocation $x^*$).
    All agents of the same group have received exactly the same integral bundle. Also, compared to the allocation $x^*$, each agent in group $i$ has lost at most $(d(d-1) +t(\theta+n + n_d - d -1))\max_{j \in [t]} v_{i,j}(\kb)$ value and each agent in any other group has gained at most that amount in value. Thus $\min_{ i' \in [d]} \gap_{i,i'}(A) \ge  \min_{ i' \in [d]} \gap_{i,i'}(x^*)  - 2(d(d-1) +t(\theta+n + n_d - d -1))\max_{j \in [t]} v_{i,j}(\kb) = \norm{v_i(\kb)}_2 \cdot\alpha^* - 2(d(d-1) +t(\theta+n + n_d - d -1))\max_{j \in [t]} v_{i,j}(\kb)$.
\end{proof}

Given these ingredients, we can prove that an envy-free allocation exists if we have enough items of each type, and the maximum $\ell_2$-normalized values are small.

\begin{theorem} \label{theorem: Conditions for envy freeness}
    If $k_z \ge \theta$ and $k_z \equiv 0 \pmod{g}$, for all $z \in [t]$, and $$\normtwov_{max}(\kb) \le \sqrt{\frac{\min_{i \in [d],i' \in [d] } \norm{\normtwov_i(\kb) - \normtwov_{i'}(\kb)}^2_2}{4n(d(d-1) +t(\theta+n + n_d - d -1))} },$$ then an envy-free allocation exists.
\end{theorem}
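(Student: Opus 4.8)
The plan is to chain the three preceding results: use the \textsc{Relative Norm} mechanism as a feasibility certificate that lower-bounds the optimum $\alpha^*$ of LP~\eqref{LP_gap}, feed that bound into the rounding guarantee of \Cref{lemma: rounding envy goods}, and then invoke the hypothesis on $\normtwov_{max}(\kb)$ to show the resulting integral allocation has nonnegative gap for every ordered pair of groups (which is exactly envy-freeness).

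First I would lower-bound $\alpha^*$. By \Cref{lemma: relative norm feasibility}, $x^{\mech}$ is feasible for LP~\eqref{LP_gap}, so $\alpha^*$ is at least the objective value $x^{\mech}$ attains. For each pair $i,i'$, the corresponding constraint value can be rewritten and bounded via \Cref{lem: lower bound on gap}:
\[
\sum_{j \in \items} \normtwov_{i,j}(\kb)\,(x^{\mech}_{i,j} - x^{\mech}_{i',j}) \;=\; \frac{\gap_{i,i'}(x^{\mech})}{\norm{v_i(\kb)}_2} \;\ge\; \frac{\norm{\normtwov_i(\kb) - \normtwov_{i'}(\kb)}_2^2}{2n\,\normtwov_{max}(\kb)},
\]
where the equality uses $\gap_{i,i'}(x^{\mech}) = \norm{v_i(\kb)}_2 \sum_{j} \normtwov_{i,j}(\kb)(x^{\mech}_{i,j}-x^{\mech}_{i',j})$. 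Taking the minimum over all pairs yields the clean estimate $\alpha^* \ge \frac{1}{2n\,\normtwov_{max}(\kb)}\min_{i,i'} \norm{\normtwov_i(\kb)-\normtwov_{i'}(\kb)}_2^2$.

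Second, since the hypotheses $k_z \ge \theta$ and $k_z \equiv 0 \pmod g$ hold for all $z$, \Cref{lemma: rounding envy goods} produces a complete integral allocation $A$, with all agents in a group receiving identical bundles, satisfying for every $i$
\[
\min_{i' \in [d]} \gap_{i,i'}(A) \;\ge\; \norm{v_i(\kb)}_2\,\alpha^* \;-\; 2C\,\max_{j \in [t]} v_{i,j}(\kb), \qquad C \coloneqq d(d-1)+t(\theta+n+n_d-d-1).
\]
To finish I would control the rounding loss with $\max_{j} v_{i,j}(\kb) = \norm{v_i(\kb)}_2 \max_{j} \normtwov_{i,j}(\kb) \le \norm{v_i(\kb)}_2\,\normtwov_{max}(\kb)$, substitute the lower bound on $\alpha^*$, and factor out the strictly positive $\norm{v_i(\kb)}_2$. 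This reduces the claim $\min_{i'}\gap_{i,i'}(A)\ge 0$ to a single scalar inequality that is quadratic in $\normtwov_{max}(\kb)$,
\[
\frac{\min_{i,i'}\norm{\normtwov_i(\kb)-\normtwov_{i'}(\kb)}_2^2}{2n\,\normtwov_{max}(\kb)} \;-\; 2C\,\normtwov_{max}(\kb) \;\ge\; 0,
\]
whose rearrangement is precisely a threshold of the stated form $\normtwov_{max}(\kb) \le \sqrt{\min_{i,i'}\norm{\normtwov_i(\kb)-\normtwov_{i'}(\kb)}_2^2 \big/ \big(cn\,C\big)}$. Since the bound holds for every $i$, monotonicity of the minimum gives $\gap_{i,i'}(A)\ge 0$ for all ordered pairs, so $A$ is envy-free.

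\textbf{Main obstacle.} The conceptual crux is recognizing that $x^{\mech}$ is needed only as a \emph{certificate} for the LP — we never round $x^{\mech}$ itself, so its structure is irrelevant once $\alpha^*$ is bounded below. The genuinely delicate point is the constant bookkeeping: the lower bound on $\alpha^*$ carries a $\tfrac{1}{2n\,\normtwov_{max}(\kb)}$ factor while the rounding loss from \Cref{lemma: rounding envy goods} carries a $C\,\normtwov_{max}(\kb)$-type factor, so both sides of the decisive inequality scale with $\normtwov_{max}(\kb)$ and the trade-off is quadratic; one must verify carefully that the hypothesized threshold (and not merely a same-order quantity) makes the bracket nonnegative, using only the global bound $\normtwov_{i,max}(\kb)\le\normtwov_{max}(\kb)$ to pass from per-group item values to $\normtwov_{max}(\kb)$.
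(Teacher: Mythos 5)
Your proposal is correct and follows essentially the same route as the paper's own proof: use $x^{\mech}$ as a feasibility certificate to lower-bound $\alpha^*$ via \Cref{lem: lower bound on gap}, plug that into \Cref{lemma: rounding envy goods}, bound the rounding loss by $\norm{v_i(\kb)}_2\,\normtwov_{max}(\kb)$, and rearrange the resulting scalar inequality. The only point of divergence is the constant in the final threshold, which you prudently leave as a generic $c$ while the paper commits to the factor $2n(\cdot)$ after a somewhat loose absorption of the rounding lemma's leading $2$; your bracketed inequality actually yields the slightly stronger requirement $\normtwov_{max}(\kb)\le\sqrt{\min_{i,i'}\norm{\normtwov_i(\kb)-\normtwov_{i'}(\kb)}_2^2/(4nC)}$, so the constant bookkeeping you flag as the delicate step is indeed where the paper's stated threshold and the derivation differ by a constant factor.
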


\begin{proof}
Let $\alpha^*$ be the optimal value of LP~\eqref{LP_gap}; \Cref{lemma: rounding envy goods} implies that there exists an integral solution $A$ such that, for all $i \in [d]$ , $\min_{ i' \in [d]} \gap_{i,i'}(A) \geq \norm{v_i(\kb)}_2 \cdot \alpha^* - 2(d(d-1) +t(\theta+n + n_d - d -1))\max_{j \in \items} v_{i,j}(\kb)$.
$x^{\mech}$ is a feasible solution to LP~\eqref{LP_gap}, therefore, $\alpha^* \geq \min_{\hat{i}, \tilde{i} \in [d]} \frac{\gap_{\hat{i}, \tilde{i}}(x^{\mech})}{\norm{v_{\hat{i}}(\kb)}_2}$, which, by~\Cref{lem: lower bound on gap}, is at least $\min_{\hat{i}, \tilde{i} \in [d]} \frac{\norm{\normtwov_{\hat{i}}(\kb) - \normtwov_{\tilde{i}}(\kb)}_2^2}{4n \cdot \normtwov_{\max}(\kb)}$. Putting everything together, we have that for all $i \in [d]$, $\min_{i' \in [d]} \gap_{i,i'}(A) \geq \norm{v_i(\kb)}_2\min_{\hat{i}, \tilde{i} \in \agents} \frac{\norm{\normtwov_{\hat{i}}(\kb) - \normtwov_{\tilde{i}}(\kb)}_2^2}{2n \cdot \normtwov_{\max}} -  2(d(d-1) +t(\theta+n + n_d - d -1)) \max_{j \in \items} v_{i,j}$. Therefore, if,$\forall i \in [d]$, $\max_{j \in \items} \normtwov_{i,j}(\kb) \leq \min_{\hat{i}, \tilde{i} \in \agents} \frac{\norm{\normtwov_{\hat{i}}(\kb) - \normtwov_{\tilde{i}}(\kb)}_2^2}{2n(d(d-1) +t(\theta+n + n_d - d -1)) \cdot \normtwov_{\max}(\kb)}$, then $A$ is envy-free (since all $\gap_{i,i'}(A)$ are positive).
Re-arranging and simplifying, we get the theorem.
\end{proof}

As a corollary to~\Cref{theorem: Conditions for envy freeness}, we can get the following simpler expression for the case where every group has only one agent (i.e., $d=n$, or $n_i = 1$ for all $i$), by observing that $\theta = 0$ in this case.

\begin{corollary} \label{corollary: Conditions for envy freeness with single agents groups}
    If $d=n$ and $\displaystyle \normtwov_{max}(\kb) \le \sqrt{\frac{\min_{i,i' \in [d]} \norm{\normtwov_i(\kb) - \normtwov_{i'}(\kb)}^2_2}{4n^3} }$, then an envy-free allocation exists.
\end{corollary}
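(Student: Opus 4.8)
The plan is to obtain this as an immediate specialization of Theorem~\ref{theorem: Conditions for envy freeness}, so the only work is to simplify the constants under the hypothesis $d = n$. First I would record the structural consequences of every group being a singleton: setting $d = n$ forces $n_i = 1$ for all $i \in [d]$, whence $g = \gcd(1,\dots,1) = 1$ and $n_d = 1$. With $g = 1$ the two divisibility hypotheses of the theorem become vacuous—$k_z \equiv 0 \pmod{g}$ holds for every $z$, and since $\theta = g\big(n_1/g - 1\big)\big(n_d/g - 1\big) = 1 \cdot 0 \cdot 0 = 0$, the requirement $k_z \ge \theta$ reduces to $k_z \ge 0$, which always holds. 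Thus the only substantive condition left to verify is the bound on $\normtwov_{max}(\kb)$.

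Next I would substitute $d = n$, $n_d = 1$, and $\theta = 0$ into the denominator $2n\big(d(d-1) + t(\theta + n + n_d - d - 1)\big)$ appearing in the theorem's condition. The key observation is that the $t$-dependent term collapses: $\theta + n + n_d - d - 1 = 0 + n + 1 - n - 1 = 0$, so the entire contribution proportional to $t$ vanishes and the denominator simplifies to $2n \cdot n(n-1) = 2n^2(n-1)$. This vanishing is precisely what produces the $t$-independence advertised for single-agent groups, and it is the one computation worth carrying out explicitly.

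Finally I would reconcile the tight threshold $\sqrt{\min_{i,i'}\norm{\normtwov_i(\kb) - \normtwov_{i'}(\kb)}_2^2 / (2n^2(n-1))}$ produced above with the cleaner expression $\sqrt{\min_{i,i'}\norm{\normtwov_i(\kb) - \normtwov_{i'}(\kb)}_2^2 / (2n^3)}$ stated in the corollary. Since $n^3 - n^2(n-1) = n^2 \ge 0$, we have $2n^3 \ge 2n^2(n-1)$, hence $1/(2n^3) \le 1/\big(2n^2(n-1)\big)$, and therefore the corollary's threshold is no larger than the theorem's. Consequently the corollary's hypothesis on $\normtwov_{max}(\kb)$ implies the theorem's hypothesis, and the existence of an envy-free allocation follows directly from Theorem~\ref{theorem: Conditions for envy freeness}.

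There is essentially no obstacle here; the only point requiring care is this final monotonicity step, which replaces the exact denominator $2n^2(n-1)$ by the looser but more readable $2n^3$ in the direction that preserves the implication. One should also tacitly assume $n \ge 2$, since for $n = 1$ both the statement and the claim of envy-freeness are degenerate (a lone agent envies no one).
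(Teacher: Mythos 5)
Your proposal is correct and matches the paper's (one-line) derivation: the paper obtains the corollary from Theorem~\ref{theorem: Conditions for envy freeness} by observing that $\theta=0$ when all groups are singletons, which is exactly your substitution $d=n$, $n_d=1$, $\theta=0$, $g=1$ making the $t$-term vanish and the denominator become $2n^2(n-1)\le 2n^3$. The only extra content you supply is spelling out that the divisibility hypotheses become vacuous and that replacing $2n^2(n-1)$ by $2n^3$ goes in the implication-preserving direction, both of which are correct.
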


In the remainder of this section, we aim to prove a version of~\Cref{theorem: Conditions for envy freeness} that removes the second condition on the maximum $\ell_2$-normalized value, and only lower bounds the number of items of each type.

We start with a couple of technical lemmas.

\begin{lemma}\label{lemma: copies lemma goods}
    For any pair $i, i' \in [d]$, $\norm{\normtwov_i(\kb) - \normtwov_{i'}(\kb)}_2 \ge \sqrt{\frac{\alpha}{\beta}} \cdot \norm{\normtwov_i - \normtwov_{i'}}_2 - \left(1-\sqrt{\frac{\alpha}{\beta}}\right)$ where $k_z \in [\alpha, \beta]$ for each $z \in [t]$.
\end{lemma}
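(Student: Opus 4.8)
The plan is to reduce the claim to a statement about two unit vectors in $\R^t$ that differ only through a diagonal rescaling, and then control how $\ell_2$-normalization distorts distances under that rescaling.

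First I would record that adding copies does not change per-item values: if $v_{i,z}$ denotes the value of an agent in group $i$ for a single item of type $z$, then all $k_z$ copies of type $z$ carry the same normalized value $\normtwov_{i,z}(\kb) = v_{i,z}/\norm{v_i(\kb)}_2$. Writing $D \coloneqq \diag(\sqrt{k_1},\dots,\sqrt{k_t})$ and $\hat a_i \coloneqq \normtwov_i = v_i/\norm{v_i}_2 \in \R^t$ (the one-copy normalized vector, which is a unit vector), a direct computation collapses the $m$-dimensional distance into a $t$-dimensional one:
\[
\norm{\normtwov_i(\kb) - \normtwov_{i'}(\kb)}_2^2 = \sum_{z \in [t]} k_z\big(\normtwov_{i,z}(\kb)-\normtwov_{i',z}(\kb)\big)^2 = \norm{\hat b_i - \hat b_{i'}}_2^2, \qquad \hat b_i \coloneqq \frac{D\hat a_i}{\norm{D\hat a_i}_2},
\]
where the identity $\hat b_i = D\hat a_i/\norm{D\hat a_i}_2$ follows because $\ell_2$-normalizing $Dv_i$ is the same as $\ell_2$-normalizing $D\hat a_i$. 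Thus it suffices to lower bound $\norm{\hat b_i - \hat b_{i'}}_2$ by $r\,\norm{\hat a_i - \hat a_{i'}}_2 - (1-r)$ with $r \coloneqq \sqrt{\alpha/\beta}$.

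Next, set $s_i \coloneqq \norm{D\hat a_i}_2$; since $\hat a_i$ is a unit vector and $k_z \in [\alpha,\beta]$, we have $s_i^2 = \sum_{z} k_z \hat a_{i,z}^2 \in [\alpha,\beta]$, so $s_i \in [\sqrt\alpha,\sqrt\beta]$. Both sides of the target inequality are symmetric in $i,i'$, so I may assume $s_{i'} \ge s_i$. The crucial step is the decomposition
\[
\hat b_i - \hat b_{i'} = \frac{D\hat a_i}{s_i} - \frac{D\hat a_{i'}}{s_{i'}} = \frac{D(\hat a_i - \hat a_{i'})}{s_{i'}} + D\hat a_i\Big(\frac{1}{s_i} - \frac{1}{s_{i'}}\Big),
\]
chosen so that the \emph{larger} norm $s_{i'}$ sits in the denominator of the main term. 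Applying the triangle inequality in reverse, then $\norm{D(\hat a_i - \hat a_{i'})}_2 \ge \sqrt\alpha\,\norm{\hat a_i - \hat a_{i'}}_2$ (the smallest singular value of $D$) together with $s_{i'} \le \sqrt\beta$, bounds the first term below by $r\,\norm{\hat a_i - \hat a_{i'}}_2$; for the second term, $\norm{D\hat a_i}_2\big(\tfrac{1}{s_i}-\tfrac{1}{s_{i'}}\big) = 1 - \tfrac{s_i}{s_{i'}} \le 1 - \tfrac{\sqrt\alpha}{\sqrt\beta} = 1-r$. Combining the two estimates gives exactly the claimed inequality.

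The one genuinely delicate point—and the step I would be most careful about—is the choice of denominator in the decomposition. The naive split (keeping $s_i$ in the main denominator) produces an error term of size $\tfrac{1}{r}-1$, which is strictly larger than $1-r$ and hence too weak. Routing the \emph{larger} norm $\max(s_i,s_{i'})$ into the denominator is precisely what converts the error $s_{i'}/s_i - 1$ into the sharper $1 - s_i/s_{i'}$, and this is the only place where the argument must be arranged with care rather than pushed through mechanically. (Throughout I assume $\norm{v_i}_2 > 0$ for every group, so all normalizations are well defined.)
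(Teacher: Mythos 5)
Your proof is correct and is essentially the paper's argument in matrix notation: the paper's decomposition $\gamma_i x - \gamma_{i'}y = \gamma_i(x-y) - (\gamma_{i'}-\gamma_i)y$ with the WLOG $\gamma_{i'}\ge\gamma_i$ (where $\gamma_i = 1/s_i$) is exactly your trick of routing $\max(s_i,s_{i'})$ into the denominator of the main term, followed by the same reverse triangle inequality, the same $\sqrt{\alpha}$ lower bound on the rescaling, and the same use of unit norms to reduce the error term to $1-\min/\max\le 1-\sqrt{\alpha/\beta}$. No substantive difference.
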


\begin{lemma}\label{lemma: function lower bound}
    $x\left( \sqrt{\frac{x}{x+a}} \cdot b - \left(1-\sqrt{\frac{x}{x+a}}\right)\right)^2 \ge b^2x - ab(b+1)$, where $a,b\ge 0$ and $x\ge \frac{a(b+1)}{2b}$.
\end{lemma}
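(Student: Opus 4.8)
The plan is to collapse this two-variable inequality into a single-variable statement in $s \coloneqq \sqrt{x/(x+a)} \in (0,1]$ and then use the hypothesis $x \ge \tfrac{a(b+1)}{2b}$ to linearize the square root. The first move is cosmetic: since $\sqrt{x/(x+a)}\cdot b - \big(1-\sqrt{x/(x+a)}\big) = s(b+1) - 1$, the claim is exactly $x\big(s(b+1)-1\big)^2 \ge b^2 x - a(b+1)$. The degenerate cases are immediate and should be dispatched first: if $b=0$ the right-hand side is $-a \le 0$ while the left-hand side is nonnegative, and if $a=0$ then $s=1$ and both sides equal $b^2x$.

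The quantitative heart is a clean lower bound on $s$. Rationalizing the numerator gives
\[
1 - s \;=\; \frac{a}{\sqrt{x+a}\,\big(\sqrt{x+a}+\sqrt{x}\big)} \;\le\; \frac{a}{2x},
\]
where the inequality uses $\sqrt{x+a}\,(\sqrt{x+a}+\sqrt{x}) \ge \sqrt{x}\cdot 2\sqrt{x} = 2x$. Hence $s \ge 1 - \tfrac{a}{2x}$, so that $s(b+1) - 1 \ge b - \tfrac{a(b+1)}{2x}$. This is the precise point at which the hypothesis is consumed: $x \ge \tfrac{a(b+1)}{2b}$ is equivalent to $b - \tfrac{a(b+1)}{2x} \ge 0$, so $s(b+1)-1$ is bounded below by a \emph{nonnegative} quantity, and squaring is order-preserving between nonnegative numbers, giving $\big(s(b+1)-1\big)^2 \ge \big(b - \tfrac{a(b+1)}{2x}\big)^2$.

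Multiplying by $x$ and expanding the square leaves
\[
x\big(s(b+1)-1\big)^2 \;\ge\; b^2 x - a b(b+1) + \frac{a^2(b+1)^2}{4x},
\]
and it remains to compare the right-hand side with the target $b^2 x - a(b+1)$. Canceling $b^2x$, this reduces to verifying $a(b+1)\big[(1-b) + \tfrac{a(b+1)}{4x}\big] \ge 0$, where the strictly positive remainder term $\tfrac{a^2(b+1)^2}{4x}$ (discarded when passing from $s$ to its linear surrogate) is exactly what must absorb the gap between the cross term $-ab(b+1)$ and the desired $-a(b+1)$. I expect this final bookkeeping to be the main obstacle, since it is where the loss from the linearization $s \mapsto 1-\tfrac{a}{2x}$ is reckoned against the slack afforded by the hypothesis; a cleaner but less transparent alternative is to substitute $a = x(1-s^2)/s^2$, clear denominators, and factor the resulting polynomial as $(1-s)\,P(s)$ with $P(s) = -(b+1)s^3 + (1-b)s^2 + s + 1$, after which nonnegativity of $P$ on $(0,1]$ follows from its endpoint values together with the fact that $P'$ has a single sign change on $(0,\infty)$ (unimodality), thereby avoiding the square root entirely at the cost of spotting the factorization.
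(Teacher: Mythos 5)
Your main line is the same as the paper's: write the bracket as $s(b+1)-1$ with $s=\sqrt{x/(x+a)}$, lower-bound $s\ge 1-\tfrac{a}{2x}$, use $x\ge\tfrac{a(b+1)}{2b}$ to make the surrogate nonnegative before squaring, and arrive at $x\bigl(b-\tfrac{a(b+1)}{2x}\bigr)^2$. Up to that point both arguments are valid. The gap is the final step, and you have located it exactly: after expanding, what remains is $a(b+1)\bigl[(1-b)+\tfrac{a(b+1)}{4x}\bigr]\ge 0$, which does \emph{not} follow from the hypotheses when $b>1$ (it would require the \emph{upper} bound $x\le\tfrac{a(b+1)}{4(b-1)}$). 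Your fallback does not rescue this: in the factorization with $P(s)=-(b+1)s^3+(1-b)s^2+s+1$ one has $P(1)=2(1-b)<0$ for $b>1$, so $P$ is not nonnegative on $(0,1]$ and the endpoint argument fails.

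The honest conclusion is that the lemma as stated is false for $b>1$: take $a=0.1$, $b=2$, $x=10$ (so $x\ge\tfrac{a(b+1)}{2b}=0.075$); the left side is $10\bigl(3\sqrt{10/10.1}-1\bigr)^2\approx 39.407$ while the right side is $40-0.3=39.7$. The paper's own proof silently commits exactly the step you could not justify: it asserts $x\bigl(b-\tfrac{a(b+1)}{2x}\bigr)^2\ge x\bigl(b^2-\tfrac{a(b+1)}{x}\bigr)$, which is the inequality that fails for $b>1$. What the shared argument actually proves is $x(\cdots)^2\ge b^2x-ab(b+1)$, which implies the stated bound only when $b\le 1$. (In the paper's application $b=\min_{i,i'}\|\normtwov_i-\normtwov_{i'}\|_2$ can exceed $1$, though the corrected bound $b^2x-ab(b+1)$ would still suffice there since $b\le\sqrt2$ gives $b(b+1)\le 2+\sqrt2$ against a slack of $4n$.) So: your derivation is sound as far as it goes and your refusal to paper over the last step is to your credit, but neither your proof nor the paper's establishes the lemma as written --- it needs either the extra hypothesis $b\le 1$ or the weaker conclusion $b^2x-ab(b+1)$.
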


We are now ready to prove our main result.

\begin{theorem}\label{application: goods copies}
    If $k_z \ge \frac{4n(d^2 +t(\theta+n + n_d - d -1))}{ \min_{i,i' \in [d]}\norm{\normtwov_i - \normtwov_{i'}}^2_2}$ and $k_z \equiv 0 \pmod{g}$ for all $z \in [t]$, an envy-free allocation exists.
\end{theorem}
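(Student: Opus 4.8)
The plan is to combine the two technical lemmas (\Cref{lemma: copies lemma goods} and \Cref{lemma: function lower bound}) with the sufficient condition from \Cref{theorem: Conditions for envy freeness}. Recall that \Cref{theorem: Conditions for envy freeness} guarantees envy-freeness provided both (i) $k_z \ge \theta$ and $k_z \equiv 0 \pmod g$ for all $z$, and (ii) $\normtwov_{max}(\kb)^2 \le \frac{\min_{i,i'}\norm{\normtwov_i(\kb)-\normtwov_{i'}(\kb)}_2^2}{2n(d(d-1)+t(\theta+n+n_d-d-1))}$. The hypothesis of the present theorem already supplies the divisibility and $k_z \ge \theta$ requirement (since the stated lower bound on $k_z$ exceeds $\theta$), so the real work is to show that the copy count forces condition (ii) to hold. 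The strategy is therefore to lower bound the numerator $\min_{i,i'}\norm{\normtwov_i(\kb)-\normtwov_{i'}(\kb)}_2^2$ in terms of the copy-free distances $\norm{\normtwov_i-\normtwov_{i'}}_2^2$ and to upper bound $\normtwov_{max}(\kb)$, then verify the resulting inequality.

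First I would control $\normtwov_{max}(\kb)$. Since each item type has $k_z$ copies and the $\ell_2$-normalized vector has unit norm, each individual $\ell_2$-normalized coordinate is at most $\frac{1}{\sqrt{\min_z k_z}}$; hence $\normtwov_{max}(\kb)^2 \le \frac{1}{\min_z k_z}$. This is the key mechanism by which increasing the number of copies shrinks the normalized per-item value. Next, I would apply \Cref{lemma: copies lemma goods} with $\alpha = \min_z k_z$ and $\beta = \max_z k_z$ to lower bound $\norm{\normtwov_i(\kb)-\normtwov_{i'}(\kb)}_2$ by $\sqrt{\alpha/\beta}\,\norm{\normtwov_i-\normtwov_{i'}}_2 - (1-\sqrt{\alpha/\beta})$. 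Squaring and multiplying by $\min_z k_z$, \Cref{lemma: function lower bound} (applied with $x = \min_z k_z$, $a = \max_z k_z - \min_z k_z$, and $b = \norm{\normtwov_i-\normtwov_{i'}}_2$) converts this into a clean bound of the form $\min_z k_z \cdot \norm{\normtwov_i(\kb)-\normtwov_{i'}(\kb)}_2^2 \ge \norm{\normtwov_i-\normtwov_{i'}}_2^2 \cdot \min_z k_z - (\max_z k_z - \min_z k_z)(\norm{\normtwov_i-\normtwov_{i'}}_2 + 1)$, provided the side condition $x \ge \frac{a(b+1)}{2b}$ of \Cref{lemma: function lower bound} holds.

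Combining these two estimates, condition (ii) reduces to checking that $\frac{1}{\min_z k_z} \le \frac{\text{(lower bound on numerator)}}{2n(d(d-1)+t(\theta+n+n_d-d-1))}$, i.e.\ that $\min_z k_z$ times the lower bound on $\min_{i,i'}\norm{\normtwov_i(\kb)-\normtwov_{i'}(\kb)}_2^2$ is at least $2n(d(d-1)+t(\theta+n+n_d-d-1))$. The stated hypothesis $k_z \ge \frac{2n(d^2 + t(\theta+n+n_d-d-1))}{\min_{i,i'}\norm{\normtwov_i-\normtwov_{i'}}_2^2}$ is engineered precisely so that this holds: the $d^2$ in the numerator (rather than $d(d-1)$) leaves exactly the slack needed to absorb the error term $(\max_z k_z - \min_z k_z)(\norm{\normtwov_i-\normtwov_{i'}}_2+1)$ produced by passing from the single-copy distances to the multi-copy distances. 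I would plug the hypothesis into the combined bound and simplify to confirm the threshold in (ii) is met for every pair $i,i'$, whence \Cref{theorem: Conditions for envy freeness} yields the envy-free allocation.

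The main obstacle I anticipate is the bookkeeping in the final inequality: one must carefully track how the $b^2x - a(b+1)$ error term in \Cref{lemma: function lower bound} interacts with the denominator $2n(d(d-1)+t(\cdots))$, and confirm that the gap between $d^2$ and $d(d-1) = d^2 - d$ in the two numerators is enough to swallow the error uniformly over all pairs $i,i'$ and over the spread $\max_z k_z - \min_z k_z$. A subtle point is that the bound must hold simultaneously for the pair achieving the minimum distance, so I would take $b = \min_{i,i'}\norm{\normtwov_i-\normtwov_{i'}}_2$ and argue monotonically; I would also need to verify at the outset that the assumed $k_z$ is large enough to satisfy both $k_z \ge \theta$ and the side condition $x \ge \frac{a(b+1)}{2b}$ of \Cref{lemma: function lower bound}, which should follow since the hypothesized lower bound on $k_z$ dominates these milder requirements.
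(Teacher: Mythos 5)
Your overall route is the paper's route: bound $\normtwov_{max}(\kb)$ by $1/\sqrt{\min_z k_z}$, lower bound the multi-copy distances via \Cref{lemma: copies lemma goods}, clean up the square via \Cref{lemma: function lower bound}, and feed the result into \Cref{theorem: Conditions for envy freeness}, with the $d^2$ versus $d(d-1)$ slack absorbing the error. However, there is a genuine gap in how you instantiate the two lemmas: you take $\alpha=\min_z k_z$ and $\beta=\max_z k_z$, so your error term is $a(b+1)$ with $a=\max_z k_z-\min_z k_z$. The hypothesis of the theorem only gives a \emph{lower} bound on each $k_z$; the spread $\max_z k_z-\min_z k_z$ is completely uncontrolled and can be arbitrarily large. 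The available slack is fixed at $2n\bigl(d^2-d(d-1)\bigr)=2nd$, while your error term grows linearly in the spread (with coefficient $b+1\ge 1$), so the final inequality fails whenever one item type has many more copies than another. Worse, when $\alpha/\beta$ is small the lower bound $\sqrt{\alpha/\beta}\,b-(1-\sqrt{\alpha/\beta})$ from \Cref{lemma: copies lemma goods} can be negative and hence vacuous; the multi-copy distance genuinely can shrink relative to $\normtwov_{max}(\kb)^2$ when the copy counts are very unbalanced, so this is not just a bookkeeping issue.

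The missing idea is a preprocessing reduction that the paper performs first: setting $\mu$ equal to the stated threshold, one may assume without loss of generality that $k_z\in[\mu,\mu+n-1]$ for every $z$, because any type with more copies can have $n$ copies at a time peeled off and given one to each agent --- this preserves envy-freeness exactly (everyone gains identically) and preserves $k_z\equiv 0\pmod g$ since $g\mid n$. After this reduction the spread is at most $n-1$, so \Cref{lemma: copies lemma goods} and \Cref{lemma: function lower bound} are applied with $\alpha=x=\mu$ and $\beta=\mu+n-1$, giving an error term $(n-1)(b+1)\le 3(n-1)$ (using $b\le 2$ by the triangle inequality for unit vectors), which the $4n$ of slack from $d\ge 2$ does absorb. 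With that reduction inserted at the start, the rest of your argument goes through as you describe.
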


\begin{proof}

    Let $\mu = \frac{4n(d^2 +t(\theta+n + n_d - d -1))}{ \min_{i \in [d], i' \in [d]}\norm{\normtwov_i - \normtwov_{i'}}^2_2}$. First, notice that, without loss of generality, we can assume that $k_z \in [\mu, \mu+n-1]$. If that is not the case, and $k_z > \mu+n-1$, we can simply allocate one copy of item $z$ to each agent and continue this process until the condition is satisfied; since all agents receive the exact same amount of items, envy is not affected by this process. Also, since we allocate copies in multiples of $n$ and $n \equiv 0 \pmod{g}$, then the remaining number of copies is also divisible by $g$ (so, the condition $k_z \equiv 0 \pmod{g}$ also remains true).

    From \Cref{theorem: Conditions for envy freeness} we know that if $\normtwov_{max}(\kb) \le \sqrt{\frac{\min_{i \in [d],i' \in [d] } \norm{\normtwov_i(\kb) - \normtwov_{i'}(\kb)}^2_2}{4n(d(d-1) +t(\theta+n + n_d - d -1))} }$,then an envy-free allocation exists. Notice that:
    \[\normtwov_{max}(\kb) = \max_{i \in [d]} \frac{\max_{j \in \items} v_{i,j}(\kb)}{\norm{v_i(\kb)}_2} = \max_{i \in [d]} \frac{\max_{z \in [t]} v_{i,z}}{\norm{v_i(\kb)}_2} \le \max_{i \in [d]} \frac{\max_{z \in [t]} v_{i,z}}{\sqrt{\mu}\norm{v_i}_2}  \le \frac{\normtwov_{max}}{\sqrt{\mu}}\le \frac{1}{\sqrt{\mu}}.\]
    Thus, if $\mu \cdot \min_{i \in [d],i' \in [d] } \norm{\normtwov_i(\kb) - \normtwov_{i'}(\kb)}^2_2 \ge  4n(d(d-1) +t(\theta+n + n_d - d -1))$ an envy-free allocation exists. We also have that for every $i, i' \in [d]$:
    \begin{align*}
        0 & = \frac{4n(d^2+1 +t(\theta+n + n_d - d -1))}{\min_{i \in [d], i' \in [d]}\norm{\normtwov_i - \normtwov_{i'}}^2_2} \cdot\min_{i \in [d], i' \in [d]}\norm{\normtwov_i - \normtwov_{i'}}^2_2 - 4n(d^2 +1+t(\theta+n + n_d - d -1)) \\
        & = \mu\cdot\min_{i \in [d], i' \in [d]}\norm{\normtwov_i - \normtwov_{i'}}^2_2 - 4n(d^2 +1+t(\theta+n + n_d - d -1))\\
        &\le \mu\cdot \min_{i \in [d], i' \in [d]} \norm{\normtwov_i - \normtwov_{i'}}^2_2 - 4n(d(d-1) +t(\theta+n + n_d - d -1)) -6n \tag{$d \ge 2$}\\
        &\le \mu\cdot\min_{i \in [d], i' \in [d]}\norm{\normtwov_i - \normtwov_{i'}}^2_2  - 4n(d(d-1) +t(\theta+n + n_d - d -1))  \\
        &\hspace{30pt}- (n-1)(\min_{i \in [d], i' \in [d]}\norm{\normtwov_i - \normtwov_{i'}}_2 +1)(\min_{i \in [d], i' \in [d]}\norm{\normtwov_i - \normtwov_{i'}}_2)\tag{$\min_{i \in [d], i' \in [d]}\norm{\normtwov_i - \normtwov_{i'}}_2  \le 2$, using the triangle inequality}\\
        &\le \mu\left( \sqrt{\frac{\mu}{\mu+n-1}} \cdot \min_{i \in [d], i' \in [d]}\norm{\normtwov_i - \normtwov_{i'}}_2 - \left(1-\sqrt{\frac{\mu}{\mu + n-1}}\right)\right)^2 \\
        &\hspace{30pt}- 4n(d(d-1) +t(\theta+n + n_d - d -1)). \tag{\Cref{lemma: function lower bound}, $\mu \ge \frac{(n-1)\cdot (\min_{i \in [d], i' \in [d]}\norm{\normtwov_i - \normtwov_{i'}}^2_2+1)}{\min_{i \in [d], i' \in [d]}\norm{\normtwov_i - \normtwov_{i'}}^2_2}$}\\
        &\le \mu \cdot \min_{i \in [d], i' \in [d]}\norm{\normtwov_i(\kb) - \normtwov_{i'}(\kb)}^2_2 - 4n(d(d-1) +t(\theta+n + n_d - d -1)).\tag{\Cref{lemma: copies lemma goods}}
    \end{align*}

    Combining the above, we get that an envy-free allocation exists.
\end{proof}

As an immediate corollary, we can get the following, simpler bound for the case of $d=n$ (i.e., if there is exactly one agent in each group).

\begin{corollary}\label{corollary: bound for d = n}
    If $d = n$, and $k_z \ge \frac{4n^3}{\min_{i \in [d], i' \in [d]}\norm{\normtwov_i - \normtwov_{i'}}^2_2}$ for all $z \in [t]$ then an envy-free allocation exists.
\end{corollary}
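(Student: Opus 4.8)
The plan is to derive this as a direct specialization of~\Cref{application: goods copies} to the case $d = n$. First I would observe that $d = n$ means every group is a singleton, so $n_i = 1$ for all $i \in [d]$. This immediately gives $n_d = 1$ and $g = \gcd(1,\dots,1) = 1$, so the divisibility condition $k_z \equiv 0 \pmod g$ from~\Cref{application: goods copies} becomes $k_z \equiv 0 \pmod 1$, which holds vacuously for every integer $k_z$ and thus imposes no constraint.

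Next I would evaluate the threshold from~\Cref{dfn: theta} under these substitutions: $\theta = g\bigl(\tfrac{n_1}{g}-1\bigr)\bigl(\tfrac{n_d}{g}-1\bigr) = (1-1)(1-1) = 0$. The crucial simplification is then to compute the $t$-dependent term in the numerator of the bound in~\Cref{application: goods copies}, namely $t(\theta + n + n_d - d - 1)$. Substituting $\theta = 0$, $n_d = 1$, and $d = n$ yields $t(0 + n + 1 - n - 1) = t \cdot 0 = 0$, so this term vanishes entirely. What remains in the numerator is $2n \cdot d^2 = 2n \cdot n^2 = 2n^3$.

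Putting these together, the sufficient condition $k_z \ge \frac{2n(d^2 + t(\theta + n + n_d - d - 1))}{\min_{i,i' \in [d]} \norm{\normtwov_i - \normtwov_{i'}}_2^2}$ from~\Cref{application: goods copies} collapses exactly to $k_z \ge \frac{2n^3}{\min_{i,i' \in [d]} \norm{\normtwov_i - \normtwov_{i'}}_2^2}$, which is the hypothesis of the corollary, while the second hypothesis (divisibility by $g$) is automatically satisfied. Invoking~\Cref{application: goods copies} then gives the existence of an envy-free allocation, completing the proof.

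There is no real obstacle here: the entire argument is a bookkeeping substitution, and the only point requiring any care is verifying that the $t$-dependent term cancels to zero rather than leaving a residual $t$-dependence—which is precisely what makes the single-agent bound independent of the number of item types $t$.
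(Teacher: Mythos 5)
Your proposal is correct and is exactly the specialization the paper intends: with $d=n$ all groups are singletons, so $g=1$ makes the divisibility condition vacuous, $\theta=0$, and $t(\theta+n+n_d-d-1)=t(0+n+1-n-1)=0$, collapsing the bound of~\Cref{application: goods copies} to $2n^3$. The paper states this as an immediate corollary without further argument, and your bookkeeping matches it.
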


\Cref{application: goods copies} implies that there exists a $\mu \in O( t n^3 / (g \eta) )$, where $\min_{i,i' \in [d]}\norm{\normtwov_i - \normtwov_{i'}}^2_2 \geq \eta$, such that if $k_z \geq \mu$, and $k_z \equiv 0 \pmod{g}$, for all $z \in [t]$, then an envy-free allocation exists. \cite{gorantla2023fair} prove that, for $d=2$, $\mu \in O(n^2 \sqrt{t}/(g \delta))$ and $\mu \in \Omega( n^2 / (g \delta) )$, when the two valuations functions are $\delta$-far from being identical; this means that the angle $\rho$ between the two valuations is at least $\delta$. For $t=2$,~\cite{gorantla2023fair} prove that $\mu \in O(n^2 /(g \delta))$. Using the fact that $\norm{\normtwov_i - \normtwov_{i'}}^2_2 = \norm{\normtwov_i }^2_2+\norm{\normtwov_{i'}}^2_2 - 2\norm{\normtwov_i }_2\cdot\norm{\normtwov_{i'}}_2 \cdot  \cos \delta = 2-2\cos \delta \ge \frac{4}{\pi^2} \delta^2$ we can get that our results imply that $\mu \in O(t n^3 / (g \delta^2) )$ for arbitrary $d, t$.  Interestingly, in~\Cref{corollary: bound for d = n} we see that $\mu \in O(n^3/ \eta)$ (or $\mu \in O(n^3/ \delta^2)$ using the $\delta$-far notion of~\cite{gorantla2023fair}), so there is no dependence on the number of item types $t$.

\section{Extensions}

In this section, we study various extensions of our technique to related problems in fair division.

\subsection{Envy-free Allocations of Chores} \label{sec: chores copies}

In this section, we study the problem of allocating a set of $m$ chores to $d$ groups of agents with additive preferences. 
Missing proofs can be found in \Cref{app: missing from EF chores}.

We define the \textsc{Log-Relative Norm} mechanism, which, given costs $c_1(\kb), \dots, c_d(\kb)$ outputs an allocation $x^{\choremech}$ such that every agent of group $i \in [d]$ and item $j \in \items$ we have
\[
x^{\choremech}_{i,j} = \frac{\ln \normc_{i,j}(\kb)}{n \ln \normc_{min}(\kb)}+\frac{1}{n}-\frac{ \ln \left( \prod_{i' \in [d]}\left(\normc_{i',j}(\kb) \right)^{n_{i'}} \right)}{n^2 \ln \normc_{min}(\kb)}.
\]
Notice that this is the same mechanism as the \textsc{Relative Norm} mechanism as defined in \Cref{sec: envy freeness}, but instead of $\normtwov$ we use $- \ln \normc$ as a proxy for an agent's disutility for a chore.
We start by showing that this mechanism outputs a feasible allocation.

\begin{lemma}
    The \textsc{Log-Relative Norm} mechanism is feasible: $x^{\choremech}_{i,j} \ge 0$ for all $i \in [d]$ and $j \in \items$, and $\sum_{i \in [d]} n_i \cdot x^{\choremech}_{i,j} = 1$ for all $j \in \items$.
\end{lemma}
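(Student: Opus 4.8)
The plan is to observe that the \textsc{Log-Relative Norm} mechanism is exactly the \textsc{Relative Norm} mechanism of \Cref{sec: envy freeness} under the substitution $\normtwov_{i,j}(\kb) \mapsto -\ln \normc_{i,j}(\kb)$ and $\normtwov_{max}(\kb) \mapsto -\ln \normc_{min}(\kb)$, so that the feasibility argument of \Cref{lemma: relative norm feasibility} transfers almost verbatim. First I would rewrite the last term using $\ln \prod_{i' \in [d]} \big(\normc_{i',j}(\kb)\big)^{n_{i'}} = \sum_{i' \in [d]} n_{i'} \ln \normc_{i',j}(\kb)$, so that the mechanism reads
\[
x^{\choremech}_{i,j} = \frac{\ln \normc_{i,j}(\kb)}{n \ln \normc_{min}(\kb)} + \frac{1}{n} - \frac{\sum_{i' \in [d]} n_{i'} \ln \normc_{i',j}(\kb)}{n^2 \ln \normc_{min}(\kb)},
\]
which matches the \textsc{Relative Norm} formula term-for-term after the substitution above.

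The one preliminary fact I would establish is the sign and normalization bound. Since every normalized cost satisfies $0 < \normc_{i,j}(\kb) \le 1$, and (because there are $m \ge 2$ items each carrying strictly positive cost) the normalized costs sum to $1$ over at least two positive entries, we get $\normc_{min}(\kb) < 1$. Hence $\ln \normc_{min}(\kb) < 0$ and $\ln \normc_{min}(\kb) \le \ln \normc_{i,j}(\kb) \le 0$ for all $i, j$. Dividing through by the \emph{negative} quantity $\ln \normc_{min}(\kb)$ (which reverses inequalities) yields the key estimate $0 \le \tfrac{\ln \normc_{i,j}(\kb)}{\ln \normc_{min}(\kb)} \le 1$, the exact analogue of $0 \le \normtwov_{i,j}(\kb)/\normtwov_{max}(\kb) \le 1$ used in the goods case.

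Given this, nonnegativity follows as in \Cref{lemma: relative norm feasibility}: dropping the nonnegative first term and bounding each ratio $\tfrac{\ln \normc_{i',j}(\kb)}{\ln \normc_{min}(\kb)} \le 1$ gives
\[
x^{\choremech}_{i,j} \;\ge\; \frac{1}{n} - \frac{1}{n^2} \sum_{i' \in [d]} n_{i'} \frac{\ln \normc_{i',j}(\kb)}{\ln \normc_{min}(\kb)} \;\ge\; \frac{1}{n} - \frac{1}{n^2}\sum_{i' \in [d]} n_{i'} \;=\; 0.
\]
For the budget constraint, the constant terms contribute $\sum_{i \in [d]} n_i \cdot \tfrac{1}{n} = 1$, and after using $\sum_{i' \in [d]} n_{i'} = n$ to merge the first and third terms, the remaining contribution is
\[
\frac{1}{n^2 \ln \normc_{min}(\kb)} \sum_{i \in [d]} \sum_{i' \in [d]} n_i n_{i'} \big(\ln \normc_{i,j}(\kb) - \ln \normc_{i',j}(\kb)\big) = 0,
\]
where the double sum vanishes by antisymmetry under swapping $i \leftrightarrow i'$. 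Hence $\sum_{i \in [d]} n_i \, x^{\choremech}_{i,j} = 1$ for every $j \in \items$.

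The only genuine subtlety — and the step I would flag — is the sign bookkeeping: because $\ln \normc_{min}(\kb)$ is negative, one must carefully track that dividing by it reverses inequalities, which is precisely what makes the ratio bound land in $[0,1]$ and the nonnegativity telescoping work out. A secondary point worth stating explicitly is the mild nondegeneracy assumption $\normc_{min}(\kb) < 1$ (equivalently, $m \ge 2$ with positive costs, so that cost is not concentrated on a single item), which is needed so that $\ln \normc_{min}(\kb) \neq 0$ and the mechanism is well-defined.
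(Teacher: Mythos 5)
Your proof is correct and follows essentially the same route as the paper's: bound each ratio $\ln \normc_{i',j}(\kb)/\ln \normc_{min}(\kb)$ by $1$ to get nonnegativity, and kill the cross terms in the budget sum by antisymmetry of $\ln \normc_{i,j}(\kb) - \ln \normc_{i',j}(\kb)$ under $i \leftrightarrow i'$. Your explicit sign bookkeeping and the nondegeneracy remark ($\normc_{min}(\kb) < 1$ so the denominator is nonzero) are welcome clarifications that the paper leaves implicit, but they do not change the argument.
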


\begin{proof}
    \begin{align*}
        x^{\choremech}_{i,j} & = \frac{\ln \normc_{i,j}(\kb)}{n \ln \normc_{min}(\kb)}+\frac{1}{n}-\frac{\ln \left( \prod_{i' \in [d]}\left(\normc_{i',j}(\kb) \right)^{n_{i'}} \right)}{n^2 \ln \normc_{min}(\kb)} \\
        & \\
        &\ge \frac{1}{n}-\frac{1}{n^2}\sum_{i' \in [d]} n_{i'} \frac{\ln \normc_{i',j}(\kb)}{\ln \normc_{min}(\kb)} \tag{$\frac{\ln \normc_{i,j}(\kb)}{n\ln \normc_{min}(\kb)}\ge 0$}\\
        &\ge \frac{1}{n}-\frac{\sum_{i' \in [d]} n_{i'}}{n^2} =0 \tag{$|\ln \normc_{i,j}(\kb)| \le |\ln \normc_{min}(\kb)|$}
    \end{align*}

    \begin{align*}
        \sum_{i \in [d]} n_i \cdot x^{\choremech}_{i,j} &= \sum_{i \in [d]} n_i \cdot \left( \frac{\ln \normc_{i,j}(\kb)}{n \ln \normc_{min}(\kb)}+\frac{1}{n}-\frac{\ln \left( \prod_{i' \in [d]}\left(\normc_{i',j}(\kb) \right)^{n_{i'}} \right)}{n^2 \ln \normc_{min}(\kb)}. \right) \\
        &= \frac{\sum_{i \in [d]} n_i}{n}+ \sum_{i \in [d]} n_i \left( \frac{ \sum_{i' \in [d]} n_{i'} \ln \left(\normc_{i,j}(\kb) \right)}{n^2 \ln \normc_{min}(\kb)}- \frac{ \sum_{i' \in [d]} n_{i'} \ln \left(\normc_{i',j}(\kb) \right)}{n^2 \ln \normc_{min}(\kb)}\right)\\
        &= 1+ \frac{1}{n^2 \ln \normc_{min}(\kb)} \underbrace{\sum_{i \in [d]} \sum_{i' \in [d]} n_{i'} n_i\left(\ln \normc_{i,j}(\kb) -  \ln \normc_{i',j}(\kb) \right)}_{0} = 1.
    \end{align*}
\end{proof}

Recall that $\cgap_{i,i'}(x) = c_i(x'_i) - c_i(x_{i})$ is the advantage of an agent in $i$ for an agent in $i'$ in allocation $x$. The next lemma bounds $\cgap_{i,i'}(x^{\choremech})$ as a function of the KL-divergence between $\normc_i$ and $\normc_{i'}$.

\begin{lemma}\label{lemma: chores gap equility}
    For all $i, i' \in [d]$, $\cgap_{i,i'}(x^{\choremech}) = \frac{\norm{c_i(\kb)}_1}{- n \ln \normc_{min}(\kb)} D_{KL}(\normc_i(\kb) || \normc_{i'}(\kb))$.
\end{lemma}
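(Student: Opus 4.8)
The plan is to follow the same template as the lower-bound computation for goods (\Cref{lem: lower bound on gap}), exploiting the fact that the \textsc{Log-Relative Norm} mechanism is structurally identical to the \textsc{Relative Norm} mechanism with $-\ln\normc_{i,j}(\kb)$ playing the role of a (dis)utility proxy and $-\ln\normc_{min}(\kb)$ the role of the normalizing maximum. First I would write $\cgap_{i,i'}(x^{\choremech}) = \sum_{j \in \items} c_{i,j}(\kb)\,\bigl(x^{\choremech}_{i',j} - x^{\choremech}_{i,j}\bigr)$, using additivity of costs. The point of forming this difference is that only the group-dependent summand of the mechanism survives: the constant $1/n$ term and the aggregate term $\tfrac{\sum_{i'' \in [d]} n_{i''}\ln\normc_{i'',j}(\kb)}{n^2\ln\normc_{min}(\kb)}$ are identical for groups $i$ and $i'$, so they cancel exactly as they did in the feasibility lemma. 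This leaves $x^{\choremech}_{i',j} - x^{\choremech}_{i,j} = \tfrac{\ln\normc_{i',j}(\kb) - \ln\normc_{i,j}(\kb)}{n\ln\normc_{min}(\kb)}$.

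Next I would substitute $c_{i,j}(\kb) = \norm{c_i(\kb)}_1 \cdot \normc_{i,j}(\kb)$ and pull the scalar $\norm{c_i(\kb)}_1 / \bigl(n\ln\normc_{min}(\kb)\bigr)$ out of the sum, obtaining $\cgap_{i,i'}(x^{\choremech}) = \tfrac{\norm{c_i(\kb)}_1}{n\ln\normc_{min}(\kb)} \sum_{j \in \items} \normc_{i,j}(\kb)\,\bigl(\ln\normc_{i',j}(\kb) - \ln\normc_{i,j}(\kb)\bigr)$. The remaining sum is exactly $-\sum_{j \in \items}\normc_{i,j}(\kb)\ln\tfrac{\normc_{i,j}(\kb)}{\normc_{i',j}(\kb)} = -D_{KL}(\normc_i(\kb) || \normc_{i'}(\kb))$, since $\normc_i(\kb)$ and $\normc_{i'}(\kb)$ are $\ell_1$-normalized and hence genuine probability mass functions over $\items$. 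Folding the minus sign into the denominator yields the claimed identity $\cgap_{i,i'}(x^{\choremech}) = \tfrac{\norm{c_i(\kb)}_1}{-n\ln\normc_{min}(\kb)}\,D_{KL}(\normc_i(\kb) || \normc_{i'}(\kb))$.

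This is a direct computation rather than a genuinely hard argument, so the only delicate points are bookkeeping. The main thing to get right is the sign convention: since $\normc_{min}(\kb) \le 1$ we have $\ln\normc_{min}(\kb) \le 0$, so the extra minus sign in the denominator is precisely what makes the right-hand side nonnegative, consistent with $D_{KL} \ge 0$ and with the fact that $\cgap$ should be nonnegative for a good mechanism. I would also make explicit that the ``$\prod_{i' \in [d]}\ln(\normc_{i',j}(\kb))^{n_{i'}}$'' appearing in the mechanism's definition is to be read as $\ln\prod_{i' \in [d]}\normc_{i',j}(\kb)^{n_{i'}} = \sum_{i' \in [d]} n_{i'}\ln\normc_{i',j}(\kb)$ (which is how it is used in the feasibility proof); this is what makes the aggregate term linear in the logarithms and therefore cancel cleanly across the two groups.
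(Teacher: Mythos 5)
Your proposal is correct and follows essentially the same route as the paper: expand $\cgap_{i,i'}$, observe that the group-independent terms of the mechanism cancel in the difference $x^{\choremech}_{i',j}-x^{\choremech}_{i,j}$, factor out $\norm{c_i(\kb)}_1$, and recognize the remaining sum as $\pm D_{KL}(\normc_i(\kb)\,||\,\normc_{i'}(\kb))$. Your sign bookkeeping is in fact slightly more careful than the paper's intermediate displays (which transiently write the difference with the opposite orientation before landing on the same final identity), and your reading of the $\prod_{i''}\ln(\cdot)^{n_{i''}}$ term as $\sum_{i''} n_{i''}\ln(\cdot)$ matches how the paper uses it in the feasibility lemma.
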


 \begin{proof}
    \begin{align*}
        \cgap_{i,i'}(x^{\choremech}) &= \sum_{j \in \items} c_{i,j}(\kb)(x^{\choremech}_{i',j} -x^{\choremech}_{i,j}) \\
        &= \norm{c_i(\kb)}_1  \sum_{j \in \items} \normc_{i,j}(\kb) \left(\left( \frac{\ln \normc_{i,j}(\kb)}{n \ln \normc_{min}(\kb)}+\frac{1}{n}-\frac{\ln \left( \prod_{i'' \in [d]}\left(\normc_{i'',j}(\kb) \right)^{n_{i''}} \right)}{n^2 \ln \normc_{min}(\kb)}\right) \right. \\ 
        & \hspace{110pt}-\left. \left( \frac{\ln \normc_{i',j}(\kb)}{n \ln \normc_{min}(\kb)}+\frac{1}{n}-\frac{\ln \left( \prod_{i'' \in [d]}\left(\normc_{i'',j}(\kb) \right)^{n_{i''}} \right)}{n^2 \ln \normc_{min}(\kb)} \right) \right) \\
        &= \frac{\norm{c_i(\kb)}_1}{n \ln \normc_{min}(\kb)}  \sum_{j \in \items} \normc_{i,j}(\kb)\ln \left(\frac{\normc_{i,j}(\kb)}{\normc_{i',j}(\kb)} \right) \\
        &= \frac{\norm{c_i(\kb)}_1}{- n \ln \normc_{min}(\kb)} D_{KL}(\normc_i(\kb) || \normc_{i'}(\kb)).
    \end{align*}
\end{proof}

Similarly to previous results in this paper, we next consider a linear program for maximizing $\cgap$, this time for $\ell_1$-normalized costs, and then argue that this LP has a good integral solution.

    \begin{equation}\label{LP_gap_chores}
        \begin{array}{ll@{}ll}
        \text{maximize}  & \alpha &\\
        \text{subject to}& \alpha \leq \sum_{j \in \items} \normc_{i,j}(\kb) (x_{i',j} - x_{i,j}) \hspace{5mm}& \forall i, i' \in [d], i' \neq i \\
        & \sum_{i=1}^{n} n_{i} \cdot x_{i,j} \leq 1  &\forall j \in \items\\
        & \alpha, x_{i,j} \geq 0 & \forall i \in [d], \forall j \in \items
        \end{array}
    \end{equation}

\begin{lemma}\label{lemma: rounding envy chores}
If $k_z \ge \theta$ and $k_z \equiv 0 \pmod{g}$ for all $z \in [t]$ then there exists an integral allocation $A$ such that, for all $i \in [d]$, $\min_{ i' \in [d]} \cgap_{i,i'}(A) \ge \norm{c_i(\kb)}_1 \cdot \alpha^* - 2(d(d-1) +t(\theta+n + n_d - d -1))\max_{j \in [t]} c_{i,j}(\kb)$, where $\alpha^*$ is the optimal value of LP~\eqref{LP_gap_chores}.
\end{lemma}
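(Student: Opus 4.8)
The plan is to mirror the proof of \Cref{lemma: rounding envy goods} almost verbatim, since the rounding procedure is purely combinatorial and does not depend on whether the items are goods or chores; only the normalization ($\ell_1$ instead of $\ell_2$) and the sign convention of the gap change. First I would take a vertex (basic) optimal solution $x^*$ of LP~\eqref{LP_gap_chores}. The LP has $dm+1$ variables and $d(d-1) + m + dm + 1$ constraints, of which $dm$ are the non-negativity constraints $x_{i,j}\ge 0$. The same counting argument shows that at most $m + d(d-1)$ of these are slack, and after discarding fully-shared columns (without loss of generality some group gets a strictly positive fraction of every item, else assign $1/n$ uniformly), at most $d(d-1)$ items are fractionally split among two or more groups.

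Second, I would run the identical three-phase rounding. Let $B_{i,z}$ be the total fraction of type-$z$ items that group $i$ holds under $x^*$, and $S_z$ a surplus pool initialized to $0$. Phase 1 removes the (at most $d(d-1)$) shared items into the surplus; Phase 2 reduces each $B_{i,z}$ to a multiple of $n_i$, pushing the remainders $r_{i,z}\in[0,n_i-1]$ into $S_z$; Phase 3 tops up each $S_z$ to at least $\theta$ in increments of $n_i$ drawn from positive $B_{i,z}$. The same accounting gives $\sum_z S_z \le d(d-1) + t(n-d) + t(\theta + n_d - 1) = d(d-1) + t(\theta + n + n_d - d - 1)$, and since $k_z\equiv 0 \pmod g$ forces $S_z \equiv 0 \pmod g$ with $S_z\ge\theta$, \Cref{corollary: frobenius coin problem} lets me write each $S_z = \sum_{i\in[d]} y_{i,z}\,n_i$ with $y_{i,z}\in\mathbb{Z}_{\ge 0}$, so the surplus is distributed integrally while every agent within a group gets the same bundle. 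The resulting allocation $A$ is complete, and each group receives at most $d(d-1) + t(\theta + n + n_d - d - 1)$ items beyond its $x^*$ holdings.

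Third comes the only place where the chore sign convention must be handled carefully. Recall $\cgap_{i,i'}(x) = c_i(\kb, x_{i'}) - c_i(\kb, x_i)$ and the LP constraint is $\alpha \le \sum_{j\in\items} \normc_{i,j}(\kb)(x_{i',j}-x_{i,j})$, so $\cgap_{i,i'}(x^*) = \norm{c_i(\kb)}_1\sum_{j\in\items} \normc_{i,j}(\kb)(x^*_{i',j}-x^*_{i,j}) \ge \norm{c_i(\kb)}_1\,\alpha^*$. Passing from $x^*$ to $A$, group $i$ acquires at most $d(d-1)+t(\theta+n+n_d-d-1)$ additional chores, so $c_i(\kb,A_i)$ grows by at most that many items times $\max_{j\in[t]} c_{i,j}(\kb)$; symmetrically the bundle of any other group loses at most that amount of cost as valued by $i$, so $c_i(\kb,A_{i'})$ drops by at most the same quantity. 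Hence $\cgap_{i,i'}$ can decrease by at most twice $(d(d-1)+t(\theta+n+n_d-d-1))\max_{j\in[t]}c_{i,j}(\kb)$, which combined with $\cgap_{i,i'}(x^*)\ge \norm{c_i(\kb)}_1\alpha^*$ yields the claimed bound for every $i,i'$, and hence for $\min_{i'\in[d]}$.

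The main obstacle is not a deep one: it is simply to get the direction of the two-sided cost change right (chores \emph{added} to $i$ raise $c_i(\kb,A_i)$ while chores \emph{removed} from $i'$ lower $c_i(\kb,A_{i'})$, both eroding $\cgap$), and to confirm that redistributing the surplus $S_z$ can only move items, never create or destroy cost mass, so the per-group item-transfer count is the only quantity that matters. Everything else transfers unchanged from the goods argument.
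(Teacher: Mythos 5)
Your proposal is correct and follows essentially the same route as the paper's own proof: a vertex solution of LP~\eqref{LP_gap_chores} with at most $d(d-1)$ shared items, the identical three-phase rounding into multiples of $n_i$ plus a surplus pool resolved via \Cref{corollary: frobenius coin problem}, and the two-sided accounting of the cost transfer. Your explicit care with the sign convention for $\cgap$ is a welcome clarification of a step the paper treats more tersely (and you correctly write $\norm{c_i(\kb)}_1$ where the paper's appendix has a stray $\norm{c_i(\kb)}_2$).
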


The proof of the lemma is almost identical to that of \Cref{lemma: rounding envy goods} and is deferred to the appendix.

\begin{theorem} \label{theorem: Conditions for envy freeness chores}

    If $k_z \ge \theta$, $k_z \equiv 0 \pmod{g}$ for all $z \in [t]$, and $\normc_{max}(\kb) \leq  \frac{ \min_{\hat{i}, \tilde{i} \in [d]} D_{KL}(\normc_{\hat{i}}(\kb) || \normc_{\tilde{i}}(\kb))}{ 2n(d(d-1) +t(\theta+n + n_d - d -1)) \ln \normc_{min}(\kb)^{-1}}$, then an envy-free allocation exists.
\end{theorem}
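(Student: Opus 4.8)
The plan is to mirror the proof of \Cref{theorem: Conditions for envy freeness} for goods, substituting the chore-specific ingredients established above. First I would let $\alpha^*$ denote the optimal value of LP~\eqref{LP_gap_chores} and invoke \Cref{lemma: rounding envy chores} to obtain an integral allocation $A$ satisfying $\min_{i' \in [d]} \cgap_{i,i'}(A) \ge \norm{c_i(\kb)}_1 \cdot \alpha^* - 2(d(d-1)+t(\theta+n+n_d-d-1))\max_{j \in \items} c_{i,j}(\kb)$ for every $i \in [d]$.

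Second, I would lower bound $\alpha^*$ using feasibility of $x^{\choremech}$. The $i,i'$ constraint of LP~\eqref{LP_gap_chores} reads $\alpha \le \sum_{j \in \items} \normc_{i,j}(\kb)(x_{i',j}-x_{i,j}) = \cgap_{i,i'}(x)/\norm{c_i(\kb)}_1$, so the feasible allocation $x^{\choremech}$ certifies $\alpha^* \ge \min_{\hat{i},\tilde{i} \in [d]} \cgap_{\hat{i},\tilde{i}}(x^{\choremech})/\norm{c_{\hat{i}}(\kb)}_1$. Applying \Cref{lemma: chores gap equility} then yields $\alpha^* \ge \min_{\hat{i},\tilde{i} \in [d]} \frac{D_{KL}(\normc_{\hat{i}}(\kb) || \normc_{\tilde{i}}(\kb))}{-n \ln \normc_{min}(\kb)}$.

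Third, combining the two displays gives, for every $i \in [d]$, the bound $\min_{i' \in [d]} \cgap_{i,i'}(A) \ge \frac{\norm{c_i(\kb)}_1}{-n \ln \normc_{min}(\kb)}\min_{\hat{i},\tilde{i} \in [d]} D_{KL}(\normc_{\hat{i}}(\kb) || \normc_{\tilde{i}}(\kb)) - 2(d(d-1)+t(\theta+n+n_d-d-1))\max_{j \in \items} c_{i,j}(\kb)$. A chore allocation is envy-free exactly when $\cgap_{i,i'}(A) \ge 0$ for all ordered pairs, so it suffices to force this right-hand side to be nonnegative. Dividing through by $\norm{c_i(\kb)}_1$ and using $\max_{j \in \items} c_{i,j}(\kb)/\norm{c_i(\kb)}_1 = \max_{j \in \items}\normc_{i,j}(\kb) \le \normc_{max}(\kb)$, together with $\ln \normc_{min}(\kb)^{-1} = -\ln \normc_{min}(\kb)$, converts the nonnegativity requirement into precisely the hypothesized upper bound on $\normc_{max}(\kb)$; rearranging yields the theorem.

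The computations are routine and largely transcribe the goods argument, but one point demands care: the asymmetry of the KL divergence. Because $D_{KL}$ is not symmetric, I must keep the ordered pair consistent throughout, and it is exactly the $i$-first ordering produced by \Cref{lemma: chores gap equility} that makes $\min_{\hat{i},\tilde{i}} D_{KL}(\normc_{\hat{i}}(\kb) || \normc_{\tilde{i}}(\kb))$ in the statement the correct quantity rather than a symmetrized divergence. The accompanying sign bookkeeping for $\ln \normc_{min}(\kb)$ — namely that $-\ln \normc_{min}(\kb) \ge 0$, so the denominators are positive and the bound has the right orientation — is already guaranteed by the same normalization facts used in the feasibility lemma for the \textsc{Log-Relative Norm} mechanism, and is the analog of the $\ell_2$-normalization identity that powered the goods proof.
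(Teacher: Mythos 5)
Your proposal is correct and follows essentially the same route as the paper's proof: invoke \Cref{lemma: rounding envy chores} for the integral allocation, lower bound $\alpha^*$ via feasibility of $x^{\choremech}$ together with \Cref{lemma: chores gap equility}, and translate nonnegativity of the resulting gap into the stated bound on $\normc_{max}(\kb)$. Your added remarks on the KL asymmetry and the sign of $\ln \normc_{min}(\kb)$ are consistent with (and slightly more careful than) the paper's write-up.
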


\begin{proof}
Let $\alpha^*$ be the optimal value of LP~\eqref{LP_gap_chores}; \Cref{lemma: rounding envy chores} implies that there exists an integral solution $A$ such that, for all $i \in [d]$ , $\min_{ i' \in [d]} \cgap_{i,i'}(A) > \norm{c_i(\kb)}_1 \cdot \alpha^* - 2(d(d-1) +t(\theta+n + n_d - d -1))\max_{j \in \items} c_{i,j}(\kb)$.
$x^{\choremech}$ is a feasible solution to LP~\eqref{LP_gap_chores}, therefore, $\alpha^* \geq \min_{\hat{i}, \tilde{i} \in [d]} \frac{\cgap_{\hat{i}, \tilde{i}}(x^{\choremech})}{\norm{c_i(\kb)}_1}$, which, by~\Cref{lemma: chores gap equility}, is at least $\min_{\hat{i}, \tilde{i} \in [d]} \frac{D_{KL}(\normc_{\hat{i}}(\kb) || \normc_{\tilde{i}}(\kb))}{- n \ln \normc_{min}(\kb)} $. Putting everything together, we have that for all $i \in [d]$, $\min_{i' \in [d]} \cgap_{i,i'}(A) \ge \norm{c_i(\kb)}_1\min_{\hat{i}, \tilde{i} \in [d]} \frac{D_{KL}(\normc_{\hat{i}}(\kb) || \normc_{\tilde{i}}(\kb))}{- n \ln \normc_{min}(\kb)} -  2(d(d-1) +t(\theta+n + n_d - d -1)) \max_{j \in \items} c_{i,j}$. Therefore, if, for all $i \in [d]$ , $$\max_{j \in \items} \normc_{i,j}(\kb) \leq \min_{\hat{i}, \tilde{i} \in [d]} \frac{D_{KL}(\normc_{\hat{i}}(\kb) || \normc_{\tilde{i}}(\kb))}{ 2n(d(d-1) +t(\theta+n + n_d - d -1) \ln \normc_{min}(\kb)^{-1}},$$ then $A$ is envy-free (since all $\gap_{i,i'}(A)$ are non-negative).
\end{proof}

\begin{corollary}
    If $d=n$ and  $\normc_{max}(\kb) \leq \min_{\hat{i}, \tilde{i} \in [d]} \frac{D_{KL}(\normc_{\hat{i}}(\kb) || \normc_{\tilde{i}}(\kb))}{ 2n^3 \ln \normc_{min}(\kb)^{-1}}$, then an envy-free allocation exists.
\end{corollary}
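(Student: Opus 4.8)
The plan is to obtain this corollary as a direct specialization of \Cref{theorem: Conditions for envy freeness chores} to the case $d = n$, exactly paralleling how \Cref{corollary: Conditions for envy freeness with single agents groups} is derived in the goods setting. When $d = n$ every group is a singleton, so $n_i = 1$ for all $i \in [d]$, and I would begin by recording what this forces on the group-structure parameters.

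First I would compute $g$ and $\theta$. Since $n_1 = \cdots = n_d = 1$, we have $g = \gcd(1,\dots,1) = 1$, and hence by \Cref{dfn: theta},
\[
\theta \;=\; g\Big(\tfrac{n_1}{g}-1\Big)\Big(\tfrac{n_d}{g}-1\Big) \;=\; 1\cdot(1-1)(1-1) \;=\; 0 .
\]
Consequently the first two hypotheses of \Cref{theorem: Conditions for envy freeness chores} hold automatically for every $z \in [t]$: the condition $k_z \ge \theta$ holds because $\theta = 0$, and $k_z \equiv 0 \pmod{g}$ holds because $g = 1$ divides every integer. Thus only the bound on $\normc_{max}(\kb)$ remains to be checked.

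Next I would simplify the threshold appearing in \Cref{theorem: Conditions for envy freeness chores}. Substituting $d = n$, $\theta = 0$, and $n_d = 1$, its two constituent terms become $d(d-1) = n(n-1)$ and $t(\theta + n + n_d - d - 1) = t(0 + n + 1 - n - 1) = 0$. Hence the theorem's threshold collapses to
\[
\frac{\min_{\hat{i},\tilde{i} \in [d]} D_{KL}(\normc_{\hat{i}}(\kb)\,||\,\normc_{\tilde{i}}(\kb))}{2n\cdot n(n-1)\,\ln \normc_{min}(\kb)^{-1}}
\;=\;
\frac{\min_{\hat{i},\tilde{i} \in [d]} D_{KL}(\normc_{\hat{i}}(\kb)\,||\,\normc_{\tilde{i}}(\kb))}{2n^2(n-1)\,\ln \normc_{min}(\kb)^{-1}} .
\]

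Finally I would reconcile the factor $2n^2(n-1)$ above with the $2n^3$ stated in the corollary. Since $2n^2(n-1) \le 2n^3$, the corollary's hypothesis caps $\normc_{max}(\kb)$ by the \emph{smaller} quantity $\frac{\min D_{KL}}{2n^3 \ln \normc_{min}(\kb)^{-1}}$, which is in turn at most the theorem's threshold $\frac{\min D_{KL}}{2n^2(n-1)\,\ln \normc_{min}(\kb)^{-1}}$; the theorem's hypothesis is therefore satisfied and an envy-free allocation exists. There is no genuine obstacle here beyond bookkeeping: the one point deserving a moment's care is verifying that the inequality $2n^2(n-1)\le 2n^3$ points in the direction that makes the corollary's (stricter) bound sufficient for the theorem, rather than the reverse.
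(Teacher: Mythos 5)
Your proposal is correct and matches the paper's (implicit) derivation: the paper states this corollary as an immediate specialization of \Cref{theorem: Conditions for envy freeness chores} with $n_i=1$, $g=1$, $\theta=0$, exactly as you compute, with the $2n^3$ simply being a clean upper bound on $2n^2(n-1)$ that only tightens the hypothesis. Your verification that the divisibility and $k_z\ge\theta$ conditions become vacuous, and that the inequality $2n^2(n-1)\le 2n^3$ points in the right direction, is exactly the bookkeeping needed.
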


Before proving the main result of this section, we prove two useful technical lemmas.

\begin{lemma}\label{lemma: copies lemma chores}
    For any pair $i, i' \in [d]$, $$D_{KL}(\normc_i(\kb) || \normc_{i'}(\kb)) \ge \frac{\alpha}{\beta} D_{KL}(\normc_i || \normc_{i'}) - \left(\ln \left( \frac{\beta}{\alpha} \right) + \frac{\beta-\alpha}{\alpha}\ln \left(\normc_{min}\right)^{-1} \right),$$ where $k_z \in [\alpha, \beta]$ for each $z \in [t]$.
\end{lemma}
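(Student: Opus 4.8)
The plan is to exploit the fact that, under $\ell_1$-normalization, increasing the multiplicities only rescales every coordinate of a group's cost vector by a single group-dependent constant. Writing $\lambda_i \coloneqq \norm{c_i}_1 / \norm{c_i(\kb)}_1$, for a single copy of type $z$ we have $\normc_{i,z}(\kb) = c_{i,z}/\norm{c_i(\kb)}_1 = \lambda_i \normc_{i,z}$, and since $\alpha \le k_z \le \beta$ gives $\alpha\norm{c_i}_1 \le \norm{c_i(\kb)}_1 \le \beta\norm{c_i}_1$, we obtain $\lambda_i \in [1/\beta,\,1/\alpha]$. Grouping the divergence by item type (each type $z$ contributes $k_z$ identical copies) and substituting $\normc_{i,z}(\kb)=\lambda_i\normc_{i,z}$, I would first establish the exact identity
\[
D_{KL}(\normc_i(\kb) || \normc_{i'}(\kb)) = \sum_{z=1}^t k_z\,\normc_{i,z}(\kb)\ln\frac{\normc_{i,z}(\kb)}{\normc_{i',z}(\kb)} = \lambda_i\sum_{z=1}^t k_z\,T_z + \ln\frac{\lambda_i}{\lambda_{i'}},
\]
where $T_z \coloneqq \normc_{i,z}\ln\!\big(\normc_{i,z}/\normc_{i',z}\big)$, using that $\lambda_i\sum_z k_z\normc_{i,z}=1$; note $\sum_z T_z = D_{KL}(\normc_i || \normc_{i'})$.

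From here the cross term is immediate: since $\lambda_i/\lambda_{i'}\ge(1/\beta)/(1/\alpha)=\alpha/\beta$, we get $\ln(\lambda_i/\lambda_{i'}) \ge -\ln(\beta/\alpha)$, which accounts for the first part of the additive loss. For the remaining term $\lambda_i\sum_z k_z T_z$, the crucial move is to peel off $\alpha$ from each $k_z$ \emph{before} using the crude bounds on $\lambda_i$: writing $k_z = \alpha + (k_z-\alpha)$,
\[
\lambda_i\sum_z k_z T_z = \alpha\lambda_i\sum_z T_z + \lambda_i\sum_z (k_z-\alpha)T_z \ge \tfrac{\alpha}{\beta}\,D_{KL}(\normc_i || \normc_{i'}) + \lambda_i\sum_z (k_z-\alpha)T_z,
\]
since $\alpha\lambda_i\ge\alpha/\beta$ and $\sum_z T_z = D_{KL}(\normc_i || \normc_{i'})\ge0$. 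The residual is controlled by the pointwise lower bound $\ln(\normc_{i,z}/\normc_{i',z})\ge\ln\normc_{i,z}\ge\ln\normc_{min}$ (valid because $\normc_{i',z}\le1$), so $T_z \ge \normc_{i,z}\ln\normc_{min}$ for every $z$; multiplying by $\lambda_i(k_z-\alpha)\ge0$, summing, and using $\lambda_i\le 1/\alpha$, $k_z-\alpha\le\beta-\alpha$, $\sum_z\normc_{i,z}=1$, and $\ln\normc_{min}<0$ yields
\[
\lambda_i\sum_z (k_z-\alpha)T_z \ge \ln\normc_{min}\cdot\lambda_i\sum_z(k_z-\alpha)\normc_{i,z} \ge \tfrac{\beta-\alpha}{\alpha}\ln\normc_{min} = -\tfrac{\beta-\alpha}{\alpha}\ln(\normc_{min})^{-1}.
\]
Combining the three bounds gives exactly the claimed inequality.

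The main obstacle is obtaining the precise coefficient $\tfrac{\beta-\alpha}{\alpha}$ on the $\ln(\normc_{min})^{-1}$ term. The naive route of bounding the combined factor $\lambda_i k_z \in[\alpha/\beta,\beta/\alpha]$ and comparing directly against $\tfrac{\alpha}{\beta}T_z$ leaks a strictly larger constant ($\tfrac{\beta}{\alpha}-\tfrac{\alpha}{\beta}$), and would fail to be tight in the regime $\beta<2\alpha$ (in particular it would not collapse to equality when $\alpha=\beta$, i.e.\ when all multiplicities coincide and the two divergences must be equal). Peeling $\alpha$ off first is what keeps the leading term clean at $\tfrac{\alpha}{\beta}D_{KL}(\normc_i || \normc_{i'})$ and confines the lossy $\ln\normc_{min}$ estimate to the small residual weight $k_z-\alpha$, which is precisely where the factor $\tfrac{\beta-\alpha}{\alpha}$ originates. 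Everything else is a routine manipulation of the $f$-divergence identity and elementary monotonicity of the logarithm.
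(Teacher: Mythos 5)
Your proof is correct and follows essentially the same route as the paper's: both factor out the normalization ratio (your $\lambda_i = 1/\gamma_i$), lower-bound the resulting cross term $\ln(\lambda_i/\lambda_{i'})$ by $-\ln(\beta/\alpha)$, isolate an $\alpha\cdot D_{KL}(\normc_i\,||\,\normc_{i'})$ contribution, and absorb the residual weight of at most $\beta-\alpha$ per type via $\ln\normc_{min}$. Your decomposition $k_z=\alpha+(k_z-\alpha)$ is just a mild rearrangement of the paper's step of bounding the positive and negative log terms by $\beta$ and $\alpha$ separately and regrouping; the ingredients and the resulting constants are identical.
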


\begin{lemma}\label{lemma:technical inequality chores}
    \[x\left(\frac{x}{x+a}b- \ln \frac{x+a}{x} \right) -c\ln(x+a) \ge \frac{b}{2}x - \left(a\left(\frac{3}{2}b+1\right) + c\left(\ln\left(\frac{2c}{b}\right) - 1 \right) \right) \]where $x,a,b,c \ge 0$
\end{lemma}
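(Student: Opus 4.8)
The plan is to prove the inequality by bounding the left-hand side term by term, deliberately reserving half of the leading linear contribution $bx$ in order to absorb the only genuinely dangerous term, the logarithm $-c\ln(x+a)$, which tends to $-\infty$ as $x\to\infty$. Since the statement already features $\ln(2c/b)$, I treat $b>0$ as implicit, and I handle $c=0$ by the convention $c\ln(2c/b)=0$ (in which case the $-c\ln(x+a)$ term vanishes and every step below goes through unchanged).

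First I would rewrite $\frac{x}{x+a}=1-\frac{a}{x+a}$, so that, using $\frac{x}{x+a}\le 1$,
\[
x\cdot\frac{x}{x+a}\,b=bx-\frac{abx}{x+a}\ge bx-ab .
\]
Next, from $\ln(1+t)\le t$ applied with $t=a/x$ I get $x\ln\frac{x+a}{x}\le a$, hence $-x\ln\frac{x+a}{x}\ge -a$. Combining the two estimates, the bracketed part obeys
\[
x\left(\frac{x}{x+a}\,b-\ln\frac{x+a}{x}\right)\ge bx-ab-a .
\]

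The crux is the term $-c\ln(x+a)$. Here I would invoke the tangent-line (Young-type) inequality $\ln y\le sy-\ln s-1$, valid for all $y,s>0$, which follows because $y\mapsto sy-\ln s-1-\ln y$ is convex with minimum value $0$ attained at $y=1/s$. The key move is the calibrated choice of slope $s=\frac{b}{2c}$: with $y=x+a$ it gives $\ln(x+a)\le \frac{b}{2c}(x+a)+\ln\frac{2c}{b}-1$, and multiplying by $c>0$,
\[
-c\ln(x+a)\ge -\frac{b}{2}x-\frac{ab}{2}-c\ln\frac{2c}{b}+c .
\]
Adding this to the previous display yields
\[
x\left(\frac{x}{x+a}\,b-\ln\frac{x+a}{x}\right)-c\ln(x+a)\ge \frac{b}{2}x-a\Big(\tfrac{3}{2}b+1\Big)-c\Big(\ln\tfrac{2c}{b}-1\Big),
\]
which is exactly the claimed bound.

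The only real obstacle is that $-c\ln(x+a)$ is unbounded below, so no uniform additive constant can work unless part of the linear term is spent dominating it; recognizing that one should split $bx=\tfrac b2 x+\tfrac b2 x$ and pick the \emph{specific} slope $s=b/(2c)$ (rather than any other value) is precisely what makes the residual constant collapse to exactly $c(\ln(2c/b)-1)$. Everything else is the routine estimation above, and the degenerate boundary cases ($x=0$, $a=0$, or $c=0$) are verified directly using the limits $x\ln\frac{x+a}{x}\to 0$ and $c\ln c\to 0$ as the relevant argument tends to $0$.
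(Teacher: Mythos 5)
Your proof is correct and follows essentially the same route as the paper: both first reduce the bracketed term to $bx-a(b+1)$ via $\ln(1+a/x)\le a/x$ and $\tfrac{x}{x+a}\le 1$, and both then control $-c\ln(x+a)$ by a tangent-line bound calibrated at $x+a=2c/b$ (the paper phrases this as convexity of $f(x)=bx-c\ln(x+a)$, you as the Fenchel--Young inequality for $\ln$, which is the same tangent line). Your explicit treatment of the degenerate cases $b=0$ and $c=0$ is a small bonus the paper omits.
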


We are now ready to prove the main theorem for this section: a condition on the number of copies needed from each item to guarantee that an envy-free allocation for chores exists. The proof follows the same structure as the proof of \Cref{application: goods copies} and is deferred to the appendix.

\begin{theorem}\label{application: chores copies}
    If $k_z \ge 2\frac{n+ \left( \frac{5}{2}n+\lambda-1 \right)\ln \left(\normc_{min}\right)^{-1} + \lambda\left(\ln\left(\frac{2\lambda}{\min_{\hat{i}, \tilde{i} \in [d]} D_{KL}(\normc_{\hat{i}} || \normc_{\tilde{i}})}\right)-1 \right) }{\min_{\hat{i}, \tilde{i} \in [d]} D_{KL}(\normc_{\hat{i}} || \normc_{\tilde{i}})}$ and $k_z \equiv 0 \pmod{g}$ for all $z \in [t]$, an envy-free allocation exists, where $\lambda = 2n(d(d-1) +t(\theta+n + n_d - d -1))$.
\end{theorem}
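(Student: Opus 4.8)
The plan is to mirror the proof of \Cref{application: goods copies}, replacing \Cref{theorem: Conditions for envy freeness} by its chores analogue \Cref{theorem: Conditions for envy freeness chores}, and \Cref{lemma: copies lemma goods}, \Cref{lemma: function lower bound} by \Cref{lemma: copies lemma chores}, \Cref{lemma:technical inequality chores}, respectively. Write $\mu$ for the right-hand side of the stated bound and $D^* \coloneqq \min_{\hat i, \tilde i \in [d]} D_{KL}(\normc_{\hat i} || \normc_{\tilde i})$. As in the goods case, I would first argue that it suffices to treat instances with $k_z \in [\mu, \mu + n - 1]$ for every $z$: whenever $k_z$ exceeds this window I hand one copy of type $z$ to every agent and repeat, which leaves envy unaffected and preserves $k_z \equiv 0 \pmod g$ since copies are removed in batches of $n$ (and $g \mid n$). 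It then remains to verify the hypothesis of \Cref{theorem: Conditions for envy freeness chores}, i.e.\ that $\normc_{max}(\kb)\,\lambda\,\ln \normc_{min}(\kb)^{-1} \le \min_{\hat i, \tilde i} D_{KL}(\normc_{\hat i}(\kb) || \normc_{\tilde i}(\kb))$, where $\lambda = 2n(d(d-1)+t(\theta+n+n_d-d-1))$.

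Next I would control each of the three $\kb$-dependent quantities separately, using that a type-$z$ cost scales as $c_{i,z}$ per copy, so $\norm{c_i(\kb)}_1 = \sum_z k_z c_{i,z}$. Since $k_z \ge \mu$, normalizing gives $\normc_{max}(\kb) \le \normc_{max}/\mu \le 1/\mu$; since $k_z \le \mu + n - 1$, we get $\normc_{min}(\kb) \ge \normc_{min}/(\mu + n - 1)$, hence $\ln \normc_{min}(\kb)^{-1} \le \ln(\mu + n - 1) + \ln \normc_{min}^{-1}$. For the numerator I would invoke \Cref{lemma: copies lemma chores} with $\alpha = \mu$ and $\beta = \mu + n - 1$ to obtain, for every pair, $D_{KL}(\normc_i(\kb) || \normc_{i'}(\kb)) \ge \tfrac{\mu}{\mu+n-1} D^* - \ln\tfrac{\mu+n-1}{\mu} - \tfrac{n-1}{\mu}\ln \normc_{min}^{-1}$, and take the minimum over pairs.

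Assembling these bounds, the target inequality reduces (after multiplying through by $\mu$) to showing
\[
\mu\Big(\tfrac{\mu}{\mu+n-1}D^* - \ln\tfrac{\mu+n-1}{\mu}\Big) - \lambda\ln(\mu+n-1) \ \ge\ (\lambda + n - 1)\ln \normc_{min}^{-1}.
\]
Here \Cref{lemma:technical inequality chores} applied with $x = \mu$, $a = n-1$, $b = D^*$, $c = \lambda$ lower-bounds the left-hand side by $\tfrac{D^*}{2}\mu - \big((n-1)(\tfrac32 D^* + 1) + \lambda(\ln\tfrac{2\lambda}{D^*} - 1)\big)$. Bounding the stray $D^*$ factors via the elementary estimate $D^* \le \ln \normc_{min}^{-1}$ (valid since $\normc_{i',j}\ge \normc_{min}$ in every KL summand) folds $\tfrac32(n-1)D^*$ into the logarithmic budget, and rearranging the resulting linear-in-$\mu$ inequality yields a threshold that the stated $\mu$ dominates term by term (the statement's constant $n$ and coefficient $\tfrac52 n + \lambda - 1$ are slightly more conservative than the minimal $n-1$ and $\tfrac52 n + \lambda - \tfrac52$, hence sufficient).

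The main obstacle, and the reason this proof is genuinely harder than its goods counterpart, is that the effective ``per-item cost'' factor is now $\normc_{max}(\kb)\,\ln \normc_{min}(\kb)^{-1}$ rather than a single normalized value: the logarithmic factor \emph{grows} like $\ln \mu$ as the number of copies increases, whereas in the goods case $\normtwov_{max}(\kb) \le 1/\sqrt{\mu}$ decayed cleanly. Consequently I must show that the linear-in-$\mu$ gain $\tfrac12 \mu D^*$ overtakes a $\lambda \ln(\mu + n - 1)$ penalty, which is precisely the delicate trade-off isolated in \Cref{lemma:technical inequality chores}; getting the constants to line up with the stated bound—in particular tracking the $\ln \normc_{min}^{-1}$ contributions arising both from the denominator of \Cref{theorem: Conditions for envy freeness chores} and from \Cref{lemma: copies lemma chores}—is the only fiddly part.
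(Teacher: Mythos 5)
Your proposal is correct and follows essentially the same route as the paper's own proof in \Cref{app: missing from EF chores}: reduce to $k_z \in [\mu, \mu+n-1]$, bound $\normc_{max}(\kb) \le 1/\mu$ and $\normc_{min}(\kb) \ge \normc_{min}/(\mu+n-1)$, apply \Cref{lemma: copies lemma chores} with $\alpha=\mu$, $\beta=\mu+n-1$ and \Cref{lemma:technical inequality chores} with $x=\mu$, $a=n-1$, $b=D^*$, $c=\lambda$, and absorb the stray terms via $D^* \le \ln\normc_{min}^{-1}$. Your accounting of the constants (that $n$ and $\tfrac52 n+\lambda-1$ dominate the minimal $n-1$ and $\tfrac52 n+\lambda-\tfrac52$) matches what the paper's chain of inequalities implicitly verifies.
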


\Cref{application: chores copies} implies that there exists a $\mu \in O\left( \frac{tn^3\left(\ln \left(\normc_{min}\right)^{-1} + \ln \left( \frac{tn^3}{g \eta}\right) \right) }{g \eta} \right)$, where $\min_{\hat{i}, \tilde{i} \in [d]} D_{KL}(\normc_{\hat{i}} || \normc_{\tilde{i}}) \geq \eta$, such that if $k_z \geq \mu$, and $k_z \equiv 0 \pmod{g}$, for all $z \in [t]$, then an envy-free allocation exists.

\subsection{Envy-free Cake Cutting}\label{section: cake cutting}

In this section, we consider the classic problem of cake-cutting in the query model of Robertson and Webb. 

In this setting, each group has one agent ($n_1=\dots=n_d=1$) and every agent $i \in \agents$ has a valuation function $U_i$ that assigns a non-negative value to any subinterval of $[0,1]$. We can think of these values as induced by a density $u_i$, i.e., for an interval $I = [a,b]$, $U_i(I) = \int_{x \in I} u_i(x) dx$. In the standard model, valuations are normalized, i.e., $U_i([0,1]) = \int_{x \in [0,1]} u_i(x) dx = 1$, additive, i.e., for a set of disjoint intervals $I_1, \dots, I_z$, $U_i (\cup_{j=1}^z I_j) = \sum_{j=1}^z U_i(I_j)$, and non-atomic, i.e., $\forall x, y \in [0,1], \lambda \in [0,1], \exists z \in [x, y]$ where $U_i([x, z]) = \lambda \cdot U_i([x,y])$. We will additionally assume that (i) for all $i \in \agents$, $u_i$ is $k$-Lipschitz continuous, and (ii) for all $i, i' \in \agents$, $\norm{u_i - u_{i'}}_2 \geq \sqrt{\delta}$,  where $\norm{u_i}_2 = \sqrt{ \int_{x \in [0,1]} u_i(x)^2 dx }$ and $\delta$ is a positive constant.

In the Robertson and Webb~\cite{robertson1998cake} model (henceforth, RW) there are two types of queries: (i) $\textsc{Eval}_i(I)$: given an agent $i$ and an interval $I \subseteq [0,1]$, return $U_i(I)$, and (ii) $\textsc{Cut}_i(x, z)$: given an agent $i$, a point $x \in [0,1]$, and a value $z \in [0,1]$, return the smallest $y \in [x, 1]$ such that $U_i([x, y]) = z$.

It will be convenient to work with $\normtwou_i(x) = \frac{u_i(x)}{\norm{u_i}_2}$. By definition, $\int_{x \in [0,1]} \normtwou_i(x)^2 dx = 1$. Furthermore, $\normtwou_i$ is $\frac{k}{\norm{u_i}_2}$-Lipschitz continuous; since $\norm{u_i}_2 \geq 1$ (by the Cauchy-Schwarz inequality), $\normtwou_i$ is $k$-Lipschitz continuous, for all $i \in \agents$. Similarly, for all $i, i' \in \agents$, $\norm{\normtwou_i - \normtwou_{i'}}_2 \geq \sqrt{\delta}$.

\begin{lemma}\label{lemma: Cake upper bound}
For all $i \in \agents$, $\normtwou_i(x) \le \max\{ (8k)^{1/3}, 2 \}$, for all $x \in [0,1]$.
\end{lemma}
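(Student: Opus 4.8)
The plan is to argue by contradiction, exploiting the tension between the pointwise Lipschitz bound and the normalization $\int_0^1 \normtwou_i(x)^2\,dx = 1$: if the density peaked too high at some point, then $k$-Lipschitz continuity would force it to remain large on an entire neighborhood of that point, overshooting the unit integral.

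First I would let $M = \max_{x \in [0,1]} \normtwou_i(x)$, which is attained at some $x_0 \in [0,1]$ since $\normtwou_i$ is continuous on a compact interval. Assume toward a contradiction that $M > \max\{(8k)^{1/3}, 2\}$; equivalently, $M > 2$ and $M^3 > 8k$ simultaneously. Using that $\normtwou_i$ is $k$-Lipschitz, I get $\normtwou_i(x) \ge M - k|x - x_0| \ge M/2 > 0$ for every $x$ in the interval $[x_0 - r, x_0 + r]$, where $r := M/(2k)$. Hence on the intersection $J := [x_0 - r, x_0 + r] \cap [0,1]$ the density stays at least $M/2$, so $\normtwou_i(x)^2 \ge M^2/4$ there.

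Next I would show that $|J| > 4/M^2$; note that $M^3 > 8k$ rearranges exactly to $r > 4/M^2$. There are two regimes to handle, and this is where the precise constants in $\max\{(8k)^{1/3},2\}$ earn their keep. If $r \ge 1$, then $J = [0,1]$ has length $1$, and $1 > 4/M^2$ because $M > 2$. If $r < 1$, then since $x_0 \in [0,1]$ at least one side of $x_0$ contributes a full subinterval of length $r$ contained in $[0,1]$ (the worst case being a boundary peak $x_0 \in \{0,1\}$), so $|J| \ge r > 4/M^2$. In either case $|J| > 4/M^2$.

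Combining the two bounds gives $1 = \int_0^1 \normtwou_i(x)^2\,dx \ge \int_J \normtwou_i(x)^2\,dx \ge \tfrac{M^2}{4}\,|J| > \tfrac{M^2}{4}\cdot\tfrac{4}{M^2} = 1$, a contradiction, which establishes the claimed bound. I expect the main obstacle to be precisely the boundary effect: when the peak $x_0$ sits at an endpoint of $[0,1]$ the Lipschitz ``tent'' is cut in half, so one can only guarantee a one-sided neighborhood of length $r$ rather than $2r$. This one-sided loss is what inflates the naive constant (around $3$ or $4$ obtained from integrating a full two-sided tent) up to the factor $8$, while the separate threshold $2$ is exactly what is needed to cover the small-$k$ regime $r \ge 1$, in which the tent would otherwise extend beyond $[0,1]$ and the first regime of the case split applies.
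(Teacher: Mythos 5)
Your proof is correct and uses essentially the same idea as the paper's: a Lipschitz ``tent'' of height $\Theta(M)$ and width $\Theta(M/k)$ (or constant width when $k$ is small) around the maximizer makes $\int_0^1 \normtwou_i(x)^2\,dx = 1$ impossible if $M$ exceeds the stated bound; the paper runs the argument one-sidedly after assuming WLOG $x^* \le 1/2$ and splits on $k \lessgtr 2\sqrt{2}$, while you intersect a two-sided tent with $[0,1]$ and argue by contradiction, splitting on $r \lessgtr 1$. One small nitpick: when $1/2 < r < 1$ and $1-r < x_0 < r$, neither one-sided interval of length $r$ fits inside $[0,1]$ as you claim, but then $J = [0,1]$ and $|J| = 1 > 4/M^2$ since $M > 2$, so your conclusion $|J| > 4/M^2$ still holds in every case.
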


\begin{proof}
Let $z^* = max_{x \in [0,1]} \normtwou_i(x)$, and let $x^*$ be such that $\normtwou_i(x^*) = z^*$. Then, $\normtwou_i(x) \geq z^* - k | x - x^*|$. Wlog, $x^* \leq 1/2$; let $I = [x^*, x^* + 1/2]$. For all $x \in I$, it holds that $\normtwou_i(x) \geq z^* - \frac{k}{2}$. We have that $1 = \int_{x \in [0,1]} \normtwou_i(x)^2 dx \geq \int_{x \in I} (z^* - \frac{k}{2})^2 dx = \frac{1}{2} (z^* - \frac{k}{2})^2$. Re-arranging, we have that
$z^* \leq \frac{k}{2} + \sqrt{2}$.
For $k \in [1,2\sqrt{2}]$, it is easy to confirm that $\frac{k}{2} + \sqrt{2}  \leq (8k)^{1/3}$, and for $k \leq 1$, we have that $z^* \leq \frac{1}{2} + \sqrt{2} < 2$.

Otherwise, if $k > 2\sqrt{2}$, we have that $\frac{z^*}{2k} \leq \frac{1}{4} + \frac{\sqrt{2}}{2k} \leq \frac{1}{2}$. Then, $x^* + \frac{z^*}{2k} \leq 1$. Let $I' = [x^*, x^* + \frac{z^*}{2k}]$. Repeating the same argument as above, we have that for all $x \in I'$, $\normtwou_i(x) \geq z^* - k \frac{z^*}{2k} = \frac{z^*}{2}$. We have $1 = \int_{x \in [0,1]} \normtwou_i(x)^2 dx \geq \int_{x \in I'} (\frac{z^*}{2})^2 dx = \frac{z^*}{2k} \cdot \frac{(z^*)^2}{4} = \frac{(z^*)^3}{8k}$. Re-arranging, we have that $z^* \leq (8k)^{1/3}$.
\end{proof}

We are ready to prove our main result for this section.

\begin{theorem}\label{thm: cake cutting application}
  If (i) for all $i \in \agents$, $u_i$ is $k$-Lipschitz continuous, and (ii) for all $i, i' \in \agents$, $\norm{u_i - u_{i'}}_2 \geq \sqrt{\delta}$, then we can find a strongly envy-free allocation using at most $\max\{ n \, k , \frac{4n^4 (\max\{ (8k)^{1/3}, 2 \})^2}{\sqrt{(3.5k)^2 + 8 n^3 (\max\{ (8k)^{1/3}, 2 \})^2 \delta} - 3.5k} \}$ \textsc{Eval} queries (and no \textsc{Cut} queries), in the RW model.
 \end{theorem}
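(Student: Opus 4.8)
The plan is to reduce the continuous problem to the discrete framework of \Cref{sec: envy freeness} by cutting the cake into many equal pieces and treating each piece as an indivisible item type with a single copy, for an instance with $d=n$ singleton groups.

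\textbf{Step 1 (discretization and query budget).} I would fix a resolution $m$ and partition $[0,1]$ into the equal intervals $I_1,\dots,I_m$ with $I_j=[(j-1)/m,\,j/m]$. The cut points $j/m$ are data-independent, so they cost nothing; I then issue one query $\textsc{Eval}_i(I_j)$ per agent $i\in\agents$ and piece $j$, learning $V_{i,j}=U_i(I_j)$ with a total of $nm$ \textsc{Eval} queries and no \textsc{Cut} queries. The crucial point is that an allocation of \emph{whole} pieces satisfies $U_i(A_i)=\sum_{j\in A_i}V_{i,j}$ exactly by additivity, so an allocation that is (strongly) envy-free for the discrete values $V_{i,j}$ is automatically (strongly) envy-free for the continuous valuations $U_i$. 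This is what lets me invoke the existence condition of \Cref{theorem: Conditions for envy freeness} (equivalently \Cref{corollary: Conditions for envy freeness with single agents groups}) on the discretized instance.

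\textbf{Step 2 (translating the hypotheses through the mesh).} Since envy comparisons are scale-free, I would work with the normalized per-piece masses $W_{i,j}=\int_{I_j}\normtwou_i(x)\,dx$, for which $\normtwov_{i,j}=W_{i,j}/\norm{W_i}_2$ and the factor $\norm{u_i}_2$ cancels. Two estimates do the work. First, \Cref{lemma: Cake upper bound} gives $\normtwou_i(x)\le\max\{(8k)^{1/3},2\}=:L$, hence $W_{i,j}\le L/m$, which caps the maximum normalized piece value. Second, $k$-Lipschitzness bounds the $L_2$ distance between $\normtwou_i$ and its piecewise-average step function by $O(k/m)$; this yields a lower bound $\norm{W_i}_2\ge(1-k/m)/\sqrt m$ and, via the triangle inequality in $L_2$ (including the extra $O(k/m)$ perturbation incurred when the step function is renormalized), the preservation of dissimilarity $\min_{i,i'}\norm{\normtwov_i-\normtwov_{i'}}_2^2\ge(\sqrt\delta-O(k/m))^2$. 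Intuitively, refining the mesh shrinks each normalized item value while leaving the pairwise $\ell_2$ distances essentially intact.

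\textbf{Step 3 (apply the condition and solve for $m$).} Feeding the bound on the maximum normalized value and the lower bound on the pairwise distances into the existence condition reduces envy-freeness to a single inequality in $m$, whose competing terms are a dissimilarity term of order $\delta m^2$, a Lipschitz-error term linear in $m$ (the origin of the $3.5k$ coefficient), and a term of order $n^3L^2$ coming from the $d=n$ rounding penalty against the capped piece value $L/m$. Isolating the least $m$ for which the inequality holds is exactly a quadratic-root computation and produces the closed form in the statement once multiplied by the $n$ queries per piece; the branch $nk$ of the outer maximum simply enforces $m\ge k$, the regime in which the Lipschitz estimates of Step 2 are meaningful.

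The main obstacle is Step 2 together with the final algebra: one must establish clean Lipschitz perturbation bounds, \emph{with the correct constants}, simultaneously for the maximum normalized piece value and for every pairwise $\ell_2$ distance, and then check that substituting them collapses the existence condition to precisely the claimed expression rather than merely the right order. The need to $\ell_2$-normalize (so that the $\norm{u_i}_2$ scaling cancels in the envy comparison) is what makes the distance-preservation estimate delicate, since renormalizing the step-function approximation perturbs it by an additional $O(k/m)$ that must be tracked carefully.
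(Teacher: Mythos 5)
Your proposal follows essentially the same route as the paper's proof: discretize into equal pieces, use \Cref{lemma: Cake upper bound} to cap the normalized piece value by $L/m$, establish a Lipschitz perturbation bound showing the renormalized discrete $\ell_2$ distances lose only $O(k/m)$ (this is exactly the paper's \Cref{lemma: Discrete cake eucledian bound}, with constant $\tfrac{7k\epsilon}{2}$), and then feed both into \Cref{corollary: Conditions for envy freeness with single agents groups} and solve the resulting quadratic in the mesh size. The one piece you defer --- the careful constant-tracking through the step-function approximation and its renormalization --- is precisely the content of the paper's appendix lemma, so your sketch is structurally complete and matches the paper's argument.
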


 \begin{proof}
     First, we can cut the cake into $\frac{1}{\epsilon}$ pieces, each of length $\epsilon$. \Cref{lemma: Cake upper bound} implies that the $\ell_2$ normalized value of each agent, for any piece, is at most $\epsilon \, \max\{ (8k)^{1/3}, 2\}$. We can find the exact value of each agent for each piece using $n / \epsilon$ \textsc{Eval} queries; we will prove that, under the specified condition on the valuations, there is a strongly envy-free allocation of the pieces.                           

    We have assumed a lower bound on $\int_{0}^{1}(\normtwou_i(x)-\normtwou_{i'}(x))^2dx$, the squared $\ell_2$ distance between any pair of agents (recall that $\norm{\normtwou_i - \normtwou_{i'}}_2 \geq \sqrt{\delta}$). We would like to bound the difference of $\int_{0}^{1}(\normtwou_i(x)-\normtwou_{i'}(x))^2dx$ with its ``discretized'' version (w.r.t. the pieces we created). Specifically, we have that $\int_{\epsilon j}^{\epsilon(1+j)}\normtwou_{i}(x)dx$ is the real ($\ell_2$-normalized) value of agent $i$ for piece $j$. Treating pieces as indivisible items, $\sqrt{\sum_{j'=0}^{1/\epsilon-1} \left(\int_{\epsilon j'}^{ \epsilon ( j'+1)}\normtwou_i(x)dx\right)^2}$ is the $\ell_2$ norm of the ($\ell_2$-normalized) valuation vector of agent $i$; importantly, this is no longer equal to $1$, and therefore we must normalize again! Let $\frac{\int_{\epsilon j}^{\epsilon(1+j)}\normtwou_{i}(x)dx}{\sqrt{\sum_{j'=0}^{1/\epsilon-1} \left(\int_{\epsilon j'}^{ \epsilon ( j'+1)}\normtwou_i(x)dx\right)^2}}$ be the normalized value of agent $i$ for ``item'' $j$. Finally, let $G_{i,i'} = \sum_{j=0}^{\frac{1}{\epsilon}-1} \left( \frac{\int_{\epsilon j}^{\epsilon(1+j)}\normtwou_{i}(x)dx}{\sqrt{\sum_{j'=0}^{1/\epsilon-1} \left(\int_{\epsilon j'}^{ \epsilon ( j'+1)}\normtwou_i(x)dx\right)^2}} - \frac{\int_{\epsilon j}^{\epsilon(1+j)}\normtwou_{i'}(x)dx}{\sqrt{\sum_{j'=0}^{1/\epsilon} \left(\int_{\epsilon j'}^{ \epsilon ( j'+1)}\normtwou_{i'}(x)dx\right)^2}}   \right)^2$ be the squared $\ell_2$ distance between two valuation vectors with respect to the \emph{indivisible} items/pieces created. 
    
    To apply~\Cref{corollary: Conditions for envy freeness with single agents groups} we need a bound on $\min_{i, i' \in \agents} G_{i,i'}$. The following lemma, whose proof is deferred to~\Cref{app: missing from cake cutting}, gives us our desired bound.
     
     \begin{lemma} \label{lemma: Discrete cake eucledian bound}
For all $1/k \geq \epsilon>0$, and $i, i' \in \agents$,
    \[
    G_{i,i'} \geq \int_{0}^{1}(\normtwou_i(x)-\normtwou_{i'}(x))^2dx - \frac{7 k \epsilon}{2}.
    \]
\end{lemma}

\Cref{lemma: Discrete cake eucledian bound} implies that $\sqrt{\min_{i, i' \in \agents} G_{i,i'} } \ge \sqrt{\delta - 3.5 k\epsilon}$, for $\epsilon \leq 1/k$. Therefore, by \Cref{corollary: Conditions for envy freeness with single agents groups}, we have that if $\epsilon \, \max\{ (8k)^{1/3}, 2\} \leq \frac{\sqrt{\delta - 3.5 k\epsilon}}{2 n \sqrt{n}}$, or, equivalently, if $\epsilon \leq \min\{ \frac{1}{k}, \frac{\sqrt{(3.5k)^2 + 8 n^3 (\max\{ (8k)^{1/3}, 2 \})^2 \delta} - 3.5k}{4n^3 (\max\{ (8k)^{1/3}, 2 \})^2} \}$, then a strongly envy-free allocation (of ``whole'' pieces) exists.
 \end{proof}

By treating $k$ and $\delta$ as constants, we obtain the following corollary, which provides a more intuitive bound on the query complexity.

\begin{corollary}[of~\Cref{thm: cake cutting application}]\label{cor: cake}
  If (i) for all $i \in \agents$, $u_i$ is $k$-Lipschitz continuous, and (ii) for all $i, i' \in \agents$, $\norm{u_i - u_{i'}}_2 \geq \sqrt{\delta}$, for some constants $k$ and $\delta$, then we can find a strongly envy-free allocation in the RW model using $O(n \sqrt{n})$ \textsc{Eval} queries (and no \textsc{Cut} queries).
\end{corollary}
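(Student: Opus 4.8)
The statement is a corollary of \Cref{thm: cake cutting application}, so the plan is not to reprove existence but to specialize the explicit query bound already established there to the regime in which $k$ and $\delta$ are constants, and then read off the order of growth in $n$. First I would record the only structural fact needed: once $k=O(1)$, the quantity $\max\{(8k)^{1/3},2\}$ appearing throughout the theorem is itself $O(1)$, and likewise every factor depending on $k$ or $\delta$ is a constant. Consequently the theorem's bound, which is a maximum of two terms, contributes only its powers of $n$. The first term, $nk$, is plainly $O(n)$ and will be dominated, so all the attention goes to the second term.

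The single manipulation required is to expose the growth rate of that second term, whose denominator has the shape ``square root of a low-degree polynomial in $n$, minus a constant,'' a form that hides its order. I would clear it by multiplying numerator and denominator by the conjugate, so that the difference of a radical and a constant collapses (the constant squares cancel) into a clean monomial in $n$ in the denominator, leaving a single radical in the numerator whose argument is dominated by its top-degree term for large $n$. Extracting the leading powers of $n$ and absorbing all constant factors then simplifies the whole expression to $O(n\sqrt{n})$; the same computation simultaneously verifies that this term dominates $nk$, so the two-way maximum is governed by it. The ``no \textsc{Cut} queries'' half of the claim is inherited verbatim, since the allocation produced in \Cref{thm: cake cutting application} is computed purely from \textsc{Eval} queries on the $1/\epsilon$ uniform pieces.

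I do not anticipate any genuine obstacle here: the substantive work is all internal to the theorem — the discretization of the cake into equal pieces, the pointwise $\ell_2$ bound of \Cref{lemma: Cake upper bound}, the control of the discretization loss in the pairwise $\ell_2$ distances via \Cref{lemma: Discrete cake eucledian bound}, and the appeal to \Cref{corollary: Conditions for envy freeness with single agents groups} to certify strong envy-freeness of the integral allocation of whole pieces. What is left for the corollary is purely asymptotic bookkeeping on an already-explicit bound, with the conjugate rationalization being the only step that is not immediate.
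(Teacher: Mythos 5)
Your plan — specialize the theorem's explicit bound to constant $k,\delta$ and rationalize the denominator — is exactly the only available route, since the paper offers no argument for this corollary beyond ``treating $k$ and $\delta$ as constants.'' But the final step you describe does not produce the claimed exponent. Writing $C=\bigl(\max\{(8k)^{1/3},2\}\bigr)^2$, the conjugate multiplication gives
\[
\frac{4n^4 C}{\sqrt{(3.5k)^2+16n^3C\delta}-3.5k}
=\frac{4n^4C\left(\sqrt{(3.5k)^2+16n^3C\delta}+3.5k\right)}{16n^3C\delta}
=\frac{n\left(\sqrt{(3.5k)^2+16n^3C\delta}+3.5k\right)}{4\delta},
\]
and the radical is $\Theta(n^{3/2})$, so the whole expression is $\Theta(n^{5/2})=\Theta(n^2\sqrt{n})$, not $O(n\sqrt{n})$. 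The leftover factor of $n$ in front of the radical is not absorbable into constants. So if you actually carry out the computation you propose, you certify $O(n^2\sqrt{n})$ \textsc{Eval} queries, and the corollary as stated does not follow from the theorem as stated.

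The discrepancy is traceable inside the theorem's proof: the existence condition forces $\epsilon=\Theta(n^{-3/2})$, so the number of \emph{pieces} is $1/\epsilon=\Theta(n\sqrt{n})$, but the number of \textsc{Eval} queries is $n/\epsilon=\Theta(n^2\sqrt{n})$ because each of the $n$ agents must be evaluated on every piece. The $O(n\sqrt{n})$ in the corollary (and in the paper's introduction) matches the piece count, not the query count, so this appears to be an inconsistency in the paper itself rather than something you could repair by better bookkeeping. A correct write-up should either state the conclusion as $O(n^2\sqrt{n})$ queries, or flag that obtaining $O(n\sqrt{n})$ would require a different query-counting argument than the one the theorem provides. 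Your remaining points (the $nk$ term is dominated; no \textsc{Cut} queries are needed) are fine.
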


\subsection{Proportional Allocations of Goods and Chores}\label{sec: proportionality}

In this section, we study conditions under which proportional allocations exist for goods and chores. We also give applications of our existence results.

We will no longer assume that agents are organized into groups. This relaxation does not make the setting any more restrictive, as our bounds will not depend on the pairwise distances between agents. Consequently, the results remain meaningful even when different agents have identical or nearly identical valuations. Moreover, since proportionality is a significantly weaker notion of fairness than envy-freeness, we will not require that agents with identical valuations receive identical bundles. Treating each agent as an individual group therefore simplifies our exposition.

Throughout this section, we also assume that $\kb = \one$. While it is possible to replicate the analysis and derive slightly stronger bounds on the number of copies required to guarantee proportionality (as opposed to envy-freeness), such an analysis would closely mirror that of the previous sections and would not yield additional technical insights. Instead, our focus is on exploring how the connection between fair allocations and $f$-divergences can be leveraged to obtain results by drawing on the heavy machinery developed in information theory for these divergences. In particular, we employ the variational characterization of $f$-divergences to establish bounds that we then use to recover, up to constant factors, state-of-the-art results in stochastic fair division.

All missing proofs are deferred to~\Cref{app: missing from proportionality}.

\subsubsection{Allocating Goods via the Trading Post Mechanism}

We define $S_j=\frac{1}{n}\sum_{i \in \agents} \frac{v_{i,j}}{v_i(\items)}=\frac{1}{n}\sum_{i \in \agents} \normv_{i,j}$ as society's valuation for item $j \in \items$. Equivalently, $S_j$ is the average of agents' $\ell_1$-normalized valuations. The \textsc{Trading Post} mechanism outputs an allocation $x^{\trading}$ such that $x^{\trading}_{i,j} = \frac{\normv_{i,j}}{\sum_{i' \in \agents} \normv_{i',j}}$, for all $i \in \agents$ and $j \in \items$.

We start by observing that the utility of an agent in the Trading Post mechanism has a nice form, as a function of the $\chi^2$-divergence between her valuation and society's valuation.

\begin{lemma}
\label{lemma: utility under TP}
    For all $i \in \agents$, $\frac{v_i(x^{\trading}_i)}{v_i(\items)} = \frac{1}{n}\left(1+ D_{\chi ^2}(\normv_i || S) \right)$. 
\end{lemma}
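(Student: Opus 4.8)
The plan is to unfold both sides into sums over items and match them term by term; the whole statement collapses to a one-line algebraic identity once two normalization facts are in place.

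First I would write out the left-hand side directly from the definition of the mechanism. Since $v_{i,j} = v_i(\items)\,\normv_{i,j}$ and $x^{\trading}_{i,j} = \normv_{i,j}\big/\sum_{i' \in \agents}\normv_{i',j}$, we have
\[
v_i(x^{\trading}_i) = \sum_{j \in \items} v_{i,j}\, x^{\trading}_{i,j} = v_i(\items)\sum_{j \in \items} \frac{(\normv_{i,j})^2}{\sum_{i' \in \agents}\normv_{i',j}}.
\]
Dividing by $v_i(\items)$ and using the defining identity $\sum_{i' \in \agents}\normv_{i',j} = n\,S_j$ gives $\frac{v_i(x^{\trading}_i)}{v_i(\items)} = \frac{1}{n}\sum_{j\in\items} (\normv_{i,j})^2/S_j$, so it only remains to show that $\sum_{j\in\items} (\normv_{i,j})^2/S_j = 1 + D_{\chi^2}(\normv_i || S)$.

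Next I would expand the divergence from its definition, $D_{\chi^2}(\normv_i || S) = \sum_{j} (\normv_{i,j}-S_j)^2/S_j$, multiply out the square, and split into three sums. The quadratic term yields $\sum_j (\normv_{i,j})^2/S_j$, the cross term contributes $-2\sum_j \normv_{i,j}$, and the last term is $\sum_j S_j$. This is exactly where the two normalization facts enter: $\normv_i$ is $\ell_1$-normalized, so $\sum_j \normv_{i,j}=1$, and $S$ is an average of $\ell_1$-normalized vectors, hence itself a probability distribution with $\sum_j S_j = 1$. Substituting these gives $D_{\chi^2}(\normv_i || S) = \sum_j (\normv_{i,j})^2/S_j - 1$, which rearranges to the identity needed; combining with the display above completes the proof.

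There is no genuine obstacle here. The only point worth flagging is the (easy) observation that $S$ is a bona fide probability vector — this is what makes the linear cross-terms in the $\chi^2$ expansion cancel against the constant term to leave the clean additive $1$, and it is the only place the structure of $S$ (rather than an arbitrary positive denominator) is used.
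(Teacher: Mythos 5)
Your proposal is correct and follows essentially the same computation as the paper: both reduce the left-hand side to $\frac{1}{n}\sum_j (\normv_{i,j})^2/S_j$ and then use the $\ell_1$-normalizations $\sum_j \normv_{i,j} = \sum_j S_j = 1$ to complete the square into the $\chi^2$-divergence. The only difference is cosmetic — you expand the divergence forward while the paper inserts the cancelling terms $-2\normv_{i,j}+S_j$ and factors backward.
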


\begin{proof}

\begin{align*}
    \frac{v_i(x^{\trading}_i)}{v_i(\items)} &= \frac{1}{{v_i(\items)}} \sum_{j \in [m]} v_{i,j} \frac{\normv_{i,j}}{\sum_{i' \in \agents} \normv_{i',j}} \\
    &= \sum_{j \in [m]} \frac{(\normv_{i,j})^2}{\sum_{i' \in \agents} \normv_{i',j}} \\
    &= \frac{1}{n}\left(\sum_{j \in [m]}\frac{(\normv_{i,j})^2}{\frac{1}{n}\sum_{i' \in \agents} \normv_{i',j}}\right)\\
    &= \frac{1}{n}\left(1 + \sum_{j \in [m]}\frac{(\normv_{i,j})^2}{S_j} -2\normv_{i,j}+S_j\right) \tag{$\norm{\normv_i}_1 = \norm{S}_1 =1$}\\
    &= \frac{1}{n}\left(1 +\sum_{j \in [m]} \frac{(\normv_{i,j})^2 - 2\normv_{i,j}S_j +S^2_j}{S_j}  \right) \\
    &= \frac{1}{n}\left(1 +\sum_{j \in [m]} \frac{(\normv_{i,j}-S_j)^2}{S_j}  \right)\\
    &= \frac{1}{n}\left(1 + D_{\chi ^2}(\normv_i || S)\right).
\end{align*}
\end{proof}

Now, consider the following LP that maximizes the $\ell_1$-normalized minimum utility of an agent. An allocation is proportional if its value with respect to this LP is at least $1/n$. 

\begin{equation}\label{LP1}
\begin{array}{ll@{}ll}
\text{maximize}  & \alpha &\\
\text{subject to}& \alpha \le \sum_{j=1}^{m} \normv_{i,j} x_{i,j} \hspace{5mm}& \forall i \in \agents\\
& \sum_{i=1}^{n} x_{i,j} \leq 1  &\forall j \in \items\\
& \alpha, x_{i,j} \geq 0 & \forall i \in \agents, \forall j \in \items
\end{array}
\end{equation}

The next lemma allows us to argue about the existence of an integral allocation that has high objective value (with respect to the LP above).

\begin{lemma}\label{lemma: rounding prop allocation}[Essentially Theorem 3 of~\cite{eckart2024fairness}]
    Let $x^*$ be an optimal solution to LP~\eqref{LP1}. There exists an integral allocation $A$, such that, for all agents $i \in \agents$, $v_i(A_i) \geq v_i(x^*_i) - \max_{j \in \items} v_{i,j}$. 
\end{lemma}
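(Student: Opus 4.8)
The plan is to treat this as an instance of the standard ``lose at most one item'' rounding for fractional assignments (which is essentially the content of Theorem~3 of~\cite{eckart2024fairness}) and to reprove it by iterative rounding on the bipartite support graph of $x^*$. First I would form the bipartite graph $G$ on the vertex sets $\agents$ and $\items$, placing an edge $(i,j)$ whenever $x^*_{i,j}$ is strictly fractional, i.e. $0 < x^*_{i,j} < 1$; items with $x^*_{i,j}=1$ are simply handed to $i$ and contribute their full value to both sides of the claimed inequality, while items with $x^*_{i,j}=0$ are ignored. Throughout I would maintain the item-feasibility invariant $\sum_{i} x_{i,j} \le 1$, which is one-sided and therefore permits rounding fractional mass either up or down.

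Next I would argue that $G$ may be taken acyclic. Since $x^*$ is a basic optimal solution of LP~\eqref{LP1}, the only obstruction to a $\pm\epsilon$ rotation along a cycle (which preserves every column sum $\sum_i x_{i,j}$, hence item feasibility, and violates no nonnegativity constraint for small $\epsilon$) comes from agents whose value constraint $\alpha \le \sum_{j} \normv_{i,j}\,x_{i,j}$ is tight; absent such agents, both $x^* \pm \epsilon\eta$ are feasible and tight-preserving, contradicting vertexness, so the support is already a forest. Any residual cycles through value-tight agents I would cancel by pushing mass along them until an edge becomes integral, using the slack in the one-sided item constraints to control which edge integralizes. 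Once $G$ is a forest I would peel leaves: a leaf that is an \emph{item} $j$ has a unique fractional neighbor $i$ (all other agents hold exactly $0$ of $j$, since otherwise $j$ would have degree $\ge 2$ and the feasibility constraint would be violated), so I set $x_{i,j}=1$, which only increases $v_i$; a leaf that is an \emph{agent} $i$ has a unique fractional item $j$, and I set $x_{i,j}=0$, after which $i$ holds an integral bundle and is removed from further processing.

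The value accounting then closes the argument. Each agent is rounded \emph{down} at most once, namely at the step where it is removed as an agent-leaf, and the value lost there is $v_{i,j}\,x^*_{i,j} \le v_{i,j} \le \max_{j \in \items} v_{i,j}$; every other operation incident to $i$ is an up-rounding that weakly increases $v_i$. Summing the changes over the whole procedure yields $v_i(A_i) \ge v_i(x^*_i) - \max_{j \in \items} v_{i,j}$ for every $i \in \agents$, and since no column sum was ever raised above $1$, the resulting $A$ is a valid integral allocation. I expect the main obstacle to be precisely the cycle-elimination step: the leaf-peeling on a forest is routine and transparently charges at most one item per agent, but guaranteeing a \emph{forest} support (equivalently, cancelling cycles routed through value-tight agents) while still spending at most a single item's worth of value per agent is the delicate point, and this is where I would lean on the basic-solution structure and the matching rounding analysis of~\cite{eckart2024fairness}.
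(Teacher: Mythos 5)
Your route is genuinely different from the paper's: the paper does not round LP~\eqref{LP1} directly at all, but instead invokes the market interpretation of \cite{eckart2024fairness} (the optimal solution of the $p$-mean LP for $p\to-\infty$ is a Fisher market equilibrium for suitable prices and budgets) and then the integral-equilibrium rounding of \cite{barman2019proximity}, which loses one maximum bang-per-buck item per agent. Your forest-peeling step, taken on its own, is correct and clean: in a tree each agent wins, in full, every fractional item it touches except the single item it drops when it is peeled as an agent-leaf, and receiving a fractional item in full dominates receiving it fractionally, so $v_i(A_i)\ge v_i(x^*_i)-\max_{j\in\items}v_{i,j}$ follows. If the cycle issue were resolved this would be a nice self-contained replacement for the paper's citation-based proof.

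The genuine gap is exactly the step you flag: cycle elimination. At a vertex of LP~\eqref{LP1} a $\pm\epsilon$ rotation along a cycle preserves every column sum but changes each on-cycle agent's value by $\pm\epsilon(\normv_{i,j}-\normv_{i,j'})$, so vertexness only forces every cycle to contain an agent whose constraint $\alpha\le\sum_j \normv_{i,j}x_{i,j}$ is tight and whose two cycle items have distinct values; it does not make the support a forest (the edge count $f\le n+m-p-1$ permits cycles in disconnected supports). Your proposed fix --- pushing mass around such a cycle until an edge integralizes --- necessarily decreases the value of some tight agent on the cycle, by up to an entire item's worth (and an agent may be touched by several such cancellations). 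That loss is not absorbed by your accounting, since the same agent may later also drop an item as an agent-leaf, so the total loss can exceed $\max_{j\in\items}v_{i,j}$. This is precisely the difficulty the market-equilibrium machinery is designed to sidestep: in a Fisher equilibrium every fractionally held item of an agent is MBB, which is the structural fact that lets the rounding of \cite{barman2019proximity}/\cite{eckart2024fairness} charge each agent only a single item. As written, your argument proves the lemma only when the fractional support happens to be a forest, and defers the remaining (and essential) case back to the very reference the lemma is quoting.
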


\begin{proof}
     LP~\eqref{LP1} maximized the normalized $p$-mean of utilities, for $p \rightarrow -\infty$. It is known~\cite{eckart2024fairness} that there exist prices $p^*_j \geq 0$ for each item $j \in \items$, and budgets $b^*_i$ for each agent $i \in \agents$, such that $(x^*, p^*, b^*)$ is a Fisher market equilibrium. 
     ~\cite{barman2019proximity} give an algorithm that, given a Fisher market equilibrium $(x,p,b)$, in polynomial time, find an integral allocation $A$ and budgets $b'$ (that satisfy certain conditions), such that $(A,p,b')$ is a market equilibrium.~\cite{eckart2024fairness} prove that, for this integral allocation $A$, for all $i \in \agents$, $v_i(A_i) \geq v_i(x^*_i) - v_{i,j}$, for some item $j$ in the maximum bang-per-buck bundle of agent $i$. The lemma follows.
\end{proof}

Given~\Cref{lemma: utility under TP,lemma: rounding prop allocation}, we are ready to prove our main result in this section.

\begin{theorem} \label{theorem: prop condition}
    Let $D^* = \min_{i \in \agents} \left\{ D_{\chi^2}(\normv_i || S) \right\}$. If $\normv_{max} \le \frac{D^*}{n}$, then a proportional allocation exists. When the inequality is strict, then a strongly proportional allocation exists.
\end{theorem}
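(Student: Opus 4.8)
The plan is to exhibit the \textsc{Trading Post} allocation $x^{\trading}$ as a feasible point of LP~\eqref{LP1}, evaluate its objective via \Cref{lemma: utility under TP} to show the optimum is at least $\tfrac1n(1+D^*)$, and then round through \Cref{lemma: rounding prop allocation}, paying for the single-item rounding loss out of the $\chi^2$ surplus. The one bookkeeping point to keep straight throughout is that LP~\eqref{LP1} is phrased in the $\ell_1$-normalized values $\normv$, whereas both the rounding guarantee and the proportionality target are in the raw values $v_i$; the two are related by the factor $v_i(\items)$, since for any allocation $x$ one has $\sum_{j} \normv_{i,j} x_{i,j} = v_i(x_i)/v_i(\items)$.

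First I would verify feasibility: because $x^{\trading}_{i,j} = \normv_{i,j}/\sum_{i'\in\agents}\normv_{i',j}$, each coordinate is nonnegative and $\sum_{i\in\agents} x^{\trading}_{i,j} = 1$ for every $j$, so $x^{\trading}$ satisfies the constraints of LP~\eqref{LP1}. Combining the identity above with \Cref{lemma: utility under TP} gives, for every agent $i$, $\sum_j \normv_{i,j}\,x^{\trading}_{i,j} = v_i(x^{\trading}_i)/v_i(\items) = \tfrac1n\bigl(1 + D_{\chi^2}(\normv_i \,||\, S)\bigr)$. Taking the minimum over $i$, the feasible point $x^{\trading}$ certifies that the optimal LP value $\alpha^*$ satisfies $\alpha^* \ge \tfrac1n(1 + D^*)$.

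Next I would take an optimal fractional solution $x^*$ of value $\alpha^*$ and apply \Cref{lemma: rounding prop allocation} to obtain an integral allocation $A$ with $v_i(A_i) \ge v_i(x^*_i) - \max_{j\in\items} v_{i,j}$ for every $i$. Since $v_i(x^*_i)/v_i(\items) = \sum_j \normv_{i,j} x^*_{i,j} \ge \alpha^* \ge \tfrac1n(1+D^*)$ (the $i$-th LP constraint), this yields $v_i(A_i) \ge \tfrac{v_i(\items)}{n}(1+D^*) - \max_j v_{i,j}$. The target $v_i(A_i) \ge v_i(\items)/n$ therefore reduces to $\tfrac{v_i(\items)}{n} D^* \ge \max_j v_{i,j}$, i.e.\ $D^* \ge n\max_j \normv_{i,j}$. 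Because $\max_j \normv_{i,j} \le \normv_{max}$, the hypothesis $\normv_{max} \le D^*/n$ establishes this for all $i$ at once; if the hypothesis is strict, every inequality becomes strict and $A$ is strongly proportional.

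I do not expect a genuine obstacle here once the two cited lemmas are available: the entire argument is a feasibility-plus-rounding sandwich, and the only subtlety is the normalization accounting. The conceptual crux is simply recognizing that \Cref{lemma: utility under TP} identifies $\tfrac1n D_{\chi^2}(\normv_i\,||\,S)$ as exactly the fractional surplus of agent $i$'s \textsc{Trading Post} utility over the proportional share $1/n$, so the dissimilarity-from-society quantity $D^*$ is precisely the budget available to absorb the $\max_j v_{i,j}$ loss incurred when integrally rounding the Fisher-market equilibrium.
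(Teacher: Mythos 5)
Your proposal is correct and follows essentially the same route as the paper's proof: certify $\alpha^* \ge \tfrac1n(1+D^*)$ by plugging the \textsc{Trading Post} allocation into LP~\eqref{LP1} via \Cref{lemma: utility under TP}, round with \Cref{lemma: rounding prop allocation}, and absorb the $\max_j v_{i,j}$ loss using the hypothesis $\normv_{max} \le D^*/n$. Your explicit normalization bookkeeping is a slightly more careful writeup of the identical argument.
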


\begin{proof}
    Let $\alpha^*$ be the value of an optimal solution to LP~\eqref{LP1}, and let $A$ be the integral allocation from~\Cref{lemma: rounding prop allocation}.
    We have that $v_i(A_i) \ge \alpha^* \, v_i(\items) - \max_{j \in \items} v_{i,j}$. Let $x^{\trading}$ be the allocation output by the Trading Post mechanism. Since $x^{\trading}$ is a feasible solution for LP~\eqref{LP1}, we have that $\alpha^* \geq \min_{i \in \agents} \normv_i(x^{\trading}_i) = \min_{i \in \agents} \frac{v_i(x^{\trading}_i)}{v_i(\items)}$. Therefore, using~\Cref{lemma: utility under TP}, we have that $\alpha^* \ge \frac{1}{n}\left(1+  D^*\right)$, where $D^* = \min_{i \in \agents} \left\{ D_{\chi ^2}(\normv_i||s) \right\}$. Therefore, we overall have $u_i(A) \geq \frac{1}{n}\left(1+  D^*\right) v_i(\items)- \max_{j \in \items} v_{i,j} \ge 1/n$, where in the last inequality we used the fact that, for all $i \in \agents$ and $j \in \items$, $\frac{v_{i,j}}{v_{i}(\items)} \le \frac{D^*}{n}$.
\end{proof}

\subsubsection{Proportional Allocations of Goods in the Stochastic Setting}

As an application of~\Cref{theorem: prop condition}, we consider a setting with stochastic valuations. Specifically, we consider the setting where $v_{i,j}$ is sampled i.i.d. from the $U[0,1]$ distribution. \cite{manurangsi2021closing} prove that a proportional allocation exists with high probability for $m \geq n$ (their result, in fact, holds as long as the PDF of the distribution is upper and lower bounded by constants). Here, we re-prove this result in a ``modular'' way, but with a worse dependence on $n$. Specifically, we show that in this model $\normv_{max} \le \frac{\min_{i \in \agents} \left\{ D_{\chi^2}(\normv_i || S) \right\}}{n}$ holds with high probability, for $m \in \Omega(n)$. This readily implies the existence of a proportional allocation.

\begin{lemma}\label{lemma: stochastic prop}
    In the i.i.d. $U[0,1]$ model $(v_{i,j} \sim U[0,1])$, for $m \in \Omega(n)$, it holds that $D_{\chi^2}(\normv_i || S) \ge \frac{n-28}{39n}$, for all $i \in \agents$, with high probability $(1-O(1/n))$.
\end{lemma}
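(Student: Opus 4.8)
The plan is to lower bound $D_{\chi^2}(\normv_i \| S)$ through the variational (Cauchy--Schwarz) characterization of the $\chi^2$-divergence: for any test function $g$ one has $D_{\chi^2}(P\|Q)\ge (\mathbb{E}_P[g]-\mathbb{E}_Q[g])^2/\operatorname{Var}_Q(g)$, with equality at $g=P/Q$ (this is a one-line consequence of $\mathbb{E}_P[g]-\mathbb{E}_Q[g]=\operatorname{Cov}_Q(P/Q,g)$ and Cauchy--Schwarz, and it holds for every realization). Starting from the identity $D_{\chi^2}(\normv_i\|S)=n\sum_j \normv_{i,j}^2/\big(\sum_{i'}\normv_{i',j}\big)-1$ implicit in \Cref{lemma: utility under TP}, I would instantiate the variational bound with $P=\normv_i$, $Q=S$, and the simple (data-dependent) choice $g_j=v_{i,j}$, reducing the claim to controlling the explicit ratio
\[
D_{\chi^2}(\normv_i\|S)\ \ge\ \frac{\big(\sum_{j}\normv_{i,j}\,v_{i,j}-\sum_{j}S_j\,v_{i,j}\big)^2}{\sum_{j}S_j\,v_{i,j}^2-\big(\sum_{j}S_j\,v_{i,j}\big)^2},
\]
whose numerator I call $N_i$ and denominator $V_i$. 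Writing $T_i=\sum_j v_{i,j}$ and $Q_i=\sum_j v_{i,j}^2$, the diagonal parts simplify as $\sum_j \normv_{i,j}v_{i,j}=Q_i/T_i$, while the society parts unfold into averages $\tfrac1n\sum_{i'} T_{i'}^{-1}\sum_j v_{i',j}v_{i,j}$ over the remaining agents, so the whole expression is a function of a few empirical moments of i.i.d.\ uniforms.

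I would then pin the target down by computing expectations via $\mathbb{E}[U^p]=1/(p+1)$ for $U\sim U[0,1]$: to leading order $Q_i/T_i\to 2/3$ and each society cross-term tends to $1/2$, so $\mathbb{E}[N_i]\approx\tfrac{n-1}{n}\cdot\tfrac16$ and $\mathbb{E}[V_i]\approx\operatorname{Var}(U[0,1])=\tfrac1{12}$, whence $N_i^2/V_i\approx\tfrac13\ge\tfrac{n-3}{3n}$; carrying the exact finite-$n$ means is what produces the precise constant $\tfrac{n-3}{3n}$. A useful simplification here is that, conditional on agent $i$'s values, each society cross-term has mean exactly $T_i/m$, so $N_i\approx\tfrac{n-1}{n}\big(Q_i/T_i-T_i/m\big)$ is essentially governed by agent $i$'s own moments. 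For concentration, every sum in $N_i$ and $V_i$ is a sum of $m$ independent $[0,1]$-bounded terms, so Bernstein/Hoeffding give relative deviations $O(\sqrt{\log n/m})$ for the agent-specific moments $Q_i,T_i$ and, crucially, the much smaller scale $O(1/\sqrt{mn})$ for the society averages, since conditionally on $v_i$ these are averages of $n-1$ independent contributions. A union bound over the $n$ agents, each event taken with failure probability $O(1/n^2)$, yields the global $1-O(1/n)$ guarantee for $m\in\Omega(n)$.

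The main obstacle is the concentration bookkeeping for a \emph{ratio}: I must simultaneously lower bound $N_i$ and upper bound $V_i$ with matching sharpness, and verify that $V_i$ stays bounded away from $0$ so the ratio is well behaved. The delicate quantity is the diagonal factor $Q_i/T_i$, which does not benefit from averaging over society and therefore controls the final constant; the key leverage is precisely that $S$ averages over all $n$ agents, so the cross contributions concentrate an order faster than the diagonal one and can be absorbed. Once the two-sided deviation estimates are substituted into the ratio and simplified, the inequality $N_i^2/V_i\ge\tfrac{n-3}{3n}$ follows for all $i$ simultaneously with probability $1-O(1/n)$, which is the claimed bound.
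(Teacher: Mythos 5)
Your overall strategy is the same as the paper's: both proofs lower bound $D_{\chi^2}(\normv_i\|S)$ by a variational characterization with an explicit test function proportional to $v_i$, and then control the resulting empirical moments of i.i.d.\ uniforms ($\sum_j v_{i,j}$, $\sum_j v_{i,j}^2$, $\sum_j v_{i,j}^3$, and the cross terms $\sum_j v_{i',j}v_{i,j}^p$) by Hoeffding plus a union bound over agents. The only structural difference is the form of the variational bound: the paper uses the Legendre--Fenchel form $D_{\chi^2}(P\|Q)\ge\langle P,g\rangle-\tfrac14\langle Q,g^2\rangle-1$ with $g=2m\,\normv_i$, while you use the Cauchy--Schwarz ratio form $(\mathbb{E}_P g-\mathbb{E}_Q g)^2/\mathrm{Var}_Q(g)$ with $g=v_i$. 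These are equivalent up to optimizing the affine normalization of $g$, so your form is legitimate (and pointwise at least as strong); your observation that $\mathbb{E}[\sum_j\normv_{i',j}v_{i,j}\mid v_i]=T_i/m$ exactly is also correct and tidy.

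The genuine gap is quantitative, and it sits exactly where you wave your hands with ``carrying the exact finite-$n$ means is what produces the precise constant.'' Evaluating your ratio at the population moments gives $N_i^2/V_i=\tfrac{(n-1)^2}{3n^2-1}=\tfrac13-\tfrac{2}{3n}+O(1/n^2)$, so the margin over the target $\tfrac{n-3}{3n}=\tfrac13-\tfrac1n$ is only $\Theta(1/n)$. But the diagonal factor $Q_i/T_i=\sum_j v_{i,j}^2/\sum_j v_{i,j}$ --- which, as you yourself note, does not benefit from averaging over society --- fluctuates by $\Theta(\sqrt{\log n/m})$ after the union bound, and this propagates linearly into $N_i^2/V_i$ with no cancellation against $V_i$. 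For $m\in\Omega(n)$ this error is $\Theta(\sqrt{\log n/n})\gg 1/n$, so your argument cannot reach the stated constant; it would require $m=\Omega(n^2\log n)$, or else you must settle for a weaker constant bounded away from $\tfrac13$ (e.g.\ $\tfrac14$, which is all that \Cref{thm: iid goods} actually uses). To be fair, the paper's own proof lives with the same tension --- its deterministic limit is likewise $\tfrac13-\tfrac{2}{3n}$ and it absorbs constant-order deviation factors --- so the difficulty is intrinsic to the constant $\tfrac{n-3}{3n}$ rather than to your choice of test function; but as written, your concluding step ``substitute the two-sided deviation estimates and simplify'' does not go through for $m=\Theta(n)$, and you need to either relax the constant or strengthen the hypothesis on $m$ to close the proof.
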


\begin{theorem}\label{thm: iid goods}
    In the i.i.d. $U[0,1]$ model $(v_{i,j} \sim U[0,1])$, a (strongly) proportional allocation exists with high probability $(1-O(1/n))$, when $m \in \Omega(n)$.
\end{theorem}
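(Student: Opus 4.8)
The plan is to combine \Cref{theorem: prop condition} with \Cref{lemma: stochastic prop}, adding only a short concentration argument to control $\normv_{max}$. By \Cref{theorem: prop condition}, a (strongly) proportional allocation exists whenever $\normv_{max} < D^*/n$, where $D^* = \min_{i \in \agents} D_{\chi^2}(\normv_i \,||\, S)$. \Cref{lemma: stochastic prop} already supplies the lower bound $D^* \ge \frac{n-3}{3n}$ with probability $1-O(1/n)$ for $m \in \Omega(n)$, so the only remaining task is a matching upper bound on $\normv_{max}$ that holds with probability $1-O(1/n)$.

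For the upper bound on $\normv_{max}$, I would first observe that $\normv_{max} = \max_{i \in \agents, j \in \items} \frac{v_{i,j}}{v_i(\items)} \le \max_{i \in \agents} \frac{1}{v_i(\items)}$, since every $v_{i,j} \le 1$ under $U[0,1]$. Hence it suffices to lower bound each denominator $v_i(\items) = \sum_{j \in \items} v_{i,j}$. Each such sum consists of $m$ i.i.d. $U[0,1]$ variables with mean $m/2$, so Hoeffding's inequality gives $\Pr[v_i(\items) \le m/4] \le e^{-m/8}$. A union bound over the $n$ agents yields $\Pr[\exists\, i : v_i(\items) \le m/4] \le n\, e^{-m/8}$, which is $O(1/n)$ once $m \ge 16 \ln n$; in particular this holds for $m \in \Omega(n)$ and $n$ large. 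On the complementary event, $\normv_{max} \le 4/m$.

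To conclude, I would intersect the two high-probability events (large $D^*$ and large denominators), which by a union bound still has probability $1 - O(1/n)$. On this event, $\normv_{max} \le 4/m$ while $D^*/n \ge \frac{n-3}{3n^2}$, so the strict inequality $\normv_{max} < D^*/n$ holds as soon as $\frac{4}{m} < \frac{n-3}{3n^2}$, i.e.\ $m > \frac{12 n^2}{n-3}$. Since $\frac{12 n^2}{n-3} \in O(n)$, this is satisfied for all sufficiently large $n$ whenever $m \ge C n$ for a suitable absolute constant $C$, i.e.\ for $m \in \Omega(n)$. Because the inequality is strict, \Cref{theorem: prop condition} delivers a \emph{strongly} proportional allocation on this event, which establishes the theorem.

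The main obstacle is really just the bookkeeping: ensuring that the two $O(1/n)$ failure probabilities are compatible and verifying that the concentration bound for the denominators does not require more than $m \in \Omega(n)$ items; everything else is a direct substitution into the deterministic condition of \Cref{theorem: prop condition}. One subtlety worth flagging is that \Cref{lemma: stochastic prop} is stated for $m \in \Omega(n)$ with a possibly larger hidden constant than the one needed for the denominator bound, so the final constant $C$ in $m \ge C n$ should be taken as the maximum of the two constants.
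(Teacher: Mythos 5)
Your proposal is correct and follows essentially the same route as the paper's own proof: invoke \Cref{lemma: stochastic prop} for the lower bound on $D^*$, use a Hoeffding plus union bound argument to show $\normv_{max} = O(1/m)$ with probability $1-O(1/n)$, and plug both into the deterministic condition of \Cref{theorem: prop condition}. The only difference is the cosmetic choice of deviation parameter in the concentration bound (you use $m/4$ where the paper uses $m/2 - \sqrt{m\log n}$), which does not affect the argument.
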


\begin{proof}
    \Cref{lemma: stochastic prop} implies that, with high probability, $D^* = \min_{i \in \agents} \left\{ D_{\chi ^2}(\normv_i || S) \right\} \ge \frac{n-28}{39n}$. Also with high probability, for all $i \in \agents, j \in \items$, $\frac{v_{i,j}}{v_i(\items)} \le \frac{1}{v_i(\items)} \leq \frac{1}{m}\left(\frac{1}{1/2 - \sqrt{\frac{\log n}{m}}} \right)$. \Cref{theorem: prop condition} implies that if $\frac{1}{m}\left(\frac{1}{1/2 - \sqrt{\frac{\log n}{m}}} \right) \le \frac{D^*}{n}$ then a proportional allocation exists. Therefore, if $\frac{1}{m}\left(\frac{1}{1/2 - \sqrt{\frac{\log n}{m}}}\right) \le \frac{n-28}{39n^2}$ a proportional allocation exists; assuming that $m \ge c' n$, for some positive constant $c'$, the strict inequality holds for all sufficiently large $n$.
\end{proof}

\subsubsection{Allocating Chores via the Inverse Trading Post Mechanism}\label{sec: allocate chores prop}

Let $H_j=\frac{n}{\sum_{i \in \agents} 1/\normc_{i,j}}$ be society's cost for item $j \in \items$, defined as the harmonic mean of normalized agent costs. We define a variation of the Trading Post mechanism where weights are inversely proportional to agents' costs. Formally, given costs $c_1, \dots, c_n$, the \textsc{inverse Trading Post} outputs an allocation $x^{\invtrading}_{i,j} = \frac{1/\normc_{i,j}}{\sum_{i' \in \agents} 1/\normc_{i',j}}$. The cost of agent $i \in \agents$ when participating in the inverse Trading Post mechanism is $c_i(x^{\invtrading}_i) = \sum_{j \in \items } c_{i,j} \frac{1/\normc_{i,j}}{\sum_{i' \in \agents} 1/\normc_{i',j}} = \sum_{j \in \items } \frac{c_i(\items)}{\sum_{i' \in \agents} 1/\normc_{i',j}} =  \frac{c_i(\items)}{n} \sum_{j \in \items } H_j$.
We start by expressing this cost as a function of the distance between an agent and society.

\begin{lemma}
\label{lemma: disutility under TP}
    For all $i \in \agents$, $\frac{c_i(x^{\invtrading}_i)}{c_i(\items)} = \frac{1}{n}\left( 1-\frac{1}{n} \sum_{\ell \in \agents} \sum_{j \in \items} \frac{(\normc_{\ell, j} - H_j)^2}{\normc_{\ell, j}}\right)$.
\end{lemma}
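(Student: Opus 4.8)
The plan is to reduce the claim to a single clean identity about society's costs $H_j$ and then verify that identity by expanding the squared terms, mirroring the $\chi^2$ computation of \Cref{lemma: utility under TP} for goods, with the harmonic mean now playing the role the arithmetic mean played there.

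First I would compute the cost incurred by agent $i$ under the inverse Trading Post mechanism directly, as already sketched in the text. Since $c_{i,j} = \normc_{i,j}\cdot c_i(\items)$, each summand simplifies via $c_{i,j}/\normc_{i,j} = c_i(\items)$, so that
\[
c_i(x^{\invtrading}_i) = \sum_{j \in \items} \frac{c_i(\items)}{\sum_{i' \in \agents} 1/\normc_{i',j}}.
\]
By the definition $H_j = n/\sum_{i' \in \agents} 1/\normc_{i',j}$, we have $1/\sum_{i'} 1/\normc_{i',j} = H_j/n$, and therefore $\frac{c_i(x^{\invtrading}_i)}{c_i(\items)} = \frac{1}{n}\sum_{j \in \items} H_j$. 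Thus it suffices to prove the identity
\[
\sum_{j \in \items} H_j = 1 - \frac{1}{n}\sum_{\ell \in \agents}\sum_{j \in \items}\frac{(\normc_{\ell,j} - H_j)^2}{\normc_{\ell,j}}.
\]

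Second, I would establish this identity by expanding each term as $\frac{(\normc_{\ell,j} - H_j)^2}{\normc_{\ell,j}} = \normc_{\ell,j} - 2H_j + \frac{H_j^2}{\normc_{\ell,j}}$ and summing over $\ell$. The key step is the harmonic-mean identity $\sum_{\ell \in \agents} 1/\normc_{\ell,j} = n/H_j$, which collapses the last term to $H_j^2\cdot n/H_j = nH_j$; together with the $-2nH_j$ from the middle term this leaves $\sum_{\ell} \normc_{\ell,j} - nH_j$. Summing over $j$ and dividing by $n$, the aggregate $\frac{1}{n}\sum_{j}\sum_{\ell} \normc_{\ell,j}$ equals $1$, because each $\ell_1$-normalized cost vector satisfies $\sum_{j \in \items} \normc_{\ell,j} = 1$ and there are $n$ of them summing to $n$. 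This gives
\[
\frac{1}{n}\sum_{\ell \in \agents}\sum_{j \in \items}\frac{(\normc_{\ell,j} - H_j)^2}{\normc_{\ell,j}} = 1 - \sum_{j \in \items} H_j,
\]
which rearranges to the desired identity. Combining the two steps yields the lemma.

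There is essentially no hard obstacle here; the only point requiring care is keeping the normalizations straight, namely invoking $\sum_{\ell} 1/\normc_{\ell,j} = n/H_j$ at the right moment and using $\norm{\normc_\ell}_1 = 1$ to evaluate $\sum_{\ell}\sum_{j}\normc_{\ell,j} = n$. (I also note a benign sign/scaling discrepancy in the inline derivation preceding the lemma, where the factor should read $\tfrac{1}{n}\sum_j H_j$ rather than $n\,c_i(\items)\sum_j H_j$; the computation above resolves it correctly.)
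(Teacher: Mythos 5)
Your proposal is correct and follows essentially the same route as the paper: both first reduce $\frac{c_i(x^{\invtrading}_i)}{c_i(\items)}$ to $\frac{1}{n}\sum_{j\in\items}H_j$ and then relate $\sum_j H_j$ to the $\chi^2$-like sum using the harmonic-mean identity $\sum_{\ell}1/\normc_{\ell,j}=n/H_j$ and the normalization $\sum_j \normc_{\ell,j}=1$; the paper completes the square where you expand it, which is the same computation read in reverse. Your side remark is also right: the inline expression before the lemma should read $\tfrac{c_i(\items)}{n}\sum_j H_j$ rather than $n\,c_i(\items)\sum_j H_j$, a typo that does not affect the lemma or its proof.
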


Interestingly, the expression for the normalized cost of agent $i$ in~\Cref{lemma: disutility under TP} does not depend on $i$'s identity, i.e., all agents have the same normalized cost.

Next, consider the following LP that minimizes the maximum $\ell_1$-normalized cost of an agent.

\begin{equation}\label{LP2}
\begin{array}{ll@{}ll}
\text{minimize}  & \alpha &\\
\text{subject to}& \alpha \ge \sum_{j=1}^{m}\normc_{ij}x_{ij} \hspace{5mm}& \forall i \in \agents \\
& \sum_{i=1}^{n} x_{ij} = 1  &\forall j \in \items \\
& \alpha, x_{ij} \geq 0 & \forall i \in \agents, \forall j \in \items
\end{array}
\end{equation}

Similarly to~\Cref{lemma: rounding prop allocation}, the following lemma argues about the existence of a good integral allocation.

\begin{lemma}\label{lemma: rounding prop chores allocation}[Essentially Theorem 11 of~\cite{eckart2024fairness}]
    Let $x^*$ be an optimal solution to LP~\eqref{LP2}. There exists an integral allocation $A$, such that, for all agents $i \in \agents$, $c_i(A_i) \leq c_i(x^*_i) + \max_{j \in \items} c_{i,j}$. 
\end{lemma}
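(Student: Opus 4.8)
The plan is to mirror the argument used for goods in~\Cref{lemma: rounding prop allocation}, transporting it to the chore setting via the chore-market machinery of~\cite{eckart2024fairness}. First I would observe that LP~\eqref{LP2} is exactly the fractional relaxation of the problem of minimizing the maximum $\ell_1$-normalized cost across agents, i.e., the egalitarian (leximin-type) objective for chores, which is the chore analog of the $p$-mean of disutilities as $p \to \infty$. Under this identification, the optimal fractional solution $x^*$ is the egalitarian fractional chore allocation, and the goal becomes to show that this allocation can be integrally realized at a cost increase of at most one chore per agent.

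Second, I would invoke the result of~\cite{eckart2024fairness} (their Theorem 11) establishing that there exist prices $p^*_j \geq 0$ for each chore $j \in \items$ and required earnings (budgets) $b^*_i$ for each agent $i \in \agents$ such that $(x^*, p^*, b^*)$ constitutes a competitive equilibrium of the associated Fisher-type chore market. In this market each agent is paid $p^*_j$ per unit of chore $j$ she performs and must earn a prescribed amount $b^*_i$; at equilibrium every agent performs only chores on her minimum pain-per-buck set (the chore analog of maximum bang-per-buck), namely those $j$ minimizing the ratio $\normc_{i,j}/p^*_j$.

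Third, I would apply the integral rounding procedure for market equilibria, following the template of~\cite{barman2019proximity} as adapted in~\cite{eckart2024fairness}: starting from the fractional chore equilibrium $(x^*, p^*, b^*)$, one produces an integral allocation $A$ together with adjusted earnings $b'$ so that $(A, p^*, b')$ is again a market equilibrium and each agent's assigned chores still lie within her minimum pain-per-buck set. The rounding perturbs each agent's earned amount, and hence her total normalized cost, by at most the contribution of a single chore in her bundle; translated back to real costs this yields $c_i(A_i) \leq c_i(x^*_i) + c_{i,j}$ for some chore $j$ in $i$'s minimum pain-per-buck set. Bounding $c_{i,j} \leq \max_{j \in \items} c_{i,j}$ then gives the claimed inequality.

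The main obstacle I anticipate is the chore-market equilibrium theory itself: unlike the goods case, linear Fisher markets for chores are nonconvex and their equilibrium sets can be disconnected, so the two facts I rely on, that the egalitarian fractional optimum of LP~\eqref{LP2} is realizable as a chore-market equilibrium and that the rounding preserves equilibrium with only a one-chore cost increase, are genuinely delicate. Since both are established in~\cite{eckart2024fairness}, however, the role of the present proof is primarily to set up the correspondence between LP~\eqref{LP2} and the chore market and then to read off the per-agent cost bound from the equilibrium structure, exactly paralleling the goods argument.
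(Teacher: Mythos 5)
Your proposal is correct and follows essentially the same route as the paper: identify LP~\eqref{LP2} with the $p\to\infty$ mean of disutilities, invoke the market-equilibrium interpretation of the fractional optimum from~\cite{eckart2024fairness}, and round via the market-equilibrium rounding procedure to lose at most one chore's cost per agent. The only cosmetic difference is attribution of the rounding step: the paper cites the chore-market rounding of~\cite{branzei2019algorithms} (itself built on~\cite{barman2019proximity}), whereas you credit the latter directly.
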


\begin{proof}
    Similarly to~\Cref{lemma: rounding prop allocation}, we use the market interpretation of LP~\eqref{LP2} by~\cite{eckart2024fairness}, coupled with a rounding algorithm of~\cite{branzei2019algorithms}. Specifically, LP~\eqref{LP2} minimizes the minimizing the normalized $p$-mean of disutilities, for $p \rightarrow \infty$. Letting $x^*$ be an optimal solution to LP~\eqref{LP2},~\cite{eckart2024fairness} prove that there exist rewards $r$ and budgets $e$ such that $(x,r,e)$ is a market equilibrium.~\cite{branzei2019algorithms} give an algorithm (based on the algorithm of~\cite{barman2019proximity}) that, given such a market equilibrium, computes another market equilibrium $(A,r,e')$, where $A$ is an integral allocation and $e'$ satisfies certain properties. Relevant to us here,~\cite{eckart2024fairness} prove that $c_i(A_i) \leq c_i(x^*_i) + \max_{j \in \items} c_{i,j}$; the lemma follows.
\end{proof}

We are now ready to prove the main result of this section.

\begin{theorem}\label{theorem: condition for prop chores}
    If $\normc_{max} \le \frac{1}{n^2}\sum_{\ell \in \agents} \sum_{j \in \items} \frac{(\normc_{\ell, j} - H_j)^2}{\normc_{\ell, j}}$, then a proportional allocation exists. When the inequality is strict, then a strongly proportional allocation exists.
\end{theorem}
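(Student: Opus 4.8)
The plan is to follow exactly the template of the goods case in \Cref{theorem: prop condition}, with the \textsc{inverse Trading Post} mechanism playing the role of Trading Post and LP~\eqref{LP2} (a minimization) replacing LP~\eqref{LP1}. Set $D \coloneqq \frac{1}{n}\sum_{\ell \in \agents}\sum_{j \in \items}\frac{(\normc_{\ell,j}-H_j)^2}{\normc_{\ell,j}}$, so the hypothesis is precisely $\normc_{max}\le D/n$. First I would record that $x^{\invtrading}$ is feasible for LP~\eqref{LP2}: its entries are nonnegative and, since $x^{\invtrading}_{i,j}=\frac{1/\normc_{i,j}}{\sum_{i'\in\agents}1/\normc_{i',j}}$, we have $\sum_{i}x^{\invtrading}_{i,j}=1$ for every $j$. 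By \Cref{lemma: disutility under TP}, every agent has the \emph{same} normalized cost under this mechanism, namely $\frac{c_i(x^{\invtrading}_i)}{c_i(\items)}=\frac{1}{n}(1-D)$. Because LP~\eqref{LP2} minimizes the largest normalized cost and $x^{\invtrading}$ is feasible, its optimum satisfies $\alpha^*\le \max_{i\in\agents}\frac{c_i(x^{\invtrading}_i)}{c_i(\items)}=\frac{1}{n}(1-D)$.

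Next I would invoke \Cref{lemma: rounding prop chores allocation} to obtain an integral allocation $A$ with $c_i(A_i)\le c_i(x^*_i)+\max_{j\in\items}c_{i,j}$ for all $i$, where $x^*$ is an optimal solution of LP~\eqref{LP2}. Since $c_i(x^*_i)=\normc_i(x^*_i)\,c_i(\items)\le \alpha^*\,c_i(\items)$, combining with the previous bound gives
\[
c_i(A_i)\ \le\ \tfrac{1}{n}(1-D)\,c_i(\items)+\max_{j\in\items}c_{i,j}.
\]
Dividing by $c_i(\items)$, proportionality $c_i(A_i)\le c_i(\items)/n$ is implied as soon as $\max_{j\in\items}\normc_{i,j}\le D/n$. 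This holds for every $i$ because $\max_{j}\normc_{i,j}\le\normc_{max}\le D/n$ by hypothesis, so $A$ is proportional. If the hypothesis holds strictly, then each inequality above is strict, yielding a strongly proportional allocation.

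The argument is essentially bookkeeping once \Cref{lemma: disutility under TP,lemma: rounding prop chores allocation} are in hand; the only points that require care are the direction of the optimization (LP~\eqref{LP2} is a \emph{minimization}, so feasibility of $x^{\invtrading}$ yields an \emph{upper} bound on $\alpha^*$) and the observation that the quantity $D$ governing the mechanism's cost is exactly $n$ times the right-hand side of the hypothesis. I do not anticipate a genuine obstacle: the novelty sits in \Cref{lemma: disutility under TP} (the harmonic-mean identity that makes all agents' normalized costs coincide), and the present theorem is the clean consequence of pairing that identity with the Fisher-market rounding guarantee.
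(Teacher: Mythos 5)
Your proof is correct and follows essentially the same route as the paper: feasibility of $x^{\invtrading}$ for LP~\eqref{LP2} bounds $\alpha^*$ via \Cref{lemma: disutility under TP}, and the rounding guarantee of \Cref{lemma: rounding prop chores allocation} plus the hypothesis $\normc_{max}\le D/n$ closes the argument. The only (cosmetic) difference is that you bound $\alpha^*$ by the \emph{maximum} normalized cost under $x^{\invtrading}$, which is the cleaner direction for a minimization LP, whereas the paper writes the minimum — the two coincide here since all agents' normalized costs are equal.
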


\begin{proof}
    The allocation of~\Cref{lemma: rounding prop chores allocation} satisfies, for all $i \in \agents$, $c_i(A_i) \leq c_i(x^*_i) + \max_{j \in \items} c_{i,j}$. Furthermore, $\frac{c_i(x^*_i)}{c_i(\items)} \leq \alpha^*$, where $\alpha^*$ is the optimal value of LP~\eqref{LP2}.
    The allocation of the inverse Trading Post mechanism, $x^{\invtrading}$, is a feasible solution for LP~\eqref{LP2}, therefore,  
    $\alpha^* \leq \min_{z \in \agents}\frac{c_{z}(x^{\invtrading}_{z})}{ c_{z}(\items) }$. 
    By~\Cref{lemma: disutility under TP}, the RHS is equal to $\frac{1}{n}\left(1 - \frac{1}{n} \sum_{\ell \in \agents} \sum_{j \in \items} \frac{(\normc_{\ell, j} - H_j)^2}{\normc_{\ell, j}} \right)$.
    Combining, we have that 
    \begin{align*}
    c_i(A_i) &\leq \frac{c_i(\items)}{n}\left(1 - \frac{1}{n} \sum_{\ell \in \agents} \sum_{j \in \items} \frac{(\normc_{\ell, j} - H_j)^2}{\normc_{\ell, j}} \right) + \max_{j \in \items} c_{i,j} \\ 
    &\leq \frac{c_i(\items)}{n}\left(1 - \frac{1}{n} \sum_{\ell \in \agents} \sum_{j \in \items} \frac{(\normc_{\ell, j} - H_j)^2}{\normc_{\ell, j}} \right) + \frac{c_i(\items)}{n^2}\sum_{\ell \in \agents} \sum_{j \in \items} \frac{(\normc_{\ell, j} - H_j)^2}{\normc_{\ell, j}} \\
    &= \frac{c_i(\items)}{n}.
    \end{align*}
\end{proof}

We note that there exists an equivalent version of this theorem, where the bound on $\normc_{max}$ involves the $\chi^2$-divergence between $\normc_i$ and $H/\norm{H}_1$. Specifically, if $$\normc_{max} \leq\frac{1}{n^2} \sum_{\ell \in \agents} \left(  (1-\norm{H}_1)^2 +  \norm{H}^2_1  D_{\chi ^2}(\normc_{\ell} || \frac{H}{\norm{H}_1})  \right),$$ then a proportional allocation exists.

\subsubsection{Proportional Allocations of Chores in the Stochastic Setting}

As an application of~\Cref{theorem: prop condition}, similarly to the case of goods, we consider a setting with stochastic preferences: $c_{i,j}$ is sampled i.i.d. from the $U[0,1]$ distribution. In recent work, \cite{manurangsi2025asymptotic} prove that, for all distributions whose PDF is bounded by constants, a proportional allocation exists with high probability for $m \in \omega(1)$. Here, we re-prove a weaker version of this result, by showing that the condition in~\Cref{theorem: condition for prop chores} holds with high probability for $m \in \Omega(n)$ (and therefore, by~\Cref{theorem: condition for prop chores}, a proportional allocation exists).

We start by showing a couple of technical lemmas.

\begin{lemma}
\label{lemma: chores variational charachterization}
    $\sum_{j \in \items} \frac{(\normc_{i, j} - H_j)^2}{\normc_{i, j}} \ge \langle H - \normc_i, g \rangle - \frac{\langle \normc_i, g^2 \rangle}{4}$, where $g$ is any vector in $\mathbb{R}^m$ and $g^2 = [g_1^2, g_2^2, \cdots, g_m^2]$. 
\end{lemma}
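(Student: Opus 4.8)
The plan is to prove the inequality \emph{coordinate by coordinate} and then sum over $j \in \items$; this works because the bound is really a pointwise completion of squares, with no global structure of $\normc_i$ or $H$ (such as being probability vectors) required. Concretely, I would fix a coordinate $j$ and abbreviate $c = \normc_{i,j}$, $h = H_j$, and $w = g_j$, where $c > 0$ since costs are strictly positive. The claim for this single coordinate is
\[
\frac{(c-h)^2}{c} \;\ge\; (h - c)\,w \;-\; \frac{c\,w^2}{4}.
\]

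To verify this, I would divide through by $c > 0$ and introduce $t = \tfrac{c-h}{c}$, so that $\tfrac{(c-h)^2}{c} = c\,t^2$ and $(h-c)\,w = -c\,t\,w$. The desired inequality then reduces to $t^2 + t\,w + \tfrac{w^2}{4} \ge 0$, i.e.
\[
\Big(t + \tfrac{w}{2}\Big)^2 \ge 0,
\]
which holds trivially. This is exactly the variational (Legendre-type) lower bound for the $\chi^2$-divergence, specialized to the $j$-th term, and it is tight precisely when $w = -2t = 2(h-c)/c$.

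Summing the coordinatewise inequality over all $j \in \items$ then yields
\[
\sum_{j \in \items} \frac{(\normc_{i,j} - H_j)^2}{\normc_{i,j}} \;\ge\; \sum_{j \in \items} (H_j - \normc_{i,j})\,g_j \;-\; \frac{1}{4}\sum_{j \in \items} \normc_{i,j}\,g_j^2 \;=\; \langle H - \normc_i, g\rangle - \frac{\langle \normc_i, g^2\rangle}{4},
\]
which is the stated bound. I do not anticipate any real obstacle: the only point worth flagging is that each $\normc_{i,j}$ must be strictly positive for the division by $c$ to be valid, which is guaranteed since every agent's cost for every item is strictly positive, so the $\ell_1$-normalized costs are positive as well. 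The usefulness of the lemma for later arguments comes from the freedom to choose $g$ afterwards (e.g.\ a convenient fixed vector in the stochastic analysis), so I would keep the bound stated for arbitrary $g$ rather than optimizing it here.
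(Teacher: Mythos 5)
Your proof is correct and is essentially the paper's argument: the paper also reduces to a pointwise inequality in each coordinate $j$, writing the term as $\normc_{i,j}\,f(H_j/\normc_{i,j})$ with $f(t)=(t-1)^2$ and invoking the Fenchel--Young bound $f(t)\ge st-f^*(s)$ with $f^*(s)=\tfrac{s^2}{4}+s$, which is exactly your completed square $\left(t+\tfrac{w}{2}\right)^2\ge 0$ in disguise. Your tightness condition $g_j=2(H_j-\normc_{i,j})/\normc_{i,j}$ also matches the paper's closing remark, so nothing is missing.
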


\begin{lemma} \label{lemma: stochastic chores}
    In the i.i.d. $U[0,1]$ model $(c_{i,j} \sim U[0,1])$, for $m \in \Omega(n)$, it holds that $\sum_{j \in \items} \frac{(\normc_{i, j} - H_j)^2}{\normc_{i, j}} \ge \frac{1}{16}$, for all $i \in \agents$, with high probability $(1-O(1/n))$. 
\end{lemma}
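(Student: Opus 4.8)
The plan is to combine the variational lower bound of \Cref{lemma: chores variational charachterization} with a concentration argument showing that society's aggregate cost has small $\ell_1$ mass. First I would reduce the whole statement to a single scalar quantity, $\norm{H}_1 = \sum_{j \in \items} H_j$. Applying \Cref{lemma: chores variational charachterization} with the \emph{constant} vector $g = \lambda\one$ and using $\sum_{j}\normc_{i,j}=1$ gives, for every $i \in \agents$ and every $\lambda \in \R$,
\[
\sum_{j \in \items} \frac{(\normc_{i,j} - H_j)^2}{\normc_{i,j}} \;\ge\; \lambda\big(\norm{H}_1 - 1\big) - \frac{\lambda^2}{4}.
\]
Optimizing over $\lambda$ (take $\lambda = 2(\norm{H}_1-1)\le 0$, which is legitimate since the AM--HM inequality $H_j \le \tfrac1n\sum_i \normc_{i,j}$ forces $\norm{H}_1 \le 1$) yields the clean, \emph{deterministic} bound $\sum_{j} \tfrac{(\normc_{i,j}-H_j)^2}{\normc_{i,j}} \ge (1-\norm{H}_1)^2$, valid for all realizations and all $i$. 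Hence it suffices to prove that $\norm{H}_1 \le 1 - \tfrac{\sqrt5}{4}$ with probability $1-O(1/n)$, as this gives $(1-\norm{H}_1)^2 \ge 5/16$ for every agent at once.

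The real work — and the main obstacle — is this upper bound on $\norm{H}_1$, which is delicate because $H_j = n/\sum_i (1/\normc_{i,j})$ is a harmonic mean and $1/\normc_{i,j} = c_i(\items)/c_{i,j}$ involves the heavy-tailed $1/c_{i,j}$, whose expectation diverges under $U[0,1]$; ordinary concentration therefore fails on the denominators. I would handle this in two steps. Step one normalizes $c_i(\items) = \sum_j c_{i,j}$: by Hoeffding applied to this sum of $m$ independent $U[0,1]$ variables, together with a union bound over the $n$ agents, with probability $1-O(1/n)$ every $c_i(\items) \ge \tfrac{m}{2}(1-\epsilon)$; here $m \in \Omega(n)$ makes both the relative fluctuation $O(\sqrt{\log n/m})$ negligible and the union bound $o(1/n)$. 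On this event $\sum_i 1/\normc_{i,j} \ge \tfrac m2(1-\epsilon)\sum_i 1/c_{i,j}$, so that $\norm{H}_1 \le \tfrac{2n}{m(1-\epsilon)}\sum_{j}\big(\sum_i 1/c_{i,j}\big)^{-1}$.

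Step two lower-bounds each inner sum by a large constant multiple of $n$ via \emph{truncation}, which is the crux of the argument: since $1/c_{i,j} \ge \min(1/c_{i,j},M)$ and the truncated variables lie in $[1,M]$ with mean $\EE[\min(1/c,M)] = 1+\ln M$, Hoeffding (range $M$) plus a union bound over the $m$ items gives, whp, $\sum_i 1/c_{i,j} \ge n(1+\ln M-\epsilon')$ for all $j$ simultaneously; in the regime $m = \Theta(n)$ and $M$ a fixed constant this union bound is $m\,e^{-\Omega(n/M^2)} = o(1/n)$. Taking $M$ a large enough constant that $1+\ln M$ exceeds $\approx 4.54$, and $\epsilon,\epsilon'$ small, the two steps combine to give
\[
\norm{H}_1 \;\le\; \frac{2n}{m(1-\epsilon)} \sum_{j \in \items} \frac{1}{\sum_i 1/c_{i,j}} \;\le\; \frac{2}{(1-\epsilon)(1+\ln M-\epsilon')} \;\le\; 1 - \frac{\sqrt5}{4},
\]
which is exactly the bound the first paragraph requires. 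Plugging back into $(1-\norm{H}_1)^2 \ge 5/16$ and taking a union bound over the $O(1/n)$-probability failure events of the two steps completes the proof. The delicate point to get right is the balance in the truncation level $M$: it must be large enough that the truncated mean $1+\ln M$ clears the constant $\approx 4.54$ forced by the target $1-\sqrt5/4$, yet small enough (constant) that Hoeffding with range $M$ still concentrates at scale $n$, so that the per-item union bound stays $o(1/n)$ for $m = \Theta(n)$.
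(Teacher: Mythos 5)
Your proposal is correct in its core reduction and takes a genuinely different route from the paper. The paper also feeds a constant test vector into \Cref{lemma: chores variational charachterization}, but it fixes $g=-\tfrac12\one$, obtaining the bound $\tfrac{7}{16}-\tfrac12\norm{H}_1$, and then controls $\norm{H}_1$ via HM--GM, i.e.\ $H_j\le\big(\prod_i \normc_{i,j}\big)^{1/n}$, together with concentration of $\sum_j\big(\prod_i c_{i,j}\big)^{1/n}$ around $m\left(\tfrac{n}{n+1}\right)^n$. You instead optimize over the constant $\lambda$, which gives the cleaner deterministic inequality $\sum_j \tfrac{(\normc_{i,j}-H_j)^2}{\normc_{i,j}}\ge(1-\norm{H}_1)^2$ and relaxes the requirement on society's mass from $\norm{H}_1\le\tfrac14$ to $\norm{H}_1\le 1-\tfrac{\sqrt5}{4}\approx 0.44$; and you bound $\norm{H}_1$ by truncating $1/c_{i,j}$ at a constant $M$, which forces $\sum_i 1/c_{i,j}\gtrsim n(1+\ln M)$ and hence $\norm{H}_1\lesssim 2/(1+\ln M)$, a quantity that can be driven below any fixed constant. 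This buys something real: the GM-based route only yields $\norm{H}_1\lesssim 2\left(\tfrac{n}{n+1}\right)^n\ge 2/e$, which meets neither threshold --- indeed the paper's final step asserts $\tfrac{7}{16}-\tfrac{4/9+o(1)}{1-o(1)}\ge\tfrac{5}{16}$, which cannot hold since $4/9>7/16$. Your truncation argument is precisely what is needed to actually close the lemma.

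One step to tighten: your Step two union-bounds over all $m$ items an event of probability $e^{-\Omega(n/M^2)}$, which is $o(1/n)$ only for $m\le e^{cn}$, whereas the statement allows arbitrary $m\in\Omega(n)$. The fix is routine: you do not need the truncated lower bound to hold for every $j$ simultaneously, only in aggregate. The quantities $\big(\sum_i 1/c_{i,j}\big)^{-1}$ are independent across $j$ and bounded by $1/n$, so a single Hoeffding bound on their sum, combined with an upper bound on $\EE\big[\big(\sum_i\min(1/c_i,M)\big)^{-1}\big]$ derived from the per-coordinate truncation estimate, yields the same conclusion for all $m$ in the stated range.
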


We are now ready to state our result about the existence of proportional allocations of chores.

\begin{theorem}\label{thm: chores iid proportional}
    In the i.i.d. $U[0,1]$ model $(c_{i,j} \sim U[0,1])$, a (strongly) proportional allocation exists with high probability $(1-O(1/n))$, when $m \in \Omega(n)$.
\end{theorem}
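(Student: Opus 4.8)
The plan is to mirror the proof of~\Cref{thm: iid goods}: show that the sufficient condition of~\Cref{theorem: condition for prop chores}, namely $\normc_{max} \le \frac{1}{n^2}\sum_{\ell \in \agents} \sum_{j \in \items} \frac{(\normc_{\ell, j} - H_j)^2}{\normc_{\ell, j}}$, holds with probability $1 - O(1/n)$ whenever $m \in \Omega(n)$, and then invoke that theorem to conclude existence of a (strongly) proportional allocation.

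First, I would lower bound the right-hand side of the condition. \Cref{lemma: stochastic chores} gives, with probability $1 - O(1/n)$ and simultaneously for every agent $i \in \agents$, that $\sum_{j \in \items} \frac{(\normc_{i, j} - H_j)^2}{\normc_{i, j}} \ge \frac{5}{16}$. Summing over the $n$ agents and dividing by $n^2$, the right-hand side is at least $\frac{1}{n^2}\cdot n \cdot \frac{5}{16} = \frac{5}{16n}$ on this event.

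Next, I would upper bound $\normc_{max}$. Since $c_{i,j} \le 1$ for all $i,j$ (the costs are drawn from $U[0,1]$), we have $\normc_{i,j} = c_{i,j}/c_i(\items) \le 1/c_i(\items)$, so $\normc_{max} \le 1/\min_{i \in \agents} c_i(\items)$. Each $c_i(\items) = \sum_{j=1}^m c_{i,j}$ is a sum of $m$ independent $U[0,1]$ variables with mean $m/2$; by a Hoeffding bound with deviation $\sqrt{m \log n}$ each agent satisfies $c_i(\items) \ge m\big(\tfrac12 - \sqrt{\log n / m}\big)$ except with probability $n^{-2}$, so a union bound over the $n$ agents yields $\min_{i\in\agents} c_i(\items) \ge m\big(\tfrac12 - \sqrt{\log n / m}\big)$ with probability $1 - O(1/n)$. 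Hence $\normc_{max} \le \frac{1}{m(1/2 - \sqrt{\log n/m})}$ on this event.

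Finally, I would intersect the two events (which jointly hold with probability $1 - O(1/n)$) and verify the inequality: it suffices that $\frac{1}{m(1/2 - \sqrt{\log n/m})} \le \frac{5}{16n}$. Assuming $m \ge c' n$ for a sufficiently large constant $c'$, we have $m(1/2-\sqrt{\log n/m}) \ge c'n\big(\tfrac12 - \sqrt{\log n /(c'n)}\big)$, which exceeds $\tfrac{16n}{5}$ for all large enough $n$ once $c' > 32/5$, giving the strict inequality; \Cref{theorem: condition for prop chores} then yields a strongly proportional allocation. I expect no genuinely hard step to remain here, since \Cref{lemma: stochastic chores} already absorbs the only delicate probabilistic argument (controlling the $\chi^2$-type dispersion $\sum_j (\normc_{i,j}-H_j)^2/\normc_{i,j}$ of the normalized costs around society's harmonic-mean cost $H$); what is left is the routine concentration of $c_i(\items)$ and the bookkeeping of constants.
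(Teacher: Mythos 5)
Your proposal is correct and follows essentially the same route as the paper's own proof: both invoke \Cref{lemma: stochastic chores} to lower bound the dispersion term by $\tfrac{5}{16}$ per agent, bound $\normc_{max}$ via Hoeffding concentration of $c_i(\items)$ around $m/2$, and conclude from \Cref{theorem: condition for prop chores} for $m \ge c'n$ and large $n$. The only difference is that you make the constant $c'$ explicit, which the paper leaves implicit.
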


\begin{proof}
    \Cref{lemma: stochastic chores} gives that, with high probability, $\sum_{j \in \items} \frac{(\normc_{i, j} - H_j)^2}{\normc_{i, j}} \ge \frac{1}{16}$, for all $i \in \agents$. Therefore, with high probability, $\sum_{\ell \in \agents} \sum_{j \in \items} \frac{(\normc_{i, j} - H_j)^2}{\normc_{i, j}} \geq \frac{1n}{16}$. \Cref{theorem: condition for prop chores} then implies that if, for all $i \in \agents$, $\frac{n^2}{c_i(\items)} \max_{j \in \items} c_{i,j}  \leq \frac{1n}{16}$, or, equivalently, for all $i \in \agents, j \in \items$, $\frac{c_{i,j}}{c_i(\items)} \leq \frac{1}{16n}$, then a proportional allocation exists. 
    With high probability, we have that for all $i \in \agents, j \in \items$, $\frac{c_{i,j}}{c_i(\items)} \leq \frac{1}{c_i(\items)} \leq \frac{1}{m}\left(\frac{1}{1/2 - \sqrt{\frac{\log n}{m}}} \right)$. Therefore, for $m \geq c' n$, for some constant $c'$, the desired condition holds for sufficiently large $n$; the theorem follows.
\end{proof}

\section*{Acknowledgements}

Alexandros Psomas and Marios Mertzanidis are supported in part by an NSF CAREER award CCF-2144208, and a research award from the Herbert Simon Family Foundation.

\bibliographystyle{alpha}
\bibliography{refs} 

\newpage

\appendix

\section{Proofs missing from~\Cref{sec: envy freeness}}\label{app: missing from main EF section}

\begin{proof}[Proof of~\Cref{lemma: copies lemma goods}]
First, observe that for every group $\ell \in [d]$:
\[
    \norm{v_\ell(\kb)}_2 = \sqrt{\sum_{j \in \items} (v_{\ell,j}(\kb))^2} = \sqrt{\sum_{z \in [t]} k_z \cdot (v_{\ell,z})^2} \ge \sqrt{\sum_{z \in [t]}\alpha \cdot (v_{\ell,z})^2} = \sqrt{\alpha} \norm{v_\ell}_2.
\]
Similarly:
\[\norm{v_\ell(\kb)}_2  = \sqrt{\sum_{j \in \items} (v_{\ell,j}(\kb))^2} = \sqrt{\sum_{z \in [t]} k_z \cdot (v_{\ell,z})^2} \le \sqrt{\sum_{z \in [t]} \beta \cdot (v_{\ell,z})^2} = \sqrt{\beta} \norm{v_\ell}_2.\]

Thus there exist $\gamma_i$, $\gamma_{i'}$ such that $\norm{v_i(\kb)}_2 = \frac{\norm{v_i}_2}{\gamma_i}$  and $\norm{v_{i'}(\kb)}_2 = \frac{\norm{v_{i'}}_2}{\gamma_{i'}}$, such that $ \frac{1}{\sqrt{\beta}}\le \gamma_i \le \frac{1}{\sqrt{\alpha}}$ and $ \frac{1}{\sqrt{\beta}}\le \gamma_{i'} \le \frac{1}{\sqrt{\alpha}}$. Without loss of generality, assume that $\gamma_{i'} \ge \gamma_{i}$. We have that:

\begin{align*}
    \norm{\normtwov_i(\kb) - \normtwov_{i'}(\kb)}_2 &= \sqrt{\sum_{j \in \items} (\normtwov_{i,j}(\kb) - \normtwov_{i',j}(\kb))^2} \\
    &= \sqrt{\sum_{j \in \items} \left(\frac{v_{i,j}(\kb)}{\norm{v_i(\kb)}_2} - \frac{v_{i',j}(\kb)}{\norm{v_{i'}(\kb)}_2}\right)^2} \\
    &= \sqrt{\sum_{j \in \items} \left(\gamma_i\frac{v_{i,j}(\kb)}{\norm{v_i}_2} - \gamma_{i'}\frac{v_{i',j}(\kb)}{\norm{v_{i'}}_2}\right)^2} \\
    &= \sqrt{\sum_{j \in \items} \left(\gamma_i\left(\frac{v_{i,j}(\kb)}{\norm{v_i}_2} - \frac{v_{i',j}(\kb)}{\norm{v_{i'}}_2}\right)- (\gamma_{i'}-\gamma_{i})\frac{v_{i',j}(\kb)}{\norm{v_{i'}}_2}\right)^2} \\
    &\ge \gamma_i\sqrt{\sum_{j \in \items}\left(\frac{v_{i,j}(\kb)}{\norm{v_i}_2} - \frac{v_{i',j}(\kb)}{\norm{v_{i'}}_2}\right)^2} - (\gamma_{i'}-\gamma_{i}) \sqrt{\sum_{j \in \items} \left(\frac{v_{i',j}(\kb)}{\norm{v_{i'}}_2} \right)^2} \tag{Triangle Inequality}\\
    &= \gamma_i\sqrt{\sum_{z \in [t]} k_z\left(\frac{v_{i,z}}{\norm{v_i}_2} - \frac{v_{i',z}}{\norm{v_{i'}}_2}\right)^2} - \frac{\gamma_{i'}-\gamma_{i}}{\gamma_{i'}} \underbrace{\sqrt{\sum_{j \in \items} \left(\frac{v_{i',j}(\kb)}{\norm{v_{i'}(\kb)}_2} \right)^2}}_{1}\\
    &\ge \gamma_i\sqrt{\sum_{z \in [t]} \alpha\left(\frac{v_{i,z}}{\norm{v_i}_2} - \frac{v_{i',z}}{\norm{v_{i'}}_2}\right)^2} - \frac{\gamma_{i'}-\gamma_{i}}{\gamma_{i'}} \tag{$k_z \ge \alpha$} \\
    &= \gamma_i \sqrt{\alpha} \cdot \sqrt{\sum_{z \in [t]}\left(\frac{v_{i,z}}{\norm{v_i}_2} - \frac{v_{i',z}}{\norm{v_{i'}}_2}\right)^2} - \frac{\gamma_{i'}-\gamma_{i}}{\gamma_{i'}}\\
    &= \gamma_i \cdot \sqrt{\alpha} \cdot  \norm{\normtwov_i - \normtwov_{i'}}_2 - \frac{\gamma_{i'}-\gamma_{i}}{\gamma_{i'}} \\
    &\ge \sqrt{\frac{\alpha}{\beta}} \cdot \norm{\normtwov_i - \normtwov_{i'}}_2 - \left(1-\sqrt{\frac{\alpha}{\beta}}\right).
\end{align*}

\end{proof}

\begin{proof}[Proof of~\Cref{lemma: function lower bound}]
    \begin{align*}
        x\left( \sqrt{\frac{x}{x+a}} \cdot b - \left(1-\sqrt{\frac{x}{x+a}}\right)\right)^2  &= x\left( (1+a/x)^{-1/2} \cdot (b+1) - 1\right)^2 \\
        &\ge x\left(\left(1-\frac{a}{2x} \right) \cdot (b+1) -1 \right)^2  \\
        &= x\left(b -\frac{a(b+1)}{2x} \right)^2  \\
        &\ge x\left(b^2 -\frac{ab(b+1)}{x} \right) \\
        &= xb^2 -ab(b+1)
    \end{align*}
    where in the second line we used the fact that $(1+a/x)^{-1/2} \ge 1-\frac{a}{2x}$ and $x\ge \frac{a(b+1)}{2b}$.
\end{proof}

\section{Proofs missing from~\Cref{sec: chores copies}}\label{app: missing from EF chores}

\begin{proof}[Proof of~\Cref{lemma: rounding envy chores}]

    LP~\eqref{LP_gap_chores} has $d \, m + 1$ variables ($\alpha$ and the $x_{i,j}$s), therefore, there exists an optimal solution with $d \, m + 1$ tight constraints. There are $d(d-1) + m +  d \, m + 1$ constraints, out of which $d \, m $ constraints are non-negativity constraints for $x_{i,j}$. By simple counting, we have that at least $d \, m + 1 - (d(d-1) + m + 1)$ out of these non-negativity constraints must be tight. Therefore, at most $m + d(d-1)$ of the ``$x_{i,j} \geq 0$'' are \emph{not} tight. Since, without loss of generality, $x_{i,j} > 0$ for some $i \in [d]$ and all $j \in \items$ (otherwise, we could set $x_{i,j} = 1/n$ for all $i \in [d]$ without affecting the objective), at most $d(d-1)$ chores are shared between groups of agents. 

    Let $x^*$ be an optimal solution of LP~\eqref{LP_gap_chores} where at most $d(d-1)$ chores are shared between groups. 
    For each group $i$ and item type $z \in [t]$, let $B_{i,z} = \sum_{j \in \items: \text{$j$ has type $z$}} x^*_{i,j}$ be the total fraction of type-$z$
    item that group $i$ received under $x^*$. Our aim is to make $B_{i,z}$s integral, without changing them too much. 

    For each $z\in[t]$, initialize $S_z \leftarrow 0$. We will have three phases of
    decreasing $B_{i,z}$s and increasing the corresponding $S_z$. After these phases, we will guarantee that for every $i \in [d]$ and $z \in [t]$, $B_{i,z} \equiv 0\pmod{n_i}$. Therefore, 
    a $B_{i,z}$ fraction of chores of type $z$ can be integrally and equally allocated to the individual agents of each group $i$, for every $i \in [d]$.
    We will also guarantee that $S_z = \sum_{i \in [d]} y_{i,z} \cdot n_i$ for some set of non-negative integers $\{y_{i,z}\}_{i \in [d]}$; this implies that a $S_z$ fraction of chores of type $z$
    can be integrally distributed among groups in a way that, for each group $i$, individuals of group $i$ receive the same number of chores.

    \emph{Phase 1.} 
    Consider every item $\ell \in \items$ such that there exist $i, i' \in [d]$ (with $i \neq i'$) with $x^*_{i,\ell}>0$ and $x^*_{i',\ell}>0$; that is, $\ell$ is an item that is fractionally shared among at least two groups of agents. For every such item $\ell$, set $B_{i,\ell} \leftarrow B_{i,\ell} - x^*_{i,\ell}$, for all $i \in [d]$, and $S_z \leftarrow S_z + \sum_{i \in [d]} x^*_{i,\ell}$, where $z$ is the type of item $\ell$. From our discussion above, we have that there are at most $d(d-1)$ such items $\ell$. Therefore, at the end of Phase 1, we have that $\sum_{z \in [t]} S_z \le d(d-1)$.

    \emph{Phase 2.} At the start of Phase 2, we have that for every item $j \in \items$ of type $z$, $B_{i,z}$ is an integer, for all $i \in [d]$. Let $B_{i,z} = n_i \cdot q_{i,z} +r_{i,z}$, where $r_{i,z} \in [0, n_i-1]$. 
    For every $z \in [t]$, set $B_{i,z} \leftarrow B_{i,z} - r_{i,z}$ for every $i \in [d]$, and set $S_z \leftarrow S_z + \sum_{i \in [d]} r_{i,z}$. Therefore, at the end of Phase 2, we have that $B_{i,z} = n_i \cdot q_{i,z}$ for some integer $q_{i,z}$, and that $\sum_{z \in [t]} S_z \le d(d-1) +  t\sum_{i \in [d]}(n_i-1) = d(d-1) + t(n-d)$.

    \emph{Phase 3.}  For every $z \in [t]$ while $S_z < \theta$, we pick an arbitrary group of agents $i \in [d]$, such that $B_{i,z} > 0$, and 
    set $B_{i,z} \leftarrow B_{i,z} - n_i$ and $S_z \leftarrow S_z + n_i$. First, note that it cannot be that $B_{i,z} = 0$ for all $i \in [d]$ and $S_z < \theta$, since $k_z = \sum_{i\in [d]} B_{i,z} + S_z \geq \theta$; therefore, if $S_z < \theta$ then $B_{i,z} > 0$. Moreover, 
    at the end of Phase 2, we are guaranteed that $B_{i,z} = n_i \cdot q_{i,z}$, for some non-negative integer $q_{i,z}$; therefore $B_{i,z} \leftarrow B_{i,z} - n_i$ is a valid operation. Repeating until $S_z \geq \theta$ for all $z \in [t]$, we have that
    $\sum_{z \in [t]} S_z \le d(d-1) + t(n-d)+t(\theta+n_d-1)$. We also have that for each $z \in [t]$, $\sum_{i \in [d]} B_{i,z} = \sum_{i \in [d]} n_i \cdot q'_{i,z}$ for some set of integers $\{q'_{i,z}\}_{i \in [d]}$. Thus, $\sum_{i \in [d]} B_{i,z} \equiv 0 \pmod{g}$. But since $k_z \equiv 0 \pmod{g}$ and $S_z + \sum_{i \in [d]} B_{i,z} = k_z$, we also have that $S_z \equiv 0 \pmod{g}$ for all $z \in [t]$.

    Since $S_z \equiv 0 \pmod{g}$ and $S_z \geq \theta$, applying $\Cref{corollary: frobenius coin problem}$ we have that there exists a set of non-negative integers $\{y_{i,z}\}_{i \in [d]}$ such that $S_z = \sum_{i \in [d]} y_{i,z} \cdot n_i$. We can therefore allocate $y_{i,z} \cdot n_i$ chores of type $z$ to each group $i$, for all $i \in [d]$, $z \in [t]$. 
    Since $\sum_{i \in [d]} B_{i,z} = \sum_{i \in [d]} n_i \cdot q'_{i,z}$ for some set of integers $\{q'_{i,z}\}_{i \in [d]}$, we can also allocate $q'_{i,z}$ chores of type $z$ to each group $i$. Let $A$ be this allocation, and notice that $A$ allocates all chores.
    We have that $\sum_{z \in [t]} y_{i,z} \le \sum_{z \in [t]} S_z \le d(d-1) + t(n-d)+t(\theta+n_d-1)$. Thus, after Phase 3, every agent group has received at most $d(d-1) + t(n-d)+t(\theta+n_d-1)$ extra chores (compared to the allocation $x^*$).
    All agents of the same group have received exactly the same integral bundle. Also, compared to the allocation $x^*$, for each agent in group $i$ at most $(d(d-1) +t(\theta+n + n_d - d -1))\max_{j \in [t]} c_{i,j}(\kb)$ dis-utility has been removed from the bundle of an other group and and has been allocated to group $i$. Thus $\min_{ i' \in [d]} \cgap_{i,i'}(A) \ge  \min_{ i' \in [d]} \cgap_{i,i'}(x^*)  - 2(d(d-1) +t(\theta+n + n_d - d -1))\max_{j \in [t]} c_{i,j}(\kb) = \norm{c_i(\kb)}_2 \cdot\alpha^* - 2(d(d-1) +t(\theta+n + n_d - d -1))\max_{j \in [t]} c_{i,j}(\kb)$.
\end{proof}

\begin{proof}[Proof of~\Cref{lemma: copies lemma chores}]
    It is easy to verify that $\alpha \norm{c_i}_1 \le \norm{c_i(\kb)}_1 \le \beta \norm{c_i}_1$. Thus there exist $\alpha\le \gamma_{i'} \le\beta$ and $\alpha\le \gamma_{i} \le\beta$ such that $\norm{c_i(\kb)}_1 = \gamma_i \cdot \norm{c_i}_1$ and $\norm{c_{i'}(\kb)}_1 = \gamma_{i'} \cdot \norm{c_{i'}}_1$.
    \begin{align*}
        D_{KL}(\normc_i(\kb) || \normc_{i'}(\kb)) &= \sum_{j \in \items} \normc_{i,j}(\kb)\ln \left(\frac{\normc_{i,j}(\kb)}{\normc_{i',j}(\kb)} \right)\\
        &= \sum_{j \in \items} \normc_{i,j}(\kb)\ln \left(\frac{c_{i,j}(\kb)/\norm{c_{i}(\kb)}_1}{c_{i',j}(\kb)/\norm{c_{i'}(\kb)}_1} \right)\\
        &= \sum_{j \in \items} \normc_{i,j}(\kb)\ln \left(\frac{c_{i,j}(\kb)/\norm{c_{i}}_1}{c_{i',j}(\kb)/\norm{c_{i'}}_1} \cdot \frac{\gamma_{i'}}{\gamma_i} \right)\\
        &= \sum_{j \in \items} \normc_{i,j}(\kb)\ln \left( \frac{\gamma_{i'}}{\gamma_i} \right)+ \sum_{j \in \items} \normc_{i,j}(\kb)\ln \left(\frac{c_{i,j}(\kb)/\norm{c_{i}}_1}{c_{i',j}(\kb)/\norm{c_{i'}}_1} \right)  \\
        &= \ln \left( \frac{\gamma_{i'}}{\gamma_i} \right)+ \sum_{j \in \items} \frac{c_{i,j}(\kb)}{\norm{c_i(\kb)}_1}\ln \left(\frac{c_{i,j}(\kb)/\norm{c_{i}}_1}{c_{i',j}(\kb)/\norm{c_{i'}}_1} \right)  \\
        &= \ln \left( \frac{\gamma_{i'}}{\gamma_i} \right)+ \frac{1}{\gamma_i} \sum_{z \in [t]} k_z \cdot \frac{c_{i,z}}{\norm{c_i}_1}\ln \left(\frac{c_{i,z}/\norm{c_{i}}_1}{c_{i',z}/\norm{c_{i'}}_1} \right) \\
        &= \ln \left( \frac{\gamma_{i'}}{\gamma_i} \right)+ \frac{1}{\gamma_i} \sum_{z \in [t]} k_z \cdot \normc_{i,z}\ln \left(\frac{\normc_{i,z}}{\normc_{i',z}} \right)\\
        &= \ln \left( \frac{\gamma_{i'}}{\gamma_i} \right)+ \frac{1}{\gamma_i} \left( \sum_{z \in [t]} k_z \cdot \normc_{i,z}\ln \left(\normc_{i,z}\right) -  \sum_{z \in [t]} k_z \cdot \normc_{i,z}\ln \left(\normc_{i',z}\right) \right )\\
        &\ge \ln \left( \frac{\gamma_{i'}}{\gamma_i} \right)+ \frac{1}{\gamma_i} \left( \beta \sum_{z \in [t]} \normc_{i,z}\ln \left(\normc_{i,z}\right) -  \alpha \sum_{z \in [t]} \normc_{i,z}\ln \left(\normc_{i',z}\right) \right ) \tag{$0<\normc_{i,z} \le 1$}\\
        &= \ln \left( \frac{\gamma_{i'}}{\gamma_i} \right)+ \frac{\alpha}{\gamma_i} \sum_{z \in [t]} \normc_{i,z}\ln \left(\frac{\normc_{i,z}}{\normc_{i',z}} \right) + \frac{\beta-\alpha}{\gamma_i}\sum_{z \in [t]} \normc_{i,z}\ln \left(\normc_{i,z}\right) \\
        &\ge \ln \left( \frac{\gamma_{i'}}{\gamma_i} \right)+ \frac{\alpha}{\gamma_i} \sum_{z \in [t]} \normc_{i,z}\ln \left(\frac{\normc_{i,z}}{\normc_{i',z}} \right) + \frac{\beta-\alpha}{\gamma_i}\sum_{z \in [t]} \normc_{i,z}\ln \left(\normc_{min}\right)\\
        &\ge \frac{\alpha}{\beta} D_{KL}(\normc_i || \normc_{i'}) - \left(\ln \left( \frac{\beta}{\alpha} \right) + \frac{\beta-\alpha}{\alpha}\ln \left(\normc_{min}\right)^{-1} \right).
    \end{align*}
\end{proof}

\begin{proof}[Proof of~\Cref{lemma:technical inequality chores}]
    \begin{align*}
        x\left(\frac{x}{x+a}b- \ln \frac{x+a}{x} \right) -c\ln(x+a)&= x\left(\frac{x}{x+a}b- \ln \left(1+a/x \right) \right) -c\ln(x+a)\\
        &\ge x\left(\frac{x}{x+a}b- \frac{a}{x} \right) -c\ln(x+a) \tag{$\ln(1+u) \le u$}\\
        &\ge x\left(\frac{x-a}{x}b- \frac{a}{x} \right) -c\ln(x+a) \tag{$\frac{x-a}{x} \le \frac{x}{x+a}$}\\
        &= xb - c\ln(x+a) -a(b+1)
    \end{align*}
    Now for function $f(x) = xb - c\ln(x+a)$, notice that $f''(x) = \frac{c}{(a+x)^2} \ge 0$. Thus, $f(x)$ is convex, which implies that it always lies above its tangent lines. We pick the tangent for $x=\frac{2c}{b}-a$ since it simplifies our bounds. Combining the above we have that:
    \begin{align*}
        x\left(\frac{x}{x+a}b- \ln \frac{x+a}{x} \right) -c\ln(x+a)&\ge xb - c\ln(x+a) -a(b+1) \\
        &= f(x) - a(b+1) \\
        &\ge f\left( \frac{2c}{b}-a\right) + f'\left( \frac{2c}{b}-a\right)\cdot \left( x - \left(\frac{2c}{b}-a \right) \right) - a(b+1) \\
        &= 2c -ab- c\ln\left(\frac{2c}{b}\right) + \frac{b}{2}x - c +\frac{ab}{2} -ab-a \\
        &= \frac{b}{2}x - \left(a\left(\frac{3}{2}b+1\right) + c\left(\ln\left(\frac{2c}{b}\right) - 1 \right) \right).
    \end{align*}
    
\end{proof}

\begin{proof}[Proof of~\Cref{application: chores copies}]

    Let $\mu = 2\frac{n+ \left( \frac{5}{2}n+\lambda-1 \right)\ln \left(\normc_{min}\right)^{-1} + \lambda\left(\ln\left(\frac{2\lambda}{\min_{\hat{i}, \tilde{i} \in [d]} D_{KL}(\normc_{\hat{i}} || \normc_{\tilde{i}})}\right)-1 \right) }{\min_{\hat{i}, \tilde{i} \in [d]} D_{KL}(\normc_{\hat{i}} || \normc_{\tilde{i}})}$. First, notice that, without loss of generality, we can assume that $k_z \in [\mu, \mu+n-1]$. If that is not the case, and $k_z > \mu+n-1$, we can simply allocate one copy of item $z$ to each agent and continue this process until the condition is satisfied; since all agents receive the exact same amount of items, envy is not affected by this process. Also, since we allocate copies in multiples of $n$ and $n \equiv 0 \pmod{g}$, then the remaining number of copies is also divisible by $g$ (so, the condition $k_z \equiv 0 \pmod{g}$ also remains true).

    From \Cref{theorem: Conditions for envy freeness chores} we know that if $\normc_{max}(\kb) \leq  \frac{ \min_{\hat{i}, \tilde{i} \in [d]} D_{KL}(\normc_{\hat{i}}(\kb) || \normc_{\tilde{i}}(\kb))}{ \lambda \ln \normc_{min}(\kb)^{-1}}$, then an envy-free allocation exists. Notice that $\normc_{max}(\kb) \le \frac{\normc_{max}}{\mu}\le \frac{1}{\mu}$. Thus, we conclude that if 
    \[\mu \cdot \min_{i \in [d],i' \in [d] } D_{KL}(\normc_{\hat{i}}(\kb) || \normc_{\tilde{i}}(\kb)) \ge  \lambda \ln \normc_{min}(\kb)^{-1}\] 
    an envy-free allocation exists. We also have that for any $\hat{i}, \tilde{i} \in [d]$:
    \begin{align*}
        0&= \mu \frac{ D_{KL}(\normc_{\hat{i}} || \normc_{\tilde{i}})}{2} - \left(n+ \left( \frac{5}{2}n+\lambda-1 \right)\ln \left(\normc_{min}\right)^{-1} + \lambda\left(\ln\left(\frac{2\lambda}{ D_{KL}(\normc_{\hat{i}} || \normc_{\tilde{i}})}\right)-1 \right)\right) \\
        &= \mu \frac{ D_{KL}(\normc_{\hat{i}} || \normc_{\tilde{i}})}{2} - n\left( \frac{3}{2}\ln \left(\normc_{min}\right)^{-1} +1 \right) - (n-1)\ln \left(\normc_{min}\right)^{-1} - \lambda  \ln \left(\normc_{min}\right)^{-1}  \\
&\hspace{250pt}- \lambda\left(\ln\left(\frac{2\lambda}{ D_{KL}(\normc_{\hat{i}} || \normc_{\tilde{i}})}\right)-1 \right)\\
        &\le \mu \frac{ D_{KL}(\normc_{\hat{i}} || \normc_{\tilde{i}})}{2} - (n-1)\left( \frac{3  D_{KL}(\normc_{\hat{i}} || \normc_{\tilde{i}})}{2} +1 \right) - (n-1)\ln \left(\normc_{min}\right)^{-1}   \\
&\hspace{200pt}- \lambda  \ln \left(\normc_{min}\right)^{-1} - \lambda\left(\ln\left(\frac{2\lambda}{ D_{KL}(\normc_{\hat{i}} || \normc_{\tilde{i}})}\right)-1 \right) \tag{$ D_{KL}(\normc_{\hat{i}} || \normc_{\tilde{i}}) \le \ln\left(\normc_{min}\right)^{-1}$}\\
&\le \mu \left(\frac{\mu}{\mu+n-1} D_{KL}(\normc_{\hat{i}} || \normc_{\tilde{i}}) - \ln \left(\frac{\mu +n -1}{\mu} \right) \right) - \lambda \ln(\mu+n-1) -\lambda \ln \left(\normc_{min}\right)^{-1} \\
&\hspace{250pt}- (n-1)\ln \left(\normc_{min}\right)^{-1} \tag{\Cref{lemma:technical inequality chores}}\\
&= \mu \left(\frac{\mu}{\mu+n-1} D_{KL}(\normc_{\hat{i}} || \normc_{\tilde{i}}) - \ln \left(\frac{\mu +n -1}{\mu} \right) - \frac{n-1}{\mu}\ln \left(\normc_{min}\right)^{-1}\right) - \lambda \ln \left(\frac{\normc_{min}}{\mu +n -1}\right)^{-1} \\
& \le \mu D_{KL}(\normc_{\hat{i}}(\kb) || \normc_{\tilde{i}}(\kb))- \lambda \ln \left(\frac{\normc_{min}}{\mu +n -1}\right)^{-1} \tag{\Cref{lemma: copies lemma chores}} \\
&\le \mu D_{KL}(\normc_{\hat{i}}(\kb) || \normc_{\tilde{i}}(\kb))- \lambda \ln \left(\normc_{min}(\kb)\right)^{-1} \tag{$\normc_{min}(\kb)\ge \frac{\normc_{min}}{\mu +n -1}$}
    \end{align*}
    Combining the above, we get that an envy-free allocation exists.
\end{proof}

\section{Proofs missing from~\Cref{section: cake cutting}}\label{app: missing from cake cutting}

\begin{proof}[Proof of~\Cref{lemma: Discrete cake eucledian bound}]

    Consider the function $\bar{f}_i(x) =  \frac{1}{\epsilon} \int_{\floor{x/\epsilon}\epsilon}^{\left(\floor{x/\epsilon}+1\right)\epsilon} \normtwou_i(y) dy$. 

    \begin{align*}
        |\normtwou_i(x) - \bar{f}_i(x)| & = \left|\normtwou_i(x) - \frac{1}{\epsilon} \int_{\floor{x/\epsilon}\epsilon}^{\left(\floor{x/\epsilon}+1\right)\epsilon} \normtwou_i(y) dy \right| \\
        &= \left|\frac{1}{\epsilon} \int_{\floor{x/\epsilon}\epsilon}^{\left(\floor{x/\epsilon}+1\right)\epsilon}\normtwou_i(x) dy - \frac{1}{\epsilon} \int_{\floor{x/\epsilon}\epsilon}^{\left(\floor{x/\epsilon}+1\right)\epsilon} \normtwou_i(y) dy \right| \\
        &=\left| \frac{1}{\epsilon} \int_{\floor{x/\epsilon}\epsilon}^{\left(\floor{x/\epsilon}+1\right)\epsilon} \normtwou_i(x) -\normtwou_i(y)dy \right|\\
        &\le \frac{1}{\epsilon} \int_{\floor{x/\epsilon}\epsilon}^{\left(\floor{x/\epsilon}+1\right)\epsilon} |\normtwou_i(x) -\normtwou_i(y)|dy\\
        &\le \frac{1}{\epsilon} \int_{\floor{x/\epsilon}\epsilon}^{\left(\floor{x/\epsilon}+1\right)\epsilon} k|x -y|dy\\
        &= \frac{k}{\epsilon}\left(\int_{\floor{x/\epsilon}\epsilon}^{x} (x-y)dy + \int_{x}^{\left(\floor{x/\epsilon}+1\right)\epsilon} (y-x) dy \right)\\
        &= \frac{k}{\epsilon}\left(x\left(x - \floor{x/\epsilon}\epsilon \right) - x\left( \left(\floor{x/\epsilon} +1\right)\epsilon - x\right) \right. \\
        &\left. \hspace{150pt} -\frac{1}{2} \left[x^2 - \left( \floor{x/\epsilon}\epsilon \right)^2\right] +\frac{1}{2} \left[ \left( \left(\floor{x/\epsilon} +1 \right)\epsilon \right)^2 - x^2\right] \right) \\
        &=\frac{k}{\epsilon}\left(x\left(2x - \left(2\floor{x/\epsilon} +1\right)\epsilon \right) - x^2 + \frac{1}{2} \left(\left( \floor{x/\epsilon}\epsilon \right)^2 + \left( \left(\floor{x/\epsilon} +1 \right)\epsilon \right)^2 \right)\right)\\
        &= \frac{k}{\epsilon}\left(x^2 - x\left(2\floor{x/\epsilon} +1\right)\epsilon + \left( \floor{x/\epsilon}\epsilon \right)^2 + \floor{x/\epsilon}\epsilon^2  + \frac{1}{2} \epsilon^2\right)\\
        &= \frac{k}{\epsilon}\left(\left( x- \floor{x/\epsilon} \epsilon - \frac{\epsilon}{2} \right)^2 + \frac{\epsilon^2}{4}\right)\\
        & \le \frac{k \epsilon}{2}.
    \end{align*}

    \[\sqrt{\int_0^1 \left(\normtwou_i(x) -\bar{f}_i(x)\right)^2dx} \le  \sqrt{\int_0^1 \left(\frac{k \epsilon}{2}\right)^2dx} = \left(\frac{k \epsilon}{2}\right)\]

    We have the following:

    \begin{align*}
        \sqrt{\int_0^1 (\normtwou_i(x))^2dx} &= \sqrt{\int_0^1 (\bar{f}_i(x) + (\normtwou_i(x)-\bar{f}_i(x)))^2dx} \\
        &\le \sqrt{\int_0^1 (\bar{f}_i(x))^2dx} + \sqrt{\int_0^1 (\normtwou_i(x)-\bar{f}_i(x))^2dx} \tag{Minkowski's Inequality} \\
        & \le \sqrt{\int_0^1 (\bar{f}_i(x))^2dx} + \sqrt{\int_0^1 \left(\frac{k \epsilon}{2}\right)^2dx}\\
        &= \sqrt{\int_0^1 (\bar{f}_i(x))^2dx} + \frac{k \epsilon}{2}
    \end{align*}
    Thus $\sqrt{\int_0^1 (\bar{f}_i(x))^2dx} \ge  \sqrt{\int_0^1 (\normtwou_i(x))^2dx} - \frac{k \epsilon}{2} = 1-\frac{k \epsilon}{2}$. Similarly we can get that:
    \begin{align*}
        \sqrt{\int_0^1 (\bar{f}_i(x))^2dx} &= \sqrt{\int_0^1 (\normtwou_i(x) + (\bar{f}_i(x)-\normtwou_i(x)))^2dx} \\
        &\le \sqrt{\int_0^1 (\normtwou_i(x))^2dx} + \sqrt{\int_0^1 (\bar{f}_i(x)-\normtwou_i(x))^2dx} \tag{Minkowski's Inequality} \\
        & \le \sqrt{\int_0^1 (\normtwou_i(x))^2dx} + \sqrt{\int_0^1 \left(\frac{k \epsilon}{2}\right)^2dx}\\
        &= 1 + \frac{k \epsilon}{2}
    \end{align*}

    \begin{align*}
        &\int_{0}^{1}(\normtwou_i(x)-\normtwou_{i'}(x))^2dx- \sum_{j=0}^{1/\epsilon-1} \left( \frac{\int_{\epsilon j}^{\epsilon(1+j)}\normtwou_{i}(x)dx}{\sqrt{\sum_{j'=0}^{1/\epsilon-1} \left(\int_{\epsilon j'}^{ \epsilon ( j'+1)}\normtwou_i(x)dx\right)^2}} - \frac{\int_{\epsilon j}^{\epsilon(1+j)}\normtwou_{i'}(x)dx}{\sqrt{\sum_{j'=0}^{1/\epsilon-1} \left(\int_{\epsilon j'}^{ \epsilon ( j'+1)}\normtwou_{i'}(x)dx\right)^2}}   \right)^2 \\
        & = \int_{0}^{1}(\normtwou_i(x))^2+ (\normtwou_{i'}(x))^2 -2\normtwou_i(x) \cdot \normtwou_{i'}(x)dx \\
        &~~- \sum_{j=0}^{1/\epsilon-1} \left( \frac{\int_{\epsilon j}^{\epsilon(1+j)}\normtwou_{i}(x)dx}{\sqrt{\sum_{j'=0}^{1/\epsilon-1} \left(\int_{\epsilon j'}^{ \epsilon ( j'+1)}\normtwou_i(x)dx\right)^2}} - \frac{\int_{\epsilon j}^{\epsilon(1+j)}\normtwou_{i'}(x)dx}{\sqrt{\sum_{j'=0}^{1/\epsilon-1} \left(\int_{\epsilon j'}^{ \epsilon ( j'+1)}\normtwou_{i'}(x)dx\right)^2}}   \right)^2\\
        &= 2- 2 \int_{0}^{1}\normtwou_i(x) \cdot \normtwou_{i'}(x)dx - \sum_{j=0}^{1/\epsilon-1} \left( \frac{\int_{\epsilon j}^{\epsilon(1+j)}\normtwou_{i}(x)dx}{\sqrt{\sum_{j'=0}^{1/\epsilon-1} \left(\int_{\epsilon j'}^{ \epsilon ( j'+1)}\normtwou_i(x)dx\right)^2}} - \frac{\int_{\epsilon j}^{\epsilon(1+ j)}\normtwou_{i'}(x)dx}{\sqrt{\sum_{j'=0}^{1/\epsilon-1} \left(\int_{\epsilon j'}^{ \epsilon ( j'+1)}\normtwou_{i'}(x)dx\right)^2}}   \right)^2 \\
        &= 2- 2 \int_{0}^{1}\normtwou_i(x) \cdot \normtwou_{i'}(x)dx - \sum_{j=0}^{1/\epsilon -1}\left(\frac{\left(\int_{\epsilon j}^{\epsilon(1+j)}\normtwou_{i}(x)dx\right)^2}{\sum_{j'=0}^{1/\epsilon-1} \left(\int_{\epsilon j'}^{ \epsilon ( j'+1)}\normtwou_i(x)dx\right)^2} +  \frac{\left(\int_{\epsilon j}^{\epsilon(1+j)}\normtwou_{i'}(x)dx\right)^2}{\sum_{j'=0}^{1/\epsilon-1} \left(\int_{\epsilon j'}^{ \epsilon ( j'+1)}\normtwou_{i'}(x)dx\right)^2}\right. \\ & \left. ~~~~-2 \frac{\int_{\epsilon j}^{\epsilon(1+j)}\normtwou_{i}(x)dx \cdot \int_{\epsilon j}^{\epsilon(1+j)}\normtwou_{i'}(x)dx}{\sqrt{\sum_{j'=0}^{1/\epsilon-1} \left(\int_{\epsilon j'}^{ \epsilon ( j'+1)}\normtwou_{i}(x)dx\right)^2 \cdot \sum_{j'=0}^{1/\epsilon-1} \left(\int_{\epsilon j'}^{ \epsilon ( j'+1)}\normtwou_{i'}(x)dx\right)^2}}\right) \\
        &= 2\left(\sum_{j=0}^{1/\epsilon-1}\frac{\int_{\epsilon j}^{\epsilon(1+j)}\normtwou_{i}(x)dx \cdot \int_{\epsilon j}^{\epsilon(1+j)}\normtwou_{i'}(x)dx}{\sqrt{\sum_{j'=0}^{1/\epsilon-1} \left(\int_{\epsilon j'}^{ \epsilon ( j'+1)}\normtwou_{i}(x)dx\right)^2 \cdot \sum_{j'=0}^{1/\epsilon-1} \left(\int_{\epsilon j'}^{ \epsilon ( j'+1)}\normtwou_{i'}(x)dx\right)^2}} -  \int_{0}^{1}\normtwou_i(x) \cdot \normtwou_{i'}(x)dx\right) \\
        &= 2\left(\sum_{j=0}^{1/\epsilon-1}\frac{\epsilon^2 \bar{f}_i(j) \cdot \bar{f}_{i'}(j)}{\sqrt{\sum_{j'=0}^{1/\epsilon-1} \left(\int_{\epsilon j'}^{ \epsilon ( j'+1)}\normtwou_{i}(x)dx\right)^2 \cdot \sum_{j'=0}^{1/\epsilon-1} \left(\int_{\epsilon j'}^{ \epsilon ( j'+1)}\normtwou_{i'}(x)dx\right)^2}} -  \int_{0}^{1}\normtwou_i(x) \cdot \normtwou_{i'}(x)dx\right) \\
        &=2\left(\frac{\epsilon \sum_{j=0}^{1/\epsilon-1} \int_{\epsilon j}^{\epsilon(1+j)} \bar{f}_i(\epsilon j) \cdot \bar{f}_{i'}(\epsilon j)dx}{\sqrt{\sum_{j'=0}^{1/\epsilon-1} \left(\int_{\epsilon j'}^{ \epsilon ( j'+1)}\normtwou_{i}(x)dx\right)^2 \cdot \sum_{j'=0}^{1/\epsilon-1} \left(\int_{\epsilon j'}^{ \epsilon ( j'+1)}\normtwou_{i'}(x)dx\right)^2}} -  \int_{0}^{1}\normtwou_i(x) \cdot \normtwou_{i'}(x)dx\right)\\
        &=2\left(\frac{\epsilon \sum_{j=0}^{1/\epsilon-1} \int_{\epsilon j}^{\epsilon(1+j)} \bar{f}_i(x) \cdot \bar{f}_{i'}(x)dx}{\sqrt{\sum_{j'=0}^{1/\epsilon-1} \left(\int_{\epsilon j'}^{ \epsilon ( j'+1)}\normtwou_{i}(x)dx\right)^2 \cdot \sum_{j'=0}^{1/\epsilon-1} \left(\int_{\epsilon j'}^{ \epsilon ( j'+1)}\normtwou_{i'}(x)dx\right)^2}} -  \int_{0}^{1}\normtwou_i(x) \cdot \normtwou_{i'}(x)dx\right) \tag{$\normtwou_i(\epsilon j)=\normtwou_i(x)$ for $x \in [\epsilon j, \epsilon (j+1)]$}\\
        &= 2\left(\frac{\epsilon \int_{0}^{1} \bar{f}_i(x) \cdot \bar{f}_{i'}(x)dx}{\sqrt{\sum_{j'=0}^{1/\epsilon-1} \left(\int_{\epsilon j'}^{ \epsilon ( j'+1)}\normtwou_{i}(x)dx\right)^2 \cdot \sum_{j'=0}^{1/\epsilon-1} \left(\int_{\epsilon j'}^{ \epsilon ( j'+1)}\normtwou_{i'}(x)dx\right)^2}} -  \int_{0}^{1}\normtwou_i(x) \cdot \normtwou_{i'}(x)dx\right)\\
        &= 2\left(\frac{\epsilon \int_{0}^{1} \bar{f}_i(x) \cdot \bar{f}_{i'}(x)dx}{\sqrt{\sum_{j'=0}^{1/\epsilon-1} \left(\epsilon \bar{f}_i(\epsilon j') \right)^2 \cdot \sum_{j'=0}^{1/\epsilon-1} \left(\epsilon \bar{f}_{i'}(\epsilon j') \right)^2}} -  \int_{0}^{1}\normtwou_i(x) \cdot \normtwou_{i'}(x)dx\right) \\
        &=2\left(\frac{\epsilon \int_{0}^{1} \bar{f}_i(x) \cdot \bar{f}_{i'}(x)dx}{\sqrt{\epsilon\sum_{j'=0}^{1/\epsilon-1} \int_{\epsilon j'}^{\epsilon(j'+1)}\left(\bar{f}_i(\epsilon j') \right)^2dx \cdot \epsilon \sum_{j'=0}^{1/\epsilon-1} \int_{\epsilon j'}^{\epsilon(1+j')}\left( \bar{f}_{i'}(\epsilon j') \right)^2}} -  \int_{0}^{1}\normtwou_i(x) \cdot \normtwou_{i'}(x)dx\right)\\
        &= 2\left(\frac{\int_{0}^{1} \bar{f}_i(x) \cdot \bar{f}_{i'}(x)dx}{\sqrt{\sum_{j'=0}^{1/\epsilon-1} \int_{\epsilon j'}^{\epsilon(j'+1)}\left(\bar{f}_i(x) \right)^2dx \cdot \sum_{j'=0}^{1/\epsilon-1} \int_{\epsilon j'}^{\epsilon(1+j')}\left( \bar{f}_{i'}(x) \right)^2}} -  \int_{0}^{1}\normtwou_i(x) \cdot \normtwou_{i'}(x)dx\right)\\
        &= 2\left(\frac{\int_{0}^{1} \bar{f}_i(x) \cdot \bar{f}_{i'}(x)dx}{\sqrt{ \int_{0}^{1}\left(\bar{f}_i(x) \right)^2dx \cdot \int_{0}^{1}\left( \bar{f}_{i'}(x) \right)^2}} -  \int_{0}^{1}\normtwou_i(x) \cdot \normtwou_{i'}(x)dx\right) \\
        &\le \left(\frac{\int_{0}^{1} \bar{f}_i(x) \cdot \bar{f}_{i'}(x)dx}{\left( 1-\frac{k \epsilon}{2} \right)} -  \int_{0}^{1}\normtwou_i(x) \cdot \normtwou_{i'}(x)dx\right) \\
        &= \left(\int_{0}^{1} \bar{f}_i(x) \cdot \bar{f}_{i'}(x)dx -  \int_{0}^{1}\normtwou_i(x) \cdot \normtwou_{i'}(x)dx\right) + \frac{\frac{k \epsilon}{2}}{\left( 1-\frac{k \epsilon}{2} \right)}\int_{0}^{1} \bar{f}_i(x) \cdot \bar{f}_{i'}(x)dx\\
        &= \left(\int_{0}^{1} \bar{f}_i(x) \cdot \left( \bar{f}_{i'}(x) - \normtwou_{i'}(x) \right) dx +  \int_{0}^{1}\normtwou_{i'}(x) \cdot \left(\bar{f}_{i}(x)- \normtwou_{i}(x) \right)dx\right) + \frac{\frac{k \epsilon}{2}}{\left( 1-\frac{k \epsilon}{2} \right)}\int_{0}^{1} \bar{f}_i(x) \cdot \bar{f}_{i'}(x)dx\\
        & \le \sqrt{\int_{0}^{1} \left(\bar{f}_i(x)\right)^2dx \cdot \int_{0}^{1} \left(\bar{f}_{i'}(x)-\normtwou_{i'}(x)\right)^2dx} + \sqrt{\int_{0}^{1} \left(\normtwou_{i'}(x)\right)^2dx \cdot \int_{0}^{1} \left(\bar{f}_{i}(x)-\normtwou_{i}(x)\right)^2dx} \\
        &~~~~~~+ \frac{k\epsilon}{2-k\epsilon}\sqrt{\int_{0}^{1} \left(\bar{f}_i(x)\right)^2dx \cdot \int_{0}^{1} \left(\bar{f}_{i'}(x)\right)^2dx }\tag{Cauchy-Schwartz}\\
        & \le \left(1+ \frac{k\epsilon}{2} \right)\frac{k\epsilon}{2} + \frac{k\epsilon}{2} + \frac{k\epsilon}{2-k\epsilon}\left(1+ \frac{k\epsilon}{2} \right)^2 \\
        & \le 3.5 \,  k \epsilon. \tag{Assuming $k \epsilon \le 1$}
    \end{align*}
 \end{proof}

\section{Proofs missing from~\Cref{sec: proportionality}}\label{app: missing from proportionality}

\begin{proof}[Proof of~\Cref{lemma: stochastic prop}]
     From the variational characterization of chi-squared divergence, we know that $D_{\chi^2}(\normv_i || S) \ge \langle \normv_i, g \rangle - \frac{\langle S, g^2 \rangle}{4}-1$, where $g$ is any vector in $\mathbb{R}^m$ and $g^2 = [g_1^2, g_2^2, \cdots, g_m^2]$. By choosing $g = 2m\cdot \normv_i$, we have that $D_{\chi^2}(\normv_i || S) \ge \langle \normv_i, 2m\cdot \normv_i \rangle - \frac{\langle S, (2m\cdot \normv_i)^2 \rangle}{4}-1$.

     From Hoeffding's bound, and using the fact that $\mathbb{E}[v^2_{i,j}] = \frac{1}{3}$ and $\mathbb{E}[v^3_{i,j}] = \frac{1}{4}$, we have the following bounds, for all $i \in \agents$: (i) $\PP\left[\sum_{j' \in \items}v_{i,j'} \ge \frac{m}{2} + \sqrt{m \log n}\right] \le 1/n^2$, (ii) $\PP\left[\sum_{j' \in \items}v^2_{i,j'} \le \frac{m}{3} - \sqrt{m \log n}\right] \le 1/n^2$, (iii) $\PP\left[\sum_{j' \in \items}v^2_{i,j'} \ge \frac{m}{3} + \sqrt{m \log n}\right] \le 1/n^2$, as well as (iv) $\PP\left[\sum_{j' \in \items}v^3_{i,j'} \ge \frac{m}{4} + \sqrt{m \log n}\right] \le 1/n^2$.
     Also from Hoeffding's bound we have that for any $i' \in \agents \setminus \{i\}$:$\PP\left[\sum_{j' \in \items}v_{i',j'} \le m/2 - \sqrt{m \log n}\right] \le 1/n^2$ and $\PP\left[\sum_{j' \in \items} v_{i',j} \cdot v^2_{i,j'} \ge m/6 + \sqrt{m \log n}\right] \le 1/n^2$.
    
     None of these events occurs with high probability; conditioning on this, we have:
     \begin{align*}
         &D_{\chi^2}(v_i || S) \ge \langle \normv_i, 2m\cdot \normv_i \rangle - \frac{\langle S, (2m\cdot \normv_i)^2 \rangle}{4}-1 \\
         & = \sum_{j \in \items} 2m \cdot (\normv_{i,j})^2 -  \sum_{j \in \items}m^2 \cdot (S_j (\normv_{i,j}))^2 -1\\
         &= 2m \sum_{j \in \items} \frac{v^2_{i,j}}{\left(\sum_{j' \in \items}v_{i,j'}\right)^2 } - \frac{m^2}{n} \sum_{j \in \items} \sum_{i' \in \agents} \normv_{i',j} \cdot (\normv_{i,j})^2 - 1\\
         &=2m \sum_{j \in \items} \frac{v^2_{i,j}}{\left(\sum_{j' \in \items}v_{i,j'}\right)^2} - \frac{m^2}{n} \sum_{j \in \items} \sum_{i' \in \agents\setminus\{i\}} \frac{v_{i',j}}{\sum_{j' \in \items}v_{i',j'}} \cdot \frac{v^2_{i,j}}{\left(\sum_{j' \in \items}v_{i,j'}\right)^2} - \frac{m^2}{n} \sum_{j\in \items}\frac{v^3_{i,j}}{\left(\sum_{j' \in \items}v_{i,j'}\right)^3} - 1\\
         &\ge \frac{2m}{\left(m/2 + \sqrt{m \log n}\right)^2}\sum_{j \in \items} v^2_{i,j} -\frac{m^2}{n\left(m/2 - \sqrt{m \log n}\right)^3} \sum_{j \in \items}\sum_{i' \in \agents\setminus\{i\}} v_{i',j} v^2_{i,j} \\
         &\hspace{200pt}- \frac{m^2}{n\left(m/2 - \sqrt{m \log n}\right)^2}\sum_{j \in m} v^3_{i,j} -1\\
         &\ge \frac{2m \left(m/3 - \sqrt{m \log n} \right)}{\left(m/2 + \sqrt{m \log n}\right)^2} - \frac{m^2 (n-1)\left(m/6 + \sqrt{m \log n } \right)}{n\left(m/2 - \sqrt{m \log n}\right)^3} - \frac{m^2\left(m/4 + \sqrt{m \log n}\right)}{n\left(m/2 - \sqrt{m \log n}\right)^3} - 1\\
         &= \frac{8}{3}\frac{1 - 3\sqrt{\frac{\log n}{m}}}{\left(1 + 2\sqrt{\frac{\log n}{m}}\right)^2} - \frac{4(n-1)}{3n}\frac{1 + 6\sqrt{\frac{\log n}{m} }}{\left(1 - 2\sqrt{\frac{\log n}{m}}\right)^3} - \frac{2}{n}\frac{1 + 4\sqrt{\frac{\log n}{m}}}{\left(1 - 2\sqrt{\frac{\log n}{m}}\right)^3} - 1.
     \end{align*}

     Let $\delta = \sqrt{\frac{\log n}{m}}$.
     Then, we have that 
     \begin{align*}
         D_{\chi^2}(v_i || S) &\ge \frac{8}{3}\frac{1 - 3\delta}{\left(1 + 2\delta\right)^2} - \frac{4(n-1)}{3n}\frac{1 + 6\delta }{\left(1 - 2\delta\right)^3} - \frac{2}{n}\frac{1 + 4\delta}{\left(1 - 2\delta\right)^3} - 1 \\
         &\geq \frac{8}{3}\frac{1 - 3\delta}{\left(1 + 2\delta\right)^2} - \frac{1 + 6\delta}{\left(1 - 2\delta\right)^3} \left( \frac{4(n-1)}{3n} + \frac{2}{n} \right) - 1 \\
         &= \frac{8}{3}\frac{1 - 3\delta}{\left(1 + 2\delta\right)^2} - \frac{1 + 6\delta}{\left(1 - 2\delta\right)^3} \left( \frac{4n+2}{3n} \right) - 1
     \end{align*}

    For $m \geq (11 z - 11)^2 \log(n)$ we have that $\frac{1 - 3\delta}{\left(1 + 2\delta\right)^2} \geq \frac{z-1}{z}$ ($z \geq 2$), and for $m \geq (12 z + 6)^2 \log(n)$ we have that $\frac{1 + 6\delta}{\left(1 - 2\delta\right)^3} \leq \frac{z+1}{z}$. Therefore, for $m \geq \max\{ (11 z - 11)^2, (12 z + 6)^2 \}  \log(n)$, we have that
    \begin{align*}
        D_{\chi^2}(v_i || S) &\ge \frac{8}{3} \frac{z-1}{z} - \frac{z+1}{z} \left( \frac{4n+2}{3n} \right) - 1 \\
        &= \frac{1}{z} \frac{8(z-1)n - 4n(z+1) - 2(z+1) - 3nz }{3n} \\
        &= \frac{1}{z} \frac{n (z-12) - 2(z+1)}{3n}.
    \end{align*}
     By picking $z = 13$, we have that, for $m \geq 26244 \log(n)$,  $D_{\chi^2}(v_i || S) \geq \frac{n-28}{39n}$.
\end{proof}

\begin{proof}[Proof of~\Cref{lemma: disutility under TP}]
\begin{align*}
    \frac{c_i(x_i)}{c_i(\items)} &= \frac{1}{c_i(\items)} \sum_{j \in \items } c_{i,j} \frac{1/\normc_{i,j}}{\sum_{i' \in \agents} 1/\normc_{i',j}} \\
    &= \sum_{j \in \items }\frac{1}{\sum_{i' \in \agents} 1/\normc_{i',j}} \\
    &= \frac{1}{n}\sum_{j \in \items}\frac{n}{\sum_{i' \in \agents}1/\normc_{i', j}} \\
    &=  \frac{1}{n}\sum_{j \in \items} H_j \tag{$H_j = \frac{n}{\sum_{i' \in \agents}\frac{1}{\normc_{i', j}}}$}\\
    &= \frac{1}{n}\left( 1-1+ \sum_{j \in \items}(2H_j - H_j) 
    \right) \\
    &= \frac{1}{n}\left( 1-\frac{1}{n} \sum_{\ell \in \agents} \sum_{j \in \items} \normc_{\ell, j}+ \frac{1}{n} \sum_{\ell \in \agents} \sum_{j \in \items}2H_j -  \frac{1}{n}\sum _{j \in \items} H^2_j \frac{n}{H_j} \right)  \tag{$\sum_{j \in \items} \normc_{i,j} = 1$}\\
    &=  \frac{1}{n}\left( 1-\frac{1}{n} \sum_{\ell \in \agents} \sum_{j \in \items} \normc_{\ell, j}+ \frac{1}{n} \sum_{\ell \in \agents} \sum_{j \in \items}2H_j -  \frac{1}{n}\sum _{j \in \items} H^2_j \sum_{\ell \in \agents} \frac{1}{\normc_{\ell, j}}\right)  \tag{$H_j = \frac{n}{\sum_{\ell \in \agents}\frac{1}{\normc_{\ell, j}}}$}\\
    &= \frac{1}{n}\left( 1-\frac{1}{n} \sum_{\ell \in \agents} \sum_{j \in \items} \frac{(\normc_{\ell, j})^2 - 2H_j\normc_{\ell, j} + H_j^2}{\normc_{\ell, j}}\right) \\
    &= \frac{1}{n}\left( 1-\frac{1}{n} \sum_{\ell \in \agents} \sum_{j \in \items} \frac{ \left( \normc_{\ell, j} - H_j \right)^2}{\normc_{\ell, j}}\right).
\end{align*}

\end{proof}

\begin{proof}[Proof of~\Cref{lemma: chores variational charachterization}]
    \begin{align*}
        \sum_{j \in \items} \frac{(\normc_{i, j} - H_j)^2}{\normc_{i, j}} &= \sum_{j \in \items} \frac{H_j^2}{\normc_{i,j}} - 2H_j+ \normc_{i,j} \\
        &= \sum_{j \in \items} \normc_{i,j} \left( \left(\frac{H_j}{\normc_{i,j}} \right)^2 -2\frac{H_j}{\normc_{i,j}} +1 \right)
    \end{align*}

    Now consider the function $f(t) = t^2-2t+1$ and its Legendre-Fenchel transformation $f^*(s)$:
    \begin{align*}
        f^*(s) &= \sup_{t \in \mathbb{R}} st - f(t) = \sup_{t \in \mathbb{R}} st - t^2+2t-1 = \frac{s^2}{4}+s,
    \end{align*}
    where the last equality is derived by taking the first and second derivatives and observing that $st - t^2+2t-1$ is maximized for $t = s/2+1$. Thus $f(t) \ge st - f^*(s) = s(t-1) - \frac{s^2}{4}$. Setting $t = \frac{H_j}{\normc_{i,j}}$ and $s = g_j$ we have
    \begin{align*}
        f\left( \frac{H_j}{\normc_{i,j}}\right) &= \left(\frac{H_j}{\normc_{i,j}} \right)^2 -2\frac{H_j}{\normc_{i,j}} +1 \ge g_j\left(\frac{H_j}{\normc_{i,j}} -1\right) - \frac{1}{4} g^2_j.
    \end{align*}
    Combining all of the above we have that:
    \begin{align*}
        \sum_{j \in \items} \frac{(\normc_{i, j} - H_j)^2}{\normc_{i, j}} &= \sum_{j \in \items} \normc_{i,j} \left( \left(\frac{H_j}{\normc_{i,j}} \right)^2 - 2\frac{H_j}{\normc_{i,j}} +1 \right) \\
        &= \sum_{j \in \items} \normc_{i,j}f\left(\frac{H_j}{\normc_{i,j}} \right) \\
        &\ge \sum_{j \in \items} \normc_{i,j} \left( g_j\left(\frac{H_j}{\normc_{i,j}}-1\right) - \frac{1}{4} g^2_j \right) \\
        &=  \sum_{j \in \items}  g_j(H_{j}-\normc_{i,j}) + \frac{1}{4} \sum_{j \in \items} \normc_{i,j} g_j^2 \\
        &= \langle H-\normc_i, g \rangle - \frac{\langle \normc_i, g^2 \rangle}{4}.
    \end{align*}
    Interestingly, this lower bound on $\sum_{j \in \items} \frac{(\normc_{i, j} - H_j)^2}{\normc_{i, j}}$ is tight; this can be verified that by choosing $g_j = 2\frac{H_j}{\normc_{i,j}}-2$.
\end{proof}

\begin{proof}[Proof of~\Cref{lemma: stochastic chores}]
Using standard Chernoff-Hoeffding bounds we have $$\PP\left[\sum_{j \in \items}c_{i,j} \le m/2 - \sqrt{m \log n}\right] \le 1/n^2,$$ for all $i\in \agents$. Therefore, with high probability, $\sum_{j \in \items \setminus \{j'\}} c_{i,j} \ge m/2 -1 - \sqrt{m \log n}$ for all $j' \in \items$. Now consider the random variable $G_j = \left( \prod_{i \in [n]} c_{i,j}\right)^{1/n}$. It holds that $0 \le G_j \le 1$, and $\EE[G_j] = \left(\frac{n}{n+1}\right)^{n}$. Therefore, by standard Chernoff-Hoeffding bounds, we have that $$\PP\left[\sum_{j \in \items}G_j \ge m\left( \frac{n}{n+1}\right)^n + \sqrt{m \log n}\right] \le 1/n^2$$ for all $i \in \agents$.
Union bounding, with probability $1-O(1/n)$, all of the above inequalities hold.

\Cref{lemma: chores variational charachterization} gives us that $\sum_{j \in \items} \frac{(\normc_{i, j} - H_j)^2}{\normc_{i, j}} \ge \langle H-\normc_i, g \rangle - \frac{\langle \normc_i, g^2 \rangle}{4}$. Choosing $g_{i,j} = -\frac{1}{2}$ we have:

\begin{align*}
    \sum_{j \in \items} \frac{(\normc_{i, j} - H_j)^2}{\normc_{i, j}} &\ge \langle H-\normc_i, g \rangle - \frac{\langle \normc_i, g^2 \rangle}{4} \\
    &= \sum_{j \in \items} \frac{1}{2}(\normc_{i,j} - H_j) - \frac{1}{4} \sum_{j \in \items} \frac{1}{4}\normc_{i,j} \\
    &= \frac{7}{16} \sum_{j \in \items} \normc_{i,j} - \frac{1}{2} \sum_{j \in \items} H_{j} \\
    &= \frac{7}{16} - \frac{1}{2} \sum_{j \in \items} H_{j} \\
    &\ge \frac{7}{16} - \frac{1}{2} \sum_{j \in \items} \left(\prod_{i \in [n]} \normc_{i,j} \right)^{1/n} \tag{HM-GM inequality}\\
    &= \frac{7}{16} - \frac{1}{2} \sum_{j \in \items} \left(\prod_{i \in [n]} \frac{c_{i,j}}{\sum_{j' \in \items} c_{i,j'}} \right)^{1/n} \\
    &\ge \frac{7}{16}  - \frac{1}{2} \sum_{j \in \items} \left(\prod_{i \in [n]} \frac{c_{i,j}}{\sum_{j' \in \items \setminus \{j\}} c_{i,j'}} \right)^{1/n} \tag{$0 \le c_{i,j} \le 1 $} \\
    &\ge \frac{7}{16}  - \frac{1}{2} \sum_{j \in \items} \left(\prod_{i \in [n]} \frac{c_{i,j}}{\frac{m}{2} - 1- \sqrt{m \log n}} \right)^{1/n} \tag{Hoeffding high-probability bound}\\
    &= \frac{7}{16}  - \frac{1}{m-2 - 2\sqrt{m \log n}} \sum_{j \in \items} \left(\prod_{i \in [n]} c_{i,j} \right)^{1/n}\\ 
    &\ge \frac{7}{16}  - \frac{m\left(\frac{n}{n+1}\right)^n + \sqrt{m \log n}}{m-2 - 2\sqrt{m \log n}} \tag{Hoeffding high-probability bound} \\
    &= \frac{7}{16} - \frac{\left(\frac{n}{n+1}\right)^n + \sqrt{ \frac{\log n}{m}}}{1-\frac{2}{m}- 2\sqrt{\frac{\log n}{m}}}\\
    &\geq \frac{7}{16} - \frac{\left(\frac{n}{n+1}\right)^n + \sqrt{ \frac{\log n}{m}}}{1-3\sqrt{\frac{\log n}{m}}} \\
    &\geq \frac{7}{16} - \frac{\frac{3}{8} + \sqrt{ \frac{\log n}{m}}}{1-3\sqrt{\frac{\log n}{m}}} \tag{$\left(\frac{n}{n+1}\right)^n$ \text{is decreasing}}\\
    &\geq \frac{1}{16}.
\end{align*}
\end{proof}

\section{Allocating Goods to Similar Agents}\label{sec: similar}

So far, we have considered the situation where agents' valuation vectors are dissimilar and proved that fair allocations exist. It is well understood that when agents are identical, EFX allocations exist. As a proof of concept, in this section we prove that when agents are similar --- but not identical --- EFX allocations of goods exist, complementing our previous results.

We consider the fairness notion of \emph{tEFX}, or ``transfer EFX''. An integral allocation $A$ is tEFX if, for all $i, i' \in \agents$, for all goods $j \in A_{i'}$, it holds that $v_i(A_i) + v_{i,j} \geq v_i(A_{i'}) - v_{i,j}$. Note that this notion is stronger than (i.e., it implies) EF2x, which demands that the removal of any two distinct goods eliminates envy, and was recently shown to exist for four additive agents~\cite{ashuri2025ef2x}. Here, we prove that tEFX allocations exist for any number of agents that are ``pairwise similar.'' Formally, we require a bound on the TV distance between $\normv_i$ and $\normv_{i'}$ for any pair of agents $i, i' \in \agents$.


We use the following greedy algorithm. \Cref{algo: greedy} allocates the item that is the least desirable to agent $1$ to the agent with the smallest utility from the point of view of agent $1$; note that if $v_{1}(S) \geq v_{1}(S')$, then $\normv_{1}(S) \geq \normv_{1}(S')$, for any two bundles $S$ and $S'$. This algorithm produces an EFX allocation for identical agents. Here we prove that it also produces a tEFX allocation for non-identical, but ``similar,'' agents.

\begin{algorithm}[h]
\caption{Greedy Allocation}\label{algo: greedy}
\begin{algorithmic}
\STATE Initialize $A = (A_1, \dots, A_n)$, with $A_i = \emptyset$.
\STATE Rename the items by decreasing order in terms of $v_{1,j}$ (i.e., $v_{1,1} \leq v_{1,2} \leq  \dots$).
\FOR{each item $j$ in the sorted order}
    \STATE $i \gets \arg\min_{i' \in \agents} v_1(A_{i'})$
    \STATE $A_i \gets A_i \cup \{j\}$
\ENDFOR
\end{algorithmic}
\end{algorithm}

\begin{theorem}\label{thm: tefx}
    Let $D^* = \max_{i, i' \in \agents} D_{TV}(\normv_i || \normv_{i'})$. If $\normv_{min} \geq 8 D^*$, then a tEFX allocation exists.
\end{theorem}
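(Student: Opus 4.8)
The plan is to show that \Cref{algo: greedy} itself always returns a tEFX allocation under the stated hypothesis. Since the tEFX condition for a fixed agent $i$ is invariant under positive scaling of $v_i$, I would argue entirely in terms of the $\ell_1$-normalized valuations $\normv_i$. Concretely, for a pair $i,i'$ with $A_{i'}\neq\emptyset$ (the empty case being vacuous, as there is no good to transfer), the tEFX requirement ``$v_i(A_i)+v_{i,j}\ge v_i(A_{i'})-v_{i,j}$ for all $j\in A_{i'}$'' is equivalent to $\normv_i(A_i)\ge\normv_i(A_{i'})-2\normv_{i,g^*}$, where $g^*=\argmin_{g\in A_{i'}}\normv_{i,g}$ is the least valuable good of $A_{i'}$ from $i$'s point of view; establishing this one inequality for $g^*$ yields it for every good in $A_{i'}$.

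First I would record the guarantee the greedy rule gives \emph{with respect to agent $1$}. Because items are processed in non-increasing order of $v_1$ and each item is handed to an agent currently minimizing $v_1(\cdot)$, the last good $g$ placed in $A_{i'}$ is a $v_1$-minimizer of $A_{i'}$ (later-added goods have weakly smaller $v_1$-value), and at the moment $g$ was assigned, $A_{i'}\setminus\{g\}$ had the smallest $v_1$-value among all bundles. Since bundles only grow, this gives the standard EFX inequality for the reference agent, $\normv_1(A_i)\ge\normv_1(A_{i'})-\normv_{1,g}$, together with $\normv_{1,g}\le\normv_{1,g'}$ for every $g'\in A_{i'}$.

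The transfer from agent $1$ to a general agent $i$ would go through the TV distance. For any bundle $S$ one has $|\normv_i(S)-\normv_1(S)|\le\sum_j|\normv_{i,j}-\normv_{1,j}|=2D_{TV}(\normv_i\|\normv_1)\le 2D^*$, and the same bound holds term by term, $|\normv_{1,j}-\normv_{i,j}|\le 2D^*$. Chaining these with the EFX inequality, and using that $g$ is a $v_1$-minimizer so that $\normv_{1,g}\le\normv_{1,g^*}\le\normv_{i,g^*}+2D^*$, gives
\begin{align*}
\normv_i(A_i) &\ge \normv_1(A_i)-2D^* \ge \normv_1(A_{i'})-\normv_{1,g}-2D^*\\
&\ge \normv_i(A_{i'})-\normv_{1,g^*}-4D^* \ge \normv_i(A_{i'})-\normv_{i,g^*}-6D^*.
\end{align*}
Finally, the hypothesis $\normv_{min}\ge 8D^*$ yields $\normv_{i,g^*}\ge\normv_{min}\ge 8D^*\ge 6D^*$, hence $\normv_i(A_i)\ge\normv_i(A_{i'})-2\normv_{i,g^*}$, which is precisely tEFX.

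The hard part is not any single computation but placing the greedy structure correctly: the EFX guarantee lives only in agent $1$'s ordering, so the crux is that the last-added good $g$ is simultaneously a $v_1$-minimizer of $A_{i'}$ (which powers the EFX bound) and, after paying the $2D^*$ per-item distortion, no larger than $i$'s own least-valued good $g^*$. The similarity hypothesis enters exactly to absorb the additive $6D^*$ slack into the $2\normv_{i,g^*}$ budget available for the transfer; the factor $8$ in $\normv_{min}\ge 8D^*$ is comfortably more than the $6D^*$ the argument above actually consumes, so no tightness issues arise.
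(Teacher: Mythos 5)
Your proof is correct and follows essentially the same route as the paper's: run the greedy algorithm, invoke the EFX guarantee it provides with respect to agent $1$'s ordering (the last good added to the envied bundle is a $v_1$-minimizer and witnesses $\normv_1(A_i)\ge\normv_1(A_{i'})-\normv_{1,g}$), and transfer to the other agents by paying $2D^*$ per bundle and per item via the TV distance. The only differences are organizational — you handle both envy directions with a single uniform inequality chain where the paper splits into two cases, and your accounting shows that $\normv_{min}\ge 6D^*$ already suffices, slightly sharpening the constant $8D^*$ in the statement.
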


\begin{proof}
Let $A$ be the allocation of \Cref{algo: greedy}, and let $i, i' \in \agents$ be any pair of agents such that $\normv_1(A_i) \ge \normv_1(A_{i'})$.  By the definition of TV, $|\normv_1(A_i)-\normv_i(A_i)| \le 2 D_{TV}( \normv_1 || \normv_i) \le 2 D^*$, and $|\normv_1(A_{i'})-\normv_i(A_{i'})| \le 2 D^*$. Combining the inequalities, we have that $\normv_i(A_{i'}) - \normv_i(A_i) = \normv_i(A_{i'}) + \left( \normv_1 (A_{i'}) - \normv_1 (A_{i'})  \right) + \left( - \normv_1 (A_{i}) + \normv_1 (A_{i})\right)  - \normv_i(A_i) \le 4 D^* \leq \frac{\normv_{min}}{2}$; therefore, since envy is scale-free, the removal of any item from $A_{i'}$ eliminates $i$'s envy.

Next, let $j$ be the last item $i$ received in~\Cref{algo: greedy}; then, it must be that $\normv_1(A_i) - \normv_{1,j} \le \normv_1(A_{i'})$. By the definition of TV, we have $|\normv_1(A_i)-\normv_{i'}(A_i)| \le 2D^*$ and $|\normv_1(A_{i'})-\normv_{i'}(A_{i'})| \le 2D^*$. Combining, we have that $\normv_{i'}(A_{i}) - \normv_{i'}(A_{i'}) \le 4 D^* + \normv_{1,j}$. Let $j^* \in argmin_{\ell \in A_{i}} \normv_{i',\ell}$. By the definition of TV, we have $| \normv_{i',j^*} - \normv_{1,j^*} | \leq 2D^*$ and $| \normv_{i',j} - \normv_{1,j} | \leq 2D^*$. Since $\normv_{i',j^*} \leq \normv_{i',j}$ ($j \in A_i$ and $j^* \in argmin_{\ell \in A_{i}} \normv_{i',\ell}$), and $\normv_{1,j} \leq \normv_{1, j^*}$ ($j^*$ was included before $j$ in the bundle $A_i$), we have that $\normv_{1,j} \leq 4D^* + \normv_{i',j^*}$.
Combining, we have that $\normv_{i'}(A_{i}) - \normv_{i'}(A_{i'}) \le 8 D^* + \normv_{i',j^*} \leq \normv_{min} + \normv_{i',j^*}$. Therefore, transferring any item from $A_i$ to $A_{i'}$ eliminates the envy of $i'$.
\end{proof}

\end{document}